\documentclass[a4paper,11pt]{article}

\makeatletter
\@namedef{ver@etex.sty}{2016/08/01 v2.7 eTeX basic definition package}
\makeatother

\usepackage[utf8]{inputenc}
\usepackage[T1]{fontenc}
\usepackage[margin=1in]{geometry}
\usepackage{microtype}
\usepackage{natbib}

\usepackage{amsthm}

\theoremstyle{plain}
\newtheorem{theorem}{Theorem}
\newtheorem{proposition}{Proposition}
\newtheorem{lemma}{Lemma}

\newtheorem{observation}{Observation}

\theoremstyle{definition}

\newtheorem{definition}{Definition}

\usepackage[dvipsnames]{xcolor}

\definecolor{LinkBlue}{HTML}{1F4E79}
\definecolor{CiteBlue}{HTML}{2F6FA3}
\definecolor{URLBrick}{HTML}{8A3B12}
\usepackage[colorlinks=true]{hyperref}
\hypersetup{
    linkcolor=LinkBlue,
    citecolor=CiteBlue,
    urlcolor=URLBrick,
    filecolor=LinkBlue,
    pdfborder={0 0 0},
    breaklinks=true,
    bookmarksopen=true,
    bookmarksnumbered=true,
    hypertexnames=false,
    pdftitle={Efficiently Restructuring Sovereign Debt via Arctic Auctions with Convex Costs},
    pdfauthor={Jugal Garg, Edwin Lock, Vijay V. Vazirani},
    pdfkeywords={Arctic auction, supply costs, competitive equilibrium, primal-dual algorithms, sovereign debt restructuring},
}

\usepackage{amsmath, amssymb, mathtools}
\usepackage{enumerate}
\usepackage{bm}
\usepackage[capitalize]{cleveref}
\crefname{observation}{Observation}{Observations}
\Crefname{observation}{Observation}{Observations}

\usepackage{tikz}
\usetikzlibrary{shapes,patterns,calc,spy}
\usepackage{tikz-network}
\usepackage{pgfplots}
\pgfplotsset{compat=1.18}
\usetikzlibrary{arrows,automata,math,backgrounds,decorations.pathreplacing,decorations.markings,patterns,arrows.meta,positioning}
\tikzset{agent/.style={draw, circle, inner sep=5, outer sep=5}}
\tikzset{trade/.style={draw, thick, -Latex}}

\usepackage{algorithm}
\usepackage{algpseudocodex}

\makeatletter
\newcommand{\newalgblock}[4][0]{%
  \ifnum#1=0
    \algdef{SE}[#2]{#3}{End#3}{\algpx@startCodeCommand\algpx@startIndent#4}{}%
  \else\ifnum#1=1
    \algdef{SE}[#2]{#3}{End#3}[1]{\algpx@startCodeCommand\algpx@startIndent#4}{}%
  \else
    \algdef{SE}[#2]{#3}{End#3}[2]{\algpx@startCodeCommand\algpx@startIndent#4}{}%
  \fi\fi
  \expandafter\pretocmd\csname #3\endcsname{\algpx@endCodeCommand}{}{}%
  \expandafter\pretocmd\csname End#3\endcsname{\algpx@endCodeCommand[0]}{}{}%
  \algtext*{End#3}%
}
\makeatother

\newalgblock{OuterLoop}{OuterLoop}{\textbf{Outer Loop}}
\newalgblock{InnerLoop}{InnerLoop}{\textbf{Inner Loop}}
\newalgblock[2]{Event}{Event}{\textbf{Event #1:} #2}

\renewcommand{\vec}[1]{\bm{#1}}
\newcommand{\R}{\mathbb{R}}
\newcommand{\Q}{\mathbb{Q}}
\DeclareMathOperator{\conv}{\textnormal{conv}}
\DeclareMathOperator*{\argmax}{\textnormal{arg\,max}}
\newcommand{\eb}{\vec{e}}
\newcommand{\pb}{\vec{p}}
\newcommand{\qb}{\vec{q}}
\newcommand{\ub}{\vec{u}}
\newcommand{\xb}{\vec{x}}
\newcommand{\yb}{\vec{y}}
\newcommand{\zb}{\vec{z}}
\newcommand{\bb}{\vec{b}}
\newcommand{\mb}{\vec{m}}
\newcommand{\tb}{\vec{t}}
\newcommand{\db}{\vec{d}}
\newcommand{\capac}{\mathrm{cap}}
\newcommand{\seg}{\mathrm{seg}}

\allowdisplaybreaks

\title{Efficiently Restructuring Sovereign Debt via Arctic Auctions with Convex Costs}
\author{Jugal Garg\thanks{University of Illinois Urbana-Champaign. Supported by NSF Grant CCF-2334461. \texttt{jugal@illinois.edu}} \and Edwin Lock\thanks{King's College London. \texttt{edwin.lock@kcl.ac.uk}} \and Vijay V. Vazirani\thanks{University of California, Irvine. \texttt{vazirani@ics.uci.edu}}}
\date{}

\begin{document}

\maketitle

\begin{abstract}
We study the problem of computing competitive equilibria in the Arctic product-mix auction, originally developed for the Icelandic government for exchanging blocked financial accounts, and more recently proposed by IMF staff for sovereign debt restructuring. From the buyers' perspective, the Arctic auction is equivalent to the quasi-linear Fisher market. However, unlike the standard Fisher model, the seller can express rich supply preferences through explicit supply-side costs and constraints. Despite extensive algorithmic literature on Fisher markets, the seller side has not received much attention, and no polynomial-time algorithm was previously known for computing competitive equilibrium when sellers face nontrivial costs.

We examine the natural and expressive regime of separable, stepwise-increasing marginal costs that underlie the above-stated applications. Using polyhedral theory techniques, we first show that rational inputs lead to rational-valued competitive equilibria. Motivated by this result, we develop the first polynomial-time algorithm for this setting based on a non-trivial extension of classic primal-dual balanced-flow techniques for linear Fisher markets. Our work provides a robust computational foundation for auctions with sophisticated preferences, paving the way for flexible and institutionally feasible market designs in global finance.
\end{abstract}

\section{Introduction}

The Arctic Product-Mix Auction was originally designed by Paul Klemperer for the Icelandic Government to exchange blocked accounts for other financial assets, e.g., cash or bonds.%
\footnote{For Klemperer's original description of the Arctic auction, see \citet{Klemperer2}. The Arctic auction belongs to the family of Product-Mix Auctions, originally designed for the Bank of England to provide liquidity to financial institutions following the UK financial crisis of 2008-9.}
More recently, IMF staff have proposed using the Arctic auction for the purpose of sovereign debt restructuring \citep{willems2021auction}. In this mechanism, creditors bid to exchange their claims for alternative debt instruments. Importantly, the auctioneer (sovereign), and possibly outside sellers offering additional funds, can specify supply costs which govern the quantity of goods they are willing to produce at given market prices. Both the Icelandic Government and the sovereign debt restructuring mechanism pursued the approach of computing competitive equilibria for (many) different supply schedules, and then selecting the most desirable equilibrium among these.\footnote{The precise criteria for choosing the optimal competitive equilibrium are confidential.} For this approach to be practicable, efficient algorithms for computing equilibria are necessary.

The most useful market model that underlies these applications is the Arctic auction in which the auctioneer has a stepwise increasing marginal cost curve for each good. The objective is to find equilibrium prices and allocations: each buyer should obtain a utility-maximizing bundle of goods at current prices and the auctioneer should
maximize his profit. This paper develops the first efficient algorithm for computing such a competitive equilibrium, in the general regime where one or more sellers have separable convex costs for each good.

From the buyer's perspective, the Arctic auction is equivalent to a variant of the linear Fisher market in which buyers have quasi-linear utilities: the preferences expressed by buyers in the Arctic auction can be interpreted as the aggregate demand of one or more bidders in such a Fisher market. The linear Fisher market occupies a special place in algorithmic game theory not only because of its widespread applications (see Section \ref{sec:related}), but also because of its strong algorithmic properties and its connections to convex optimization. For this market model, \citet{DPSV} gave an exact polynomial-time algorithm, henceforth referred to as {\em the DPSV algorithm}. The primal--dual balanced-flow techniques from this classic algorithm will also play a key role in our work. Equilibrium allocations and prices for this model are captured by the celebrated Eisenberg-Gale convex program [\citeyear{eisenberg}]. However, this program was not sufficient to derive an efficient algorithm for computing an \emph{exact equilibrium}---that needed an algorithm for Diophantine approximation, which was given later in \cite{Jain2007polynomial}. Later, \citet{Orlin-Fisher} provided the first strongly polynomial algorithm for computing equilibria of the linear Fisher market.

While buyer preferences have been extensively studied, the \emph{seller side} of Fisher markets remains comparatively underexplored. For Arctic-auction applications in central banking and sovereign debt restructuring, however, seller costs are essential. Separable, stepwise increasing marginal costs are both practically natural and theoretically expressive: sufficiently fine steps approximate arbitrary separable convex costs. The only prior efficient algorithmic treatment of seller costs that we are aware of handles the simpler case of constant marginal costs \citep{arctic_markets_production}. In many real-world applications, supply costs---such as soft limits on the quantity of bond types within a fiscal year---arise naturally and are an essential feature of the market design. Sellers such as governments and central banks also typically have greater institutional capacity to formulate complex, constraint-rich preferences over allocations, further motivating the study of auction designs in which sellers can express rich preferences.

Due to these applications, the Arctic auction has begun to receive algorithmic attention in recent years. \citet{Fichtl-computing} analyzed an algorithm proposed by Klemperer and DotEcon that solves the Arctic auction with stepwise increasing separable marginal costs by computing equilibria via explicit enumeration, which does not scale well with the number of goods.
\citet{baldwin2024implementing} develop reductions from Arctic auctions with stepwise increasing marginal costs to auctions without costs, and to linear Fisher markets.
\citet{arctic_markets_production} showed that equilibrium computation in this setting reduces to a \emph{rational convex program} \cite{Va.rational}, revealing deep connections to convex optimization, and obtained a polynomial-time algorithm for the Arctic auction without costs by modifying the DPSV algorithm. Most recently, \citet{garg2026stronglypolynomialalgorithmarctic} built on the work of \citet{Orlin-Fisher} to develop a strongly polynomial algorithm for the costless setting.

This paper develops the first polynomial-time algorithm for computing competitive equilibrium in the Arctic auction \emph{with seller costs}. We study the regime in which the auctioneer (or multiple sellers) have independent, convex costs for each good that is represented as a stepwise increasing marginal cost curve. Thus, beyond its theoretical guarantees, this algorithm immediately enables a richer, implementable Arctic-type auction in applications such as the IMF-proposed sovereign debt restructuring and central bank operations, where sophisticated seller preferences are both natural and institutionally feasible.

Our algorithm requires several new ideas beyond those used for the linear Fisher market, and making these ideas work together constitutes the main technical challenge of the paper. In the classical setting, the supply of each good is fixed and buyers are endowed with fixed budgets; equilibrium computation therefore proceeds by adjusting prices alone. In contrast, in our setting the supply of each good must be endogenously adjusted by the algorithm in response to prices and demand, and buyers have quasi-linear preferences, so their effective spending is not fixed in advance but depends on prices and allocations. As a result, unlike the DPSV algorithm, which updates only prices, our algorithm must jointly update prices, supplies, and buyers' remaining budgets.

The main difficulty is not only that supplies change, but that the algorithm must reason about two notions of supply. During the primal-dual process, it maintains an upper bound on the amount of each good that may be sold at the current price. However, at final prices, seller optimality also imposes lower bounds: at a marginal-cost breakpoint, the seller is indifferent over an interval of quantities, while in the interior of a segment, the seller requires the full segment quantity to be sold. Thus, the usual DPSV min-cut invariant cannot be applied.

Our solution separates the fixed-supply part of the DPSV analysis from the new seller-side bookkeeping. Within an iteration of the outer loop, the algorithm scales prices while holding actual supply fixed, so the balanced-flow and surplus-reduction arguments from the fixed-supply Arctic auction can be adapted locally. Globally, we maintain a hybrid min-cut invariant on an auxiliary network that uses actual supply on the goods currently being repriced and profitable supply on the rest. Splitting the network this way is what lets the local balanced-flow analysis run while the seller's global requirements still hold. At termination, a lower-bounded flow chooses quantities between these upper and lower supplies, reconciling buyer clearing with seller profit maximization.

Finally, because marginal costs increase in discrete steps, the algorithm must adjust prices continuously without skipping over an equilibrium price. We exploit the structure of the cost functions: at the \emph{kinks} between consecutive cost segments, prices can be increased continuously while holding supply fixed. Within these fixed-supply intervals the DPSV-style balanced-flow machinery can be adapted, while the hybrid invariant handles transitions between intervals.

We next turn to the applications.

\paragraph{The Icelandic Arctic Auction.}
The Icelandic government explored the use of the Arctic auction as part of its strategy to exit capital controls following the financial crisis. In mid-2015, Iceland proposed holding an auction in which holders of previously blocked ``offshore'' krona accounts (foreign-owned ISK positions) could exchange their trapped krona balances for alternative financial assets such as cash or various government bonds in ISK or euros. The Arctic auction was designed in order to allow account holders to express their preferences over these alternative asset bundles in a single mechanism, with prices emerging endogenously, thereby efficiently clearing the blocked positions and facilitating capital market normalization. The Icelandic government expressed the desire to adjust the quantities and composition of assets supplied in response to participants' bids, subject to its own supply preferences and costs. Moreover, it wished to retain the option of exerting monopsonistic power, and so did not limit itself to solving the auction for competitive equilibrium. However, after political turmoil in 2016 triggered by the Panama Papers scandal and changes in government, the auction was dropped and most offshore accounts were eventually unblocked through alternative measures. 

\paragraph{The IMF Arctic Auction.}
The Arctic auction proposed by IMF staff for sovereign debt restructuring was motivated by long-standing failures in debt renegotiation, particularly the slow, costly, and conflict-prone negotiations that arise when a distressed country must bargain with many heterogeneous creditors. Traditional approaches (whether ad hoc negotiations, exchange offers, or legal mechanisms such as collective action clauses) often suffer from holdout problems, poor information about creditor preferences, and inefficient debt outcomes. Purely market-based solutions and formal statutory mechanisms have, so far, not succeeded in producing restructurings that are both timely and efficient, especially when creditors differ in risk tolerance, maturity preferences, and currency exposure.

To address these problems, the IMF-facilitated product-mix auction elicits creditors' preferences over a menu of restructured debt instruments (which differ, for example, in maturity, currency, or seniority) while also allowing the sovereign to specify the supply costs or trade-offs associated with issuing different instruments. By making these costs explicit, the auction ensures that creditor demand is matched against the sovereign's constraints and policy objectives, rather than being treated as exogenous. Letting prices and quantities be determined endogenously then allows the post-restructuring debt profile to reflect both creditor preferences and the sovereign's relative cost of supplying different claims, subject to an overall debt-relief constraint set by the IMF program. This approach reduces bargaining frictions, limits strategic holdouts, speeds up restructurings, and produces outcomes that are both more efficient and more politically feasible than negotiated exchanges—while preserving the IMF's role as a neutral coordinator rather than a direct counterparty.

\subsection{Related Work}
\label{sec:related}

The Product-Mix Auction by \citet{Klemperer2,Klemperer1,Klemperer-2008}, and its specialized implementation, the Arctic auction, have been widely studied in the context of central bank liquidity operations \cite{Fisher2011pricing}. Early algorithms for the Arctic auction \cite{Fichtl-computing} focused on very small instances. 

\citet{Baldwin2019understanding} studied the existence of equilibria in auctions and markets with indivisible goods, introducing the notion of \emph{demand types} to describe solutions when standard assumptions fail. Building on this, \citet{BKL2024} developed polynomial-time algorithms to compute competitive equilibrium prices and allocations for bidders with strong-substitutes valuations, leveraging submodular minimization for price determination and a novel constrained matching approach for supply allocation that directly uses the auction's explicit bidding language. Complementing these results, \citet{baldwin2024implementing} showed that the Product-Mix Auction's simple geometric bidding language uniquely captures all concave and strong-substitutes valuations, providing a new characterization of these preference classes and guaranteeing that the auction implements competitive equilibrium allocations.

In quasi-linear, budget-constrained multi-good markets, \cite{Lock-Finster2023competitive} shows that competitive equilibrium prices are simultaneously revenue- and constrained-welfare maximizing. While their work focuses on equilibrium structure and does not consider costs, our work develops a polynomial-time algorithmic framework for computing competitive equilibrium in the Arctic auction with stepwise-increasing separable costs. The flow-based techniques of \cite{DPSV} and its extensions \cite{Orlin-Fisher, Duan2015combinatorial, JV-Eisenberg, BeiGH16, VY-2025, Duan-GM} form a key algorithmic foundation for many of these results.

Our contribution builds on this literature by providing a polynomial-time algorithm for the Arctic auction with stepwise-increasing separable costs, integrating polyhedral theory techniques, flow-based methods, and quasi-linear preferences.

\subsection{Notation}

Vectors are written in bold. We write $[k]$ for the set $\{1, 2, \ldots, k\}$, and $\eb^i$ for the $i$th standard basis vector. We use $\Q$, $\Q_{\geq 0}$, and $\Q_{>0}$ to denote the rationals, the non-negative rationals, and the strictly positive rationals respectively, and analogously use $\R$, $\R_{\geq 0}$, and $\R_{>0}$ for the reals. We use the convention $1/\infty=0$.

\section{The Arctic Auction with Costs}
In the Arctic auction, bidders submit one or more \textit{Arctic bids} consisting of a budget and a value vector for the set $G$ of $n$ goods. Each such bid has the same demand as a buyer in the Fisher market with quasi-linear utilities, and the total demand of a bidder in the Arctic auction is then given by the (Minkowski) sum of the demands of her individual bids. Accordingly, throughout the paper we let $B$ denote the set of individual Arctic bids, and view these as buyers in the Fisher market. Let $m = |B|$. After computing an allocation in this market, the allocation to an original Arctic bidder is recovered by summing the allocations to her individual bids.

Formally, we assume that each buyer $i \in B$ has a value $u_{ij} \in \Q_{\geq 0}$ for good $j \in G$, and a budget $m_i \in \Q_{>0}$. Her utility $u_i(\xb_i, \pb)$ for bundle $\xb_i = (x_{ij})_{j \in G} \in \R^G_{\geq 0}$ of goods at prices $\pb \in \R^G_{>0}$ is
\[
    u_i(\xb_i, \pb) =
    \begin{cases}
        m_i + \sum_{j \in G} u_{ij} x_{ij} - \sum_{j \in G} p_{j} x_{ij}, & \text{if } \sum_{j \in G} p_j x_{ij} \leq m_i,\\
        -\infty, & \text{otherwise}.
    \end{cases}
\]
Thus the buyer always prefers a bundle that does not exceed her budget.

Throughout, we will restrict attention to strictly positive price vectors. In our cost setting, this will be ensured by the standing assumption that the first marginal cost of every good is strictly positive.

\begin{definition}
The \textit{bang per buck} of buyer $i$ for good $j$ at price $p_j$ is $\frac{u_{ij}}{p_j}$. The buyer's \textit{maximum bang per buck (mbpb)} at prices $\pb$ is
\[
    \alpha_i \coloneqq \max_{j \in G} \frac{u_{ij}}{p_j}.
\]
The goods that give the buyer a bang per buck of $\alpha_i$ are her \textit{maximum bang per buck goods} (or \textit{mbpb goods}) at $\pb$.
\end{definition}
Given fixed prices $\pb$, each buyer's utility $u_i(\xb_i, \pb)$ is linear in $\xb_i$ over the affordable region $\{ \xb_i \in \R^G_{\geq 0} \mid \sum_{j \in G} p_j x_{ij} \leq m_i \}$. Hence, any bundle $\xb_i$ which maximizes the utility of buyer $i$ at prices $\pb$ contains only mbpb goods, i.e., $x_{ij} > 0$ implies $\frac{u_{ij}}{p_j} = \alpha_i$. Moreover, the buyer may choose not to spend all her money precisely when $\alpha_i \leq 1$. For instance, if $u_{ij} < p_j$ for all goods $j \in G$, then it is immediate that $u_i(\xb_i, \pb)$ is maximized by the empty bundle $\bm{0}$. It follows that the buyer spends all her budget $m_i$ if $\alpha_i > 1$, a portion of her budget if $\alpha_i = 1$, and none of her budget if $\alpha_i < 1$.

For any buyer $i$ and prices $\pb$, define the set $M_i(\pb) \coloneqq \{j \in G \mid \frac{u_{ij}}{p_j} = \alpha_i \geq 1 \}$ containing the buyer's mbpb goods at $\pb$ if $\alpha_i \geq 1$ and the empty set $\emptyset$ otherwise.

\begin{lemma}[{\citep[Lemma 1]{Lock-Finster2023competitive}}]
\label{lemma:buyer-demand}
Fix a buyer $i \in B$ and prices $\pb$, and let $D^i(\pb)$ be the set of bundles the buyer demands at $\pb$.
\begin{enumerate}
    \item If $\alpha_i > 1$, then $D^i(\pb) = \conv \left \{ \frac{m_i}{p_j} \eb^j \mid j \in M_i(\pb) \right \}$.
    \item If $\alpha_i = 1$, then $D^i(\pb) = \conv \left( \{ \bm{0} \} \cup \left \{ \frac{m_i}{p_j} \eb^j \mid j \in M_i(\pb) \right \} \right )$.
    \item If $\alpha_i < 1$, then $D^i(\pb) = \{ \bm{0} \}$.
\end{enumerate}
\end{lemma}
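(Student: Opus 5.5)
The plan is to treat the buyer's problem at fixed prices $\pb$ as a linear program over the budget polytope $P_i \coloneqq \{\xb_i \in \R^G_{\geq 0} \mid \sum_{j} p_j x_{ij} \leq m_i\}$. Since $m_i$ is constant, maximizing $u_i(\xb_i,\pb)$ over $P_i$ is the same as maximizing $\sum_j (u_{ij}-p_j)x_{ij}$ over $P_i$. Because all prices are strictly positive, $P_i$ is a simplex with vertices $\bm{0}$ and $\frac{m_i}{p_j}\eb^j$ for $j \in G$. The set of maximizers of a linear function over a polytope is the face spanned by the vertices attaining the maximum. So $D^i(\pb)$ is the convex hull of the optimal vertices, and it remains only to identify which vertices are optimal.

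For this, evaluate the objective at each vertex. At $\bm{0}$ it is $0$. At $\frac{m_i}{p_j}\eb^j$ it is $\frac{m_i}{p_j}(u_{ij}-p_j) = m_i\left(\frac{u_{ij}}{p_j}-1\right)$. As $m_i>0$, comparing vertex values amounts to comparing the numbers $\frac{u_{ij}}{p_j}-1$ with each other and with $0$. The best value over the vertices $\frac{m_i}{p_j}\eb^j$ is therefore $m_i(\alpha_i-1)$, attained exactly at the mbpb goods.

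The three cases then follow directly.
\begin{enumerate}
    \item If $\alpha_i>1$, the maximum is $m_i(\alpha_i-1)>0$, so $\bm{0}$ is not optimal. The optimal vertices are $\frac{m_i}{p_j}\eb^j$ for the mbpb goods $j$, which form exactly $M_i(\pb)$ in this case.
    \item If $\alpha_i=1$, the maximum is $0$ and it is attained both at $\bm{0}$ and at the mbpb vertices, which again form $M_i(\pb)$.
    \item If $\alpha_i<1$, every nonzero vertex has a negative value, so $\bm{0}$ is the unique maximizer.
\end{enumerate}

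The only point needing care is the face fact: the maximizer set of a linear objective over a polytope equals the convex hull of the maximizing vertices. I would justify it directly on the simplex. Write any $\xb_i \in P_i$ as a convex combination of the vertices, using weight $\lambda_j = p_j x_{ij}/m_i$ on $\frac{m_i}{p_j}\eb^j$ and the remaining weight on $\bm{0}$. The objective value of $\xb_i$ is the corresponding convex combination of vertex values. It therefore equals the maximum if and only if all positive weight sits on optimal vertices.
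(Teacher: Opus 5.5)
Your proof is correct. You reduce to a linear objective over the budget simplex with vertices $\bm{0}$ and $\frac{m_i}{p_j}\eb^j$, evaluate the vertex values $m_i(u_{ij}/p_j-1)$, and use barycentric weights to identify the maximizer set as the convex hull of the optimal vertices, which settles all three cases. The paper does not prove this lemma itself (it cites Lock and Finster), but the remarks just before it sketch the same linearity-over-the-affordable-region and bang-per-buck-versus-$1$ reasoning that you have made rigorous.
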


An \textit{outcome} $(\pb, \xb)$ consists of a price $p_j$ for each good $j \in G$ and an allocation $x_{ij}$ of good $j$ to buyer $i$, for every good $j \in G$ and every buyer $i \in B$. An outcome is \textit{feasible} if each buyer receives a utility-maximizing bundle, that is, a bundle they demand. For convenience, we define $y_j = \sum_{i \in B} x_{ij}$ as the total quantity of good $j$ sold by the auctioneer.

The auctioneer has separable supply costs given by a stepwise increasing marginal cost function for each good.

\begin{definition}
\label{def:stepwise-marginal-costs}
The auctioneer's production costs are given by the separable function $c(\yb) = \sum_{j \in G} c_j(y_j)$. For each good $j \in G$, the function $c_j$ is specified by a sequence of $K_j$ \textit{breakpoints} $b_{j1} < \cdots < b_{j,K_j}$ and \textit{marginal costs} $0 \leq d_{j1} < \cdots < d_{j,K_j}$. The $k$th segment is the interval $(b_{j,k-1}, b_{jk}]$, on which $c_j$ has constant marginal cost $d_{jk}$; supplying beyond $b_{j,K_j}$ is infeasible. For notational convenience, we define $b_{j0} = 0$, $d_{j,K_j+1} = \infty$, and $c_j(0) = 0$. For $y_j>0$, define
\[
    c_j(y_j) =
    \begin{cases}
        \displaystyle\sum_{k=1}^{s-1} d_{jk}(b_{jk} - b_{j,k-1}) + d_{js}(y_j - b_{j,s-1}), & \text{if } y_j \leq b_{j,K_j},\\
        \infty, & \text{if } y_j > b_{j,K_j},
    \end{cases}
\]
where $s$ is the unique index in $[K_j]$ such that $y_j \in (b_{j,s-1}, b_{js}]$. See \cref{fig:stepwise-marginal-costs} for an illustration.
\end{definition}

The last breakpoint $b_{j,K_j}$ is an explicit supply cap for good $j$. In applications, this cap can be chosen so that it never binds at equilibrium. Throughout, we assume that $d_{j1} > 0$ for every good $j \in G$. The case $d_{j1} = 0$ can be treated separately and is omitted here.%
\footnote{Assuming $d_{j2} > 0$, a good $j$ with $d_{j1} = 0$ can be handled by initializing its price to any rational $p_j \in (0, d_{j2})$ and its supply to $t_j = b_{j1}$, then running the algorithm unchanged. Since demand exceeds supply at this small initial price, the algorithm raises $p_j$ until equilibrium is reached, exactly as in the $d_{j1} > 0$ case. No equilibrium price is skipped because the algorithm only needs to discover prices in $(0, d_{j2}]$ for the first segment.}

The auctioneer's profit from selling aggregate quantities $\zb \in \R^G_{\geq 0}$ at prices $\pb$ is his income from $\zb$ minus his cost,
\[
    \sum_{j \in G} p_j z_j - \sum_{j \in G} c_j(z_j).
\]

As the cost functions are separable, the auctioneer's profit-maximizing quantity for each good $j$ at fixed prices $\pb$ can be determined independently.

\begin{lemma}
\label{lemma:profit-maximising}
If $p_j < d_{j1}$, the auctioneer wants to sell $z_j=0$. If $p_j \in [d_{jk},d_{j,k+1})$ for some $k\in[K_j]$, then the profit-maximizing quantities are $z_j\in[b_{j,k-1},b_{jk}]$ when $p_j=d_{jk}$, and $z_j=b_{jk}$ when $p_j\in(d_{jk},d_{j,k+1})$.
\end{lemma}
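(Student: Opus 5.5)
The proof analyzes the single-variable profit function $\pi_j(z) \coloneqq p_j z - c_j(z)$ on $[0, b_{j,K_j}]$, using separability to treat each good on its own. Beyond $b_{j,K_j}$ the cost is infinite, so the profit there is $-\infty$ and can be ignored. By \cref{def:stepwise-marginal-costs}, $c_j$ is continuous on $[0,b_{j,K_j}]$; continuity at $0$ holds because $c_j(y)=d_{j1}y\to 0$. It is also piecewise linear, with slope $d_{jk}$ on the $k$th segment $(b_{j,k-1},b_{jk}]$. Hence $\pi_j$ is continuous and piecewise linear on $[0,b_{j,K_j}]$, with slope $p_j - d_{jk}$ on segment $k$. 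Because the $d_{jk}$ are strictly increasing, these slopes are strictly decreasing in $k$, so $\pi_j$ is concave. Its maximizers are therefore exactly the points where the slope changes from nonnegative to nonpositive.

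I then split into cases according to the position of $p_j$.

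\emph{Case $p_j<d_{j1}$.} Every slope $p_j-d_{jk}$ is negative, so $\pi_j$ is strictly decreasing on $[0,b_{j,K_j}]$. The unique maximizer is $z_j=0$.

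\emph{Case $p_j\in(d_{jk},d_{j,k+1})$.} The slopes on segments $1,\dots,k$ are strictly positive, so $\pi_j$ is strictly increasing on $[0,b_{jk}]$. The slopes on segments $k+1,\dots,K_j$ are strictly negative, so $\pi_j$ is strictly decreasing on $[b_{jk},b_{j,K_j}]$. The unique maximizer is therefore $z_j=b_{jk}$. When $k=K_j$ this uses the convention $d_{j,K_j+1}=\infty$: there are no later segments, and the cap $b_{j,K_j}$ is the maximizer.

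\emph{Case $p_j=d_{jk}$.} The slope is positive on segments $1,\dots,k-1$, zero on segment $k$, and negative on segments $k+1,\dots,K_j$. Hence $\pi_j$ increases strictly up to $b_{j,k-1}$, is constant on $[b_{j,k-1},b_{jk}]$, and decreases strictly afterwards. The set of maximizers is exactly the interval $[b_{j,k-1},b_{jk}]$.

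To make each step rigorous, I compare values at the breakpoints directly. For $y$ in segment $s$, the formula gives $\pi_j(y)-\pi_j(b_{j,s-1})=(p_j-d_{js})(y-b_{j,s-1})$. Chaining this identity across segments yields the monotonicity claims above without invoking any differentiability. The only delicate points are the endpoint behaviour: the closed right endpoints of the segments, the value $c_j(0)=0$ together with continuity at $0$, and the $k=K_j$ edge case handled by the convention $d_{j,K_j+1}=\infty$. I expect no real obstacle here.
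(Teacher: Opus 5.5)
Your proof is correct: the identity $\pi_j(y)-\pi_j(b_{j,s-1})=(p_j-d_{js})(y-b_{j,s-1})$ on each segment, chained across breakpoints, gives the three cases, including the $k=K_j$ edge case via $d_{j,K_j+1}=\infty$. The paper states this lemma without proof, treating it as immediate from the piecewise-linear convex form of $c_j$ in \cref{def:stepwise-marginal-costs}, and your segment-by-segment slope analysis is exactly the verification it leaves implicit.
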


Now that we have defined the buyers' demand and the auctioneer's profit, we can formulate competitive equilibrium.

\begin{definition}
An outcome $(\pb, \xb)$ (with aggregate quantities $\yb$) is a \textit{competitive equilibrium} if it is feasible and selling $\yb$ at $\pb$ maximizes the auctioneer's profit among all possible bundles $\zb \in \R^G_{\geq 0}$, so
\[
\yb \in \text{argmax}_{\zb \in \R^G_{\geq 0}} \sum_{j \in G} \left ( p_j z_j - c_j(z_j) \right ).
\]
\end{definition}
Note that the auctioneer's maximization problem in this definition is independent of the buyers' budgets, as $\zb$ can be any bundle of goods; quantities above the cap $b_{j,K_j}$ are simply infeasible because they incur infinite cost.

\begin{figure}[!t]
    \centering
    \begin{tikzpicture}
    \begin{axis}[
        axis lines=middle,
        xlabel={quantity},
        ylabel={marginal cost},
        xmin=-0.5,
        xmax=5,
        ymin=-0.5,
        ymax=5,
        xtick={0,2,3,5},
        ytick={1,2,4},
        xticklabels={$b_{j0}$,$b_{j1}$,$b_{j2}$, $b_{j3}$},
        yticklabels={$d_{j1}$,$d_{j2}$,$d_{j3}$},
        grid=major,
        grid style={dashed, gray!40},
        width=12cm,
        height=6cm,
    ]
    % Three horizontal line segments: open dot at start, closed dot at end
    \addplot[blue, ultra thick, mark indices={2}, mark=*, mark options={fill=blue}] 
        coordinates {(0,1) (2,1)} node[pos=0, circle, draw=blue, fill=white, inner sep=2pt]{};
    \addplot[blue, ultra thick, mark indices={2}, mark=*, mark options={fill=blue}] 
        coordinates {(2,2) (3,2)} node[pos=0, circle, draw=blue, fill=white, inner sep=2pt]{};
    \addplot[blue, ultra thick, mark indices={2}, mark=*, mark options={fill=blue}] 
        coordinates {(3,4) (5,4)} node[pos=0, circle, draw=blue, fill=white, inner sep=2pt]{};
    \end{axis}
    \end{tikzpicture}
    \caption{An illustration of a stepwise increasing marginal cost curve $c'_j$ for good $j$ with three segments, given by breakpoints $b_{j1}, b_{j2}, b_{j3}$ and segment heights $d_{j1}, d_{j2}, d_{j3}$.}
    \label{fig:stepwise-marginal-costs}
\end{figure}

\subsection{Rational-Value Property of Competitive Equilibrium}
Our first result establishes that rational inputs guarantee a rational-valued competitive equilibrium, a property that is not immediate with seller costs. Specifically, equilibrium prices and allocation quantities are rational numbers that can be represented by bit strings whose length is polynomial in the length required to represent the utilities, budgets, and cost function coefficients. We state this formally in the following theorem.

\begin{theorem}
\label{thm:rational-CE}
Suppose the utilities and budgets of the buyers, as well as the coefficients of the cost functions, are rational numbers. If a competitive equilibrium exists, there exists a competitive equilibrium that is rational-valued, with prices and allocation quantities having bit size polynomial in the bit size of the input parameters.
\end{theorem}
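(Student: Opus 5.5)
Starting from an arbitrary competitive equilibrium $(\pb^*, \xb^*)$, I will fix its combinatorial ``pattern'' and show that the set of outcomes sharing that pattern is a nonempty polyhedron defined by a linear system with rational coefficients of polynomial bit size. The difficulty is that the equilibrium conditions are not linear in $(\pb,\xb)$: mbpb conditions $u_{ij}/p_j = \alpha_i$ and budget conditions $\sum_j p_j x_{ij} \le m_i$ involve products. I will linearize by working with spending variables $f_{ij} = p_j x_{ij}$ and inverse bang-per-buck variables $\beta_i = 1/\alpha_i$ (with $\beta_i$ set to $1$ for buyers with $\alpha_i<1$, who spend nothing). The mbpb condition then becomes the linear condition $p_j = \beta_i u_{ij}$ for $j \in M_i(\pb^*)$ and $p_j \ge \beta_i u_{ij}$ otherwise.

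\textbf{Pattern.} From $(\pb^*,\xb^*)$ I record:
\begin{itemize}
\item the set $S$ of buyers with $\alpha_i>1$, $=1$, or $<1$;
\item the mbpb edges $E=\{(i,j): j\in M_i(\pb^*)\}$;
\item the support of $\xb^*$;
\item for each good $j$, whether $p^*_j$ lies strictly inside some $(d_{jk},d_{j,k+1})$ or equals some $d_{jk}$, together with the index $k$ (or $p_j^* < d_{j1}$, which forces $y_j=0$).
\end{itemize}
I then write the system in the variables $(\pb,\bm\beta,\mathbf f)$:
\begin{itemize}
\item $p_j=\beta_i u_{ij}$ for $(i,j)\in E$, and $p_j\ge \beta_i u_{ij}$ for all $i \in S_{\ge1}$ and all $j$;
\item $\beta_i<1$, $=1$, or free, matching the recorded class; a buyer with $\alpha_i<1$ gets $f_{ij}=0$ and the constraint $p_j > u_{ij}$ for all $j$;
\item $f_{ij}\ge0$ with $f_{ij}=0$ off $E$;
\item $\sum_j f_{ij}=m_i$ if $\alpha_i>1$, and $\sum_j f_{ij}\le m_i$ if $\alpha_i=1$;
\item price-interval constraints for each good according to its recorded segment, with $p_j=d_{jk}$ at a breakpoint;
\item the seller-optimality condition from \cref{lemma:profit-maximising}, expressed through $\sum_i f_{ij}$.
\end{itemize}

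The last item is where nonlinearity reappears, since $y_j=\sum_i f_{ij}/p_j$. I will split by case:
\begin{itemize}
\item If $p_j^*$ is strictly inside a segment interval, the required quantity is $y_j=b_{jk}$. I keep $x_{ij}$ as variables, so the constraint $y_j = b_{jk}$ is linear, but then $f_{ij}=p_jx_{ij}$ is bilinear.
\item If $p_j = d_{jk}$ is pinned, then $f_{ij} = d_{jk}x_{ij}$ is linear, and $y_j\in[b_{j,k-1},b_{jk}]$ is linear.
\end{itemize}

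The main obstacle is the interior-segment goods, where both price and quantity are genuinely free. My plan there is to use spending variables with the constraint $\sum_i f_{ij} = p_j b_{jk}$, which is linear in $(\mathbf f,\pb)$ since $b_{jk}$ is a constant. For pinned goods I use $d_{jk}\,b_{j,k-1}\le\sum_i f_{ij}\le d_{jk}\,b_{jk}$. The whole system is then linear in $(\pb,\bm\beta,\mathbf f)$, and allocations are recovered as $x_{ij}=f_{ij}/p_j$.

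Strict inequalities (open intervals, $\alpha_i>1$, $p_j>u_{ij}$) prevent a direct vertex argument. I will replace each strict inequality by its closure. Any point of the closed polyhedron satisfies a weakly relaxed pattern, and I will check that it is still an equilibrium:
\begin{itemize}
\item A good at the boundary of its open interval hits a breakpoint $d_{jk}$ or $d_{j,k+1}$. There \cref{lemma:profit-maximising} permits the interval $[b_{j,k-1},b_{jk}]$ or $[b_{jk},b_{j,k+1}]$, both of which contain $b_{jk}$.
\item A buyer with $\alpha_i=1$ at the boundary may spend anything up to $m_i$; this is allowed by \cref{lemma:buyer-demand} in both neighbouring classes.
\end{itemize}

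So the closed polyhedron consists entirely of equilibria (in spending form) and is nonempty, since it contains the image of $(\pb^*,\xb^*)$. Prices are bounded below by $d_{j1}>0$, or by $u_{ij}\beta_i$, and I can add an upper bound $p_j\le \max(d_{j,K_j},\max_i u_{ij})$ valid at any equilibrium, so the polyhedron is a polytope. A vertex is the solution of a nonsingular square subsystem with rational coefficients of polynomial bit size, so by Cramer's rule (Hadamard's bound) it has polynomial bit size. Then $x_{ij}=f_{ij}/p_j$ is rational of polynomial size, which completes the proof.
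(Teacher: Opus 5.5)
Your proof is correct in outline but linearizes the equilibrium conditions differently from the paper.

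\textbf{The two linearizations.} The paper keeps the quantities $x_{ij}$ as variables and passes to inverse prices $q_j=1/p_j$ together with $\overline{\alpha}_i=\max\{1,\alpha_i\}$. Then mbpb becomes $q_ju_{ij}=\overline{\alpha}_i$, and the seller's conditions are linear because they constrain $\sum_i x_{ij}$ directly. The budget is linearized by rewriting spending on mbpb goods as $\frac{1}{\overline{\alpha}_i}\sum_j u_{ij}x_{ij}$ and multiplying through by $\overline{\alpha}_i$.

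You keep prices and use spending variables $f_{ij}$ (an Eisenberg--Gale/Shmyrev-style formulation), so budgets are linear outright. You instead multiply the seller's condition through by $p_j$, getting $\sum_i f_{ij}=p_jb_{jk}$. Both tricks rest on the same fact: once the combinatorial type is fixed, one factor of each bilinear relation is a constant. For the buyer it is her common bang per buck; for the seller it is the breakpoint quantity.

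\textbf{Where degenerate prices end up.} In the paper, the unbounded last price interval becomes $q_j\in[0,1/d_{j,K_j}]$, so a vertex may have $q_j=0$. That is why the paper maximizes an auxiliary $\epsilon\le q_j$. In your formulation the same issue appears as unboundedness, which your a priori price bound handles. The bound is valid: if $p_j>d_{j1}$ the seller must sell a positive amount, and any buyer of $j$ has $u_{ij}\ge p_j$, so in fact $p_j\le\max(d_{j1},\max_i u_{ij})$.

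You therefore avoid the $\epsilon$-LP, but you must guarantee positive prices yourself and recover allocations by division. Your explicit closing of strict inequalities does the same work as the paper's choice to write $P$ with closed constraints from the start and verify the converse directly.

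\textbf{A small hole.} Take a good with $p^*_j<d_{j1}$ that no buyer values. Your constraints then give only $0\le p_j\le d_{j1}$, since every lower bound $u_{ij}\beta_i$ is $0$. So a vertex may have $p_j=0$. That is outside the model's strictly positive prices, and it makes $f_{ij}/p_j$ and bang per buck undefined. Your claim that prices are ``bounded below by $d_{j1}$ or by $u_{ij}\beta_i$'' does not cover this case.

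The paper's preliminary normalization fixes it. Replace any $p^*_j<d_{j1}$ by $d_{j1}$; this preserves equilibrium because $y^*_j=0$. Afterwards every good satisfies $p_j\ge d_{j1}>0$ on the whole polyhedron. Alternatively, reset such prices to $d_{j1}$ after taking the vertex.

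Similarly, fix $\beta_i=1$ for buyers with $\alpha_i<1$, as you first say, rather than leaving it ``free''. Otherwise the polyhedron contains a line and has no vertex.
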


Our polynomial-time algorithm (\cref{thm:running-time}) in fact constructs such a rational-valued equilibrium, so \cref{thm:rational-CE} also follows from our algorithmic results. We nonetheless give a direct, self-contained proof via polyhedral theory, as the argument is short and may be of independent interest.

In the proof, we fix an arbitrary competitive equilibrium $(\pb^*,\xb^*)$ and record only its combinatorial type: each buyer's mbpb goods, whether her maximum bang per buck is strictly greater than $1$, and the cost segment containing the aggregate quantity sold of each good. In inverse-price variables $q_j=1/p_j$, these choices define a rational polyhedron whose constraints require allocated goods to remain mbpb, buyers to satisfy the spending conditions in \cref{lemma:buyer-demand}, and aggregate supplies to remain in the corresponding seller-optimal intervals. The original equilibrium gives a feasible point with $q_j>0$ for every good, and every feasible point with positive inverse prices gives a competitive equilibrium of the same type. Adding a variable $\epsilon$ and constraints $q_j\geq\epsilon$ lets us optimize for a positive lower bound on all inverse prices over a rational polyhedron. Standard rational linear-programming bit-complexity bounds then give a basic feasible solution with polynomially many bits. We now give the precise construction.

To construct the polyhedron $P$, first fix an arbitrary competitive equilibrium $(\pb^*, \xb^*)$. We may assume without loss of generality that $p_j^* \geq d_{j1}$ for every good $j \in G$: if $p_j^* < d_{j1}$ then the auctioneer's profit is maximized by selling zero of good $j$, so $y_j^* = 0$, and replacing $p_j^*$ by $d_{j1}$ yields another competitive equilibrium with the same allocation and all the same buyer demands (no buyer spends on a good with bang per buck below $1$, and raising $p_j^*$ cannot increase any buyer's bang per buck for good $j$). We do not assume that the entries of this equilibrium are rational, and this choice is used only for a nonconstructive existence proof of a rational equilibrium of the same combinatorial type. For each buyer $i \in B$, recall that $M_i(\pb^*)$ is the set of her mbpb goods at $\pb^*$ if her bang per buck at $\pb^*$ is at least $1$, and $M_i(\pb^*) = \emptyset$ otherwise. Also define $\overline{\alpha}^*_i = \max \{1, \max_{j \in G} \frac{u_{ij}}{p_j^*} \}$ for each buyer $i \in B$. Moreover, let $B_> = \{i \in B \mid \overline{\alpha}^*_i > 1 \}$ be the set of buyers whose maximum bang per buck is strictly greater than $1$. Finally, for each good $j \in G$, let $k_j$ be the smallest index in $[K_j]$ for which the quantity $\sum_{i \in B} x^*_{ij}$ of good $j$ aggregately supplied by the auctioneer satisfies $\sum_{i \in B} x^*_{ij} \le b_{j,k_j}$.

The variables of $P$ will be $q_j \in \R$ for each good $j \in G$, $\overline{\alpha}_i$ for each buyer $i \in B$, and $x_{ij} \in \R$ for each $i \in B$ and $j \in G$. Here $q_j$ should be interpreted as the inverse of the price of good $j$, and from feasible points with $q_j>0$ for every good we recover $p_j:= \frac{1}{q_j}$. We now construct $P$ by adding constraints for each buyer, as well as for the auctioneer.

For each buyer $i \in B$, we add constraints to ensure that she receives a demanded bundle at the given prices:
\begin{equation}
    \begin{aligned}
        q_j u_{ij}      &= \overline{\alpha}_i              && \forall j \in M_i(\pb^*),               \\
        q_j u_{ij}      &\leq \overline{\alpha}_i           && \forall j \in G \setminus M_i(\pb^*),   \\
        x_{ij}          &\geq 0                             && \forall j \in G,                         \\
        x_{ij}          &= 0                                && \forall j \in G \setminus M_i(\pb^*),   \\
        \overline{\alpha}_i        &\geq 1                             &&                                  \\
        \overline{\alpha}_i        &=1                                 && \text{if } i \notin B_>       \\
        \overline{\alpha}_i m_i    &\geq \sum_{j \in M_i(\pb^*)} u_{ij}x_{ij} &&                                  \\
        \overline{\alpha}_i m_i    &= \sum_{j \in M_i(\pb^*)} u_{ij}x_{ij}    && \text{if } i \in B_>
    \end{aligned}
\end{equation}

For the auctioneer, we add constraints that ensure he maximizes profit of good $j$ at price $\frac{1}{q_j}$. Note that the quantity $\sum_{i \in B} x_{ij}$ of each good $j$ aggregately supplied by the auctioneer is either equal to the breakpoint $b_{j,k_j}$, or it lies in $[b_{j,k_{j}-1}, b_{j,k_j})$. For each good $j \in G$, we add one of the following constraints, depending on the case. If $\sum_{i \in B} x^*_{ij} = b_{j,k_j}$, then add constraints
\begin{align*}
\frac{1}{d_{j,k_{j}+1}} \leq q_j \leq \frac{1}{d_{j,k_j}}, \\
\sum_{i \in B} x_{ij} = b_{j,k_j}.
\end{align*}
If $\sum_{i \in B} x^*_{ij} \in [b_{j,k_{j}-1}, b_{j,k_j})$, then add constraints
\begin{align*}
q_j = \frac{1}{d_{j,k_j}},\\
b_{j,k_{j}-1} \leq \sum_{i \in B} x_{ij} \leq b_{j,k_j}.
\end{align*}

The last inequalities follow because when price is $d_{j,k_j}$ then the auctioneer is willing to sell at least $b_{j,k_{j}-1}$ amount of good $j$ and at most $b_{j,k_j}$ amount of good $j$. We note that although the polyhedron $P$ was defined based on the competitive equilibrium $(\pb^*, \xb^*)$, its coefficients are rational, and we refer to it as a rational polyhedron.

As the following lemma shows, $(\qb^*, \xb^*, \overline{\alpha}^*)$ is a feasible point of $P$, and every feasible point of $P$ with positive inverse-price coordinates yields a competitive equilibrium. Moreover, $(\qb^*, \xb^*, \overline{\alpha}^*)$ has $q_j^*>0$ for every good. Since $P$ is rational and the open set $P\cap\{q_j>0\ \forall j\}$ is non-empty, introduce a variable $\epsilon$, add $0\leq \epsilon \leq 1$ and $q_j\geq \epsilon$ for all goods $j$, and maximize $\epsilon$ over this rational polyhedron. The optimum is positive, and the standard rational linear-programming bit-complexity bound gives an optimal basic feasible solution with polynomially many bits \citep{Sch-book}. Dropping $\epsilon$ yields a rational feasible point of $P$ with $q_j>0$ for every good. This completes the proof of \cref{thm:rational-CE}.

\begin{lemma}
\label{lemma:polyhedron-CE}
Let $q^*_j = \frac{1}{p^*_j}$ for each good $j \in G$, and let $\overline{\alpha}^*_i = \max \{1, \max_{j \in G} q^*_j u_{ij}\}$ for each buyer $i \in B$. Then $(\qb^*, \xb^*, \overline{\alpha}^*)$ is a feasible point in the polyhedron $P$.

Conversely, if $(\qb, \xb, \overline{\alpha})$ is a feasible point of $P$ with $q_j>0$ for every good $j$ and we define $p_j = \frac{1}{q_j}$, then $(\pb, \xb)$ is a competitive equilibrium.
\end{lemma}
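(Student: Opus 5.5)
The proof has two directions, and each one is a constraint-by-constraint check built on \cref{lemma:buyer-demand} and \cref{lemma:profit-maximising}.

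\textbf{Forward direction.} I would verify that $(\qb^*,\xb^*,\overline{\alpha}^*)$ satisfies every constraint of $P$.

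Fix a buyer $i$ and set $\alpha_i^*=\max_j q_j^*u_{ij}$.

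If $\alpha_i^*\ge 1$, then $\overline{\alpha}^*_i=\alpha_i^*$ and $M_i(\pb^*)$ is exactly the set of goods attaining this maximum. This gives the equalities $q_j u_{ij}=\overline{\alpha}_i$ on $M_i(\pb^*)$ and the inequalities elsewhere.

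If $\alpha_i^*<1$, then $M_i(\pb^*)=\emptyset$. Every $q_j^*u_{ij}<1=\overline{\alpha}_i^*$, and $i\notin B_>$, so $\overline{\alpha}_i=1$ holds.

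Next, \cref{lemma:buyer-demand} says $\xb_i^*$ is supported on $M_i(\pb^*)$, which gives the constraint $x_{ij}=0$ off $M_i(\pb^*)$. For the spending constraints, the key identity is that for $j\in M_i(\pb^*)$ we have $u_{ij}=\overline{\alpha}_i p_j$. Hence $\sum_{j\in M_i} u_{ij}x_{ij}=\overline{\alpha}_i\sum_j p_jx_{ij}$, and the spending constraints are equivalent to
\[
\sum_j p_jx_{ij}\le m_i, \quad\text{with equality when } \overline{\alpha}_i>1.
\]
This is exactly what \cref{lemma:buyer-demand} gives, since its demand sets are convex hulls of points $\frac{m_i}{p_j}\eb^j$ (plus $\bm 0$ when $\alpha_i=1$).

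For the auctioneer, I would use \cref{lemma:profit-maximising} together with the normalization $p_j^*\ge d_{j1}$.

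If $y_j^*=b_{j,k_j}$ with $y_j^*>0$, profit maximization requires $p_j^*\in[d_{j,k_j},d_{j,k_j+1}]$, which is the $q$-interval in that case of $P$.

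If $y_j^*\in[b_{j,k_j-1},b_{j,k_j})$, the only price at which a non-breakpoint quantity is optimal is $p_j^*=d_{j,k_j}$.

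The main care point is the degenerate case $y_j^*=0$. There $k_j=1$ and $y^*_j<b_{j1}$, so the second case applies and needs $p_j^*=d_{j1}$. I expect this to hold from $p_j^*\ge d_{j1}$ plus \cref{lemma:profit-maximising}, since any $p_j>d_{j1}$ forces positive supply. The boundary $d_{j,K_j+1}=\infty$ is handled by the convention $1/\infty=0$.

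\textbf{Converse.} Given a feasible point with $q_j>0$, set $p_j=1/q_j$.

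The buyer constraints force $\overline{\alpha}_i\ge \max_j u_{ij}/p_j$. When $M_i(\pb^*)\neq\emptyset$, they also force equality on those goods. When $M_i(\pb^*)=\emptyset$, they force $\overline{\alpha}_i=1$ (because $i\notin B_>$ there, as $\overline{\alpha}_i^*=1$) and $\xb_i=0$, which is always a demanded bundle.

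Otherwise, set $\alpha_i=\max_j u_{ij}/p_j=\overline{\alpha}_i\ge1$. Then $\xb_i$ is supported on goods achieving $\alpha_i$. By the same identity as before, spending is at most $m_i$, with equality if $\alpha_i>1$. This places $\xb_i$ in the demand set of \cref{lemma:buyer-demand}, case 1 or case 2.

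One subtlety: if $i\in B_>$ but $\overline{\alpha}_i$ happens to equal $1$ at the new point, we are in case 2. Full spending is still a demanded bundle there, so this is harmless.

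Finally, the seller constraints place $(p_j,y_j)$ in the optimal set described by \cref{lemma:profit-maximising}:
\begin{itemize}
\item if $p_j\in(d_{j,k_j},d_{j,k_j+1})$ then $y_j=b_{j,k_j}$;
\item at the endpoints $p_j=d_{j,k_j}$ or $p_j=d_{j,k_j+1}$, the value $b_{j,k_j}$ lies in the optimal interval $[b_{j,k_j-1},b_{j,k_j}]$ or $[b_{j,k_j},b_{j,k_j+1}]$ respectively;
\item in the second case, $p_j=d_{j,k_j}$ with $y_j\in[b_{j,k_j-1},b_{j,k_j}]$.
\end{itemize}
Hence $(\pb,\xb)$ is a competitive equilibrium.
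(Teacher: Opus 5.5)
Your proposal is correct and follows essentially the same route as the paper: both directions are a constraint-by-constraint check against \cref{lemma:buyer-demand} and \cref{lemma:profit-maximising}. The converse matches the paper's argument, including the endpoint case $p_j=d_{j,k_j+1}$ and the fact that $\overline{\alpha}_i>1$ forces $i\in B_>$ and hence full spending. You also spell out the forward direction, which the paper dismisses as ``follows directly,'' and you rightly note that the case $y_j^*=0$ relies on the normalization $p_j^*\geq d_{j1}$. One small wording fix: when $M_i(\pb^*)=\emptyset$, $\xb_i=\bm{0}$ is demanded not ``always'' but because the constraints give $\max_j u_{ij}/p_j\leq\overline{\alpha}_i=1$.
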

\begin{proof}
The first claim follows directly from the fact that $(\pb^*, \xb^*)$ is a competitive equilibrium, together with the definitions of $M_i(\pb^*)$, $B_>$, and $k_j$, \cref{lemma:buyer-demand}, and \cref{lemma:profit-maximising}.

Now fix a feasible point $(\qb, \xb, \overline{\alpha})$ of $P$ with $q_j > 0$ for all $j \in G$, and define $p_j = \frac{1}{q_j}$ for all goods $j \in G$. We first show that every buyer $i \in B$ demands bundle $\xb_i$ at prices $\pb$. By the constraints of $P$, we have
\begin{align*}
    \frac{u_{ij}}{p_j} = q_j u_{ij} = \overline{\alpha}_i &\qquad \forall j \in M_i(\pb^*), \\
    \frac{u_{ij}}{p_j} = q_j u_{ij} \le \overline{\alpha}_i &\qquad \forall j \in G \setminus M_i(\pb^*).
\end{align*}
If $M_i(\pb^*) \neq \emptyset$, the first line implies that some good attains bang per buck $\overline{\alpha}_i$. If $M_i(\pb^*) = \emptyset$, then by definition of $M_i(\pb^*)$ we have $i \notin B_>$. Noting that $\overline{\alpha}_i = 1$ and $\frac{u_{ij}}{p_j} \le \overline{\alpha}_i$ for all $j \in G$, in both cases we thus get
\[
    \overline{\alpha}_i = \max \left\{ 1, \max_{j \in G} \frac{u_{ij}}{p_j} \right\}.
\]
Moreover, the constraints $x_{ij} = 0$ for all $j \notin M_i(\pb^*)$ imply that every good allocated to buyer $i$ satisfies $\frac{u_{ij}}{p_j} = \overline{\alpha}_i$. Hence $\xb_i$ contains only goods maximizing buyer $i$'s bang per buck at prices~$\pb$.

For the spending conditions, note that $x_{ij} = 0$ outside $M_i(\pb^*)$, and for every $j \in M_i(\pb^*)$ we have $u_{ij} = \overline{\alpha}_i p_j$. Therefore,
\[
    \sum_{j \in G} p_j x_{ij} = \frac{1}{\overline{\alpha}_i} \sum_{j \in G} u_{ij} x_{ij}.
\]
Using the last two buyer constraints of $P$, it follows that
\[
    \sum_{j \in G} p_j x_{ij} \le m_i,
\]
with equality whenever $i \in B_>$. Since $\overline{\alpha}_i = 1$ for every buyer $i \notin B_>$, these are precisely the spending conditions required by \cref{lemma:buyer-demand}. Thus buyer $i$ demands bundle $\xb_i$ at prices $\pb$.

Finally, let $y_j = \sum_{i \in B} x_{ij}$ for each good $j \in G$. If $y^*_j = b_{j,k_j}$, then feasibility of $P$ gives
\[
    p_j \in [d_{j,k_j}, d_{j,k_j+1}]
    \qquad \text{and} \qquad
    y_j = b_{j,k_j}.
\]
If $p_j < d_{j,k_j+1}$, then this is profit-maximizing for good $j$ by \cref{lemma:profit-maximising}; this includes the last-segment case $k_j=K_j$, where $d_{j,K_j+1}=\infty$. If $p_j = d_{j,k_j+1}$ and $k_j < K_j$, then the seller is indifferent over every quantity in $[b_{j,k_j}, b_{j,k_j+1}]$, so $y_j = b_{j,k_j}$ is again profit-maximizing. If instead $y^*_j \in [b_{j,k_j-1}, b_{j,k_j})$, then feasibility of $P$ gives
\[
    p_j = d_{j,k_j}
    \qquad \text{and} \qquad
    y_j \in [b_{j,k_j-1}, b_{j,k_j}],
\]
which is again profit-maximizing for good $j$ by the same characterization. Hence the auctioneer maximizes profit with $\zb$ at $\pb$, and therefore $(\pb, \xb)$ is a competitive equilibrium.
\end{proof}

\section{Solving Arctic Auctions with Stepwise Increasing Marginal Costs}

The existence of a rational-valued competitive equilibrium motivates the design of a polynomial-time algorithm to find one. We propose a primal-dual algorithm for solving the Arctic auction with stepwise increasing marginal costs, building upon the work of  \cite{DPSV,arctic_markets_production} for linear Fisher markets and Arctic auctions without costs. However, our algorithm requires several key extensions. In fixed-supply markets, the same supply vector both limits the flow during the algorithm and describes seller-side feasibility at equilibrium. With stepwise increasing marginal costs these roles separate: actual supplies are upper bounds during the algorithm, while seller optimality at final prices imposes profitable lower bounds. We handle this by maintaining a hybrid min-cut invariant and recovering the final allocation through a lower-bounded flow.

Throughout, we assume that $d_{j1} > 0$ for every good $j \in G$. Since the algorithm initializes prices to $p_j = d_{j1}$ and only increases them, all prices maintained by the algorithm are strictly positive. In particular, this ensures that the bang-per-buck ratios $u_{ij}/p_j$ are well-defined throughout.

\subsection{Preliminaries}

We use standard network-flow terminology throughout this section. The definitions and basic facts needed for the algorithm, including residual graphs, min-cuts, lower-bounded flows, and Hoffman's circulation criterion, are collected in Appendix~\ref{app:network-flow-background};  see also \citet{AhujaNetworkFlows} for further details.

\begin{definition}
For any prices $\pb \in \Q_{>0}^G$, budgets $\mb \in \Q_{>0}^B$, and supply $\tb \in \Q_{\geq 0}^G$, let the \textit{expenditure network} $N(\pb, \mb, \tb)$ be the flow network with vertices consisting of a source $s$, a sink $t$, the buyers $B$, and the goods $G$. These vertices are connected by
\begin{itemize}
\item an arc from the source $s$ to each good $j$ with capacity $\capac(s, j) = p_jt_j$;
\item an arc from each buyer $i$ to the sink $t$ with capacity $\capac(i, t) = m_i$;
\item an arc from good $j$ to buyer $i$ with capacity $\capac(j,i) = \infty$ if $j$ is a mbpb good for buyer $i$ at prices $\pb$.
\end{itemize}
We illustrate this network in \cref{fig:expenditure-flow-network}.
\end{definition}

\begin{figure}
\centering

\begin{tikzpicture}[
    xscale=3,
    yscale=1.5,
    buyer/.style={circle, draw=gray, fill=gray, minimum size=0.15cm, inner sep=0pt, outer sep=5pt},
    good/.style={circle, draw, thick, minimum size=0.15cm, inner sep=0pt, outer sep=5pt},
    vx/.style={circle, draw=black, fill=black, minimum size=0.15cm, inner sep=0pt, outer sep=5pt},
    arc/.style={thick, draw=black, -{Stealth[scale=0.7]}},
    foo/.style={coordinate}
]

% Source and sink
\node[vx, label=left:{$s$}] (s) at (0,1) {};
\node[vx, label=right:{$t$}] (t) at (3,1) {};

% Goods
\node[good, label=below right:$1$] (g1) at (1,2) {};
\node[good, label=below right:$2$] (g2) at (1,1) {};
\node[good, label=below right:$3$] (g3) at (1,0) {};

% Buyers
\node[buyer, label=below left:$1$] (b1) at (2,2) {};
\node[buyer, label=below left:$2$] (b2) at (2,1) {};
\node[buyer, label=below left:$3$] (b3) at (2,0) {};

% Connections buyers to sink
\draw[trade] (b1) to node[midway,fill=white] {$m_1$} (t);
\draw[trade] (b2) to node[midway,fill=white] {$m_2$} (t);
\draw[trade] (b3) to node[midway,fill=white] {$m_3$} (t);

% Connections goods to buyers
\path[arc] (g1) edge (b1);
\path[arc] (g2) edge (b2);
\path[arc] (g3) edge (b2);
\path[arc] (g3) edge (b3);

% Connections source to goods
\draw[trade] (s) to node[midway,fill=white] {$p_1 t_1$} (g1);
\draw[trade] (s) to node[midway,fill=white] {$p_2 t_2$} (g2);
\draw[trade] (s) to node[midway,fill=white] {$p_3 t_3$} (g3);

% Goods and Buyers labels
\node[] (goods) at (1, 2.5) {Goods $G$};
\node[] (buyers) at (2, 2.5) {Buyers $B$};

\end{tikzpicture}
\caption{The expenditure network $N(\pb,\mb,\tb)$ with the arcs labeled with their capacities. The arcs from goods to buyers have capacity $\infty$.}
\label{fig:expenditure-flow-network}
\end{figure}

A key ingredient in the algorithm is the notion of a balanced flow. Intuitively, this is a flow that distributes the surplus (the unspent money) of the buyers as equitably as possible.

\begin{definition}
The \textit{surplus} $\bm{\gamma}(f)$ of flow $f$ in network $N(\pb, \mb, \tb)$ is given by $\gamma(f)_i = m_i - f(i, t)$ for each buyer $i \in B$.
A \textit{balanced flow} of network $N(\pb, \mb, \tb)$ is a max flow of $N(\pb, \mb, \tb)$ that minimizes $\| \bm{\gamma}(f) \|_2^2$.
\end{definition}

For a feasible flow $f$ in a network $N$, we write $R_N(f)$ for the \emph{restricted residual graph}: the residual graph restricted to the buyer and good vertices $B \cup G$, with $s$, $t$, and all their incident arcs removed. We will also use the following standard residual-graph property of balanced flows from \citet{DPSV}.

\begin{lemma}
\label{lemma:balanced-flow-residual}
Let $N = N(\pb,\mb,\tb)$, let $f$ be a balanced flow of $N$, and let $i,i' \in B$. If $\gamma(f)_i < \gamma(f)_{i'}$, then there is no directed path from $i$ to $i'$ in the restricted residual graph $R_N(f)$.
\end{lemma}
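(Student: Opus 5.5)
We argue by contradiction with an exchange argument: if a residual path existed from a lower-surplus buyer to a higher-surplus buyer, we could push a little flow along it and strictly decrease $\|\bm{\gamma}(f)\|_2^2$ without losing max-flow value. So suppose $\gamma(f)_i < \gamma(f)_{i'}$ and there is a directed path $P$ from $i$ to $i'$ in $R_N(f)$. Since $R_N(f)$ contains only buyer and good vertices, $P$ alternates between buyers and goods. Its arcs are of two kinds. Forward arcs $j \to i''$ have infinite capacity and always lie in the residual graph. Backward arcs $i'' \to j$ exist exactly when $f(j,i'') > 0$. Let $\delta > 0$ be the minimum residual capacity along $P$; this is positive and finite if $P$ contains a backward arc. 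If $P$ is a single forward arc from a good (which cannot happen here, since $P$ starts at a buyer), finiteness would need separate care, but the buyer-to-buyer path always starts with a backward arc $i \to j$, so $\delta \le f(j,i) < \infty$.

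We then reroute flow. Define $f'$ by increasing flow on the reversed path. Concretely, for the path $i = v_0 \to v_1 \to \cdots \to v_\ell = i'$, we send $\epsilon$ units from $i'$ back to $i$ in the original network: decrease $f$ on each original arc corresponding to a backward residual arc and increase it on each forward one, using the path in the direction that moves flow out of $i$'s incoming supply and into $i'$. The net effect at intermediate vertices is conservation, while at the endpoints $f(i,t)$ increases by $\epsilon$ and $f(i',t)$ decreases by $\epsilon$. (Equivalently, flow that was routed through $i$'s goods to $i$ is shifted to reach $t$ via $i'$, or the reverse; we choose the orientation consistent with the residual arcs so that $i$ gains spending and $i'$ loses spending.) We need $f(i,t) + \epsilon \le m_i$, which holds for $\epsilon \le \gamma(f)_i$. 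If $\gamma(f)_i = 0$, we should instead orient so that $i'$ gains. I will check carefully which orientation the residual path licenses: a path $i \to j$ (backward arc, flow $j\to i$ exists) then $j \to i''$ (forward) means the flow $j \to i$ can be redirected to $j \to i''$. Following this along $P$, flow reaching $i$ is moved to reach $i'$. Thus $\gamma_i$ increases by $\epsilon$ and $\gamma_{i'}$ decreases by $\epsilon$, which requires only $\epsilon \le \gamma(f)_{i'}$ slack-wise. Capacity of $(i',t)$ is respected since $\gamma(f)_{i'} > \gamma(f)_i \ge 0$. The source arcs are untouched, so the flow value is unchanged and $f'$ is still a max flow.

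Finally we compute the change in the objective:
\[
\|\bm{\gamma}(f')\|_2^2 - \|\bm{\gamma}(f)\|_2^2 = (\gamma_i+\epsilon)^2 + (\gamma_{i'}-\epsilon)^2 - \gamma_i^2 - \gamma_{i'}^2 = 2\epsilon(\gamma_i - \gamma_{i'}) + 2\epsilon^2,
\]
which is negative for $0 < \epsilon < \gamma_{i'} - \gamma_i$. This contradicts the assumption that $f$ is balanced. The main point needing care is the orientation bookkeeping of the residual path, so that the rerouting indeed transfers spending from $i'$ to $i$, as needed to equalize surpluses, while preserving feasibility and max-flow value. Choosing $\epsilon \le \min\{\delta, \gamma_{i'}-\gamma_i\}$ handles all the capacity constraints.
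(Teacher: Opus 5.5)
Your exchange argument is correct and matches the standard DPSV proof that the paper relies on; the paper cites this lemma rather than proving it. Pushing $\epsilon$ along the (without loss of generality simple) residual path lowers $f(i,t)$ and raises $f(i',t)$ by $\epsilon$ while leaving the source arcs, and hence the max-flow value, unchanged, and for $0<\epsilon\le\delta$ with $\epsilon<\gamma(f)_{i'}-\gamma(f)_i$ it strictly decreases $\|\bm{\gamma}\|_2^2$. In a write-up, drop the first, wrongly oriented description in your middle paragraph (where $f(i,t)$ increases and you discuss the case $\gamma(f)_i=0$), since only the corrected orientation matches the residual arcs and your final computation.
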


It is well-known that balanced flows can be computed in polynomial time. \citet{DPSV} designed a divide-and-conquer method that uses a linear number of max flow computations to find a balanced flow. Later, \citet{darwish2016improved} introduced a parametric flow one-shot method for computing balanced flows using $O(\log |V(N)|)$ max flow computations. Since the expenditure network has $|V(N)| = m+n+2$ vertices, this gives $O(\log(m+n))$ max flow computations in our setting.

Next, we introduce the notion of \textit{tight sets}. For any subset $S \subseteq B$ of buyers, let $\Gamma(S)$ be the set of goods adjacent to $S$ in $N(\pb, \mb, \tb)$. Similarly, for any subset $S \subseteq G$ of goods, let $\Gamma(S)$ be the set of buyers adjacent to $S$ in $N(\pb, \mb, \tb)$.

\begin{definition}[Tight sets]
A subset $S \subseteq G$ of goods is \textit{tight} if
\[
    \sum_{j \in S} p_j t_j \geq \sum_{i \in \Gamma(S)} m_i.
\]
\end{definition}

\subsection{An Overview of the Algorithm}
The algorithm combines the fixed-supply balanced-flow machinery from the Arctic auction without costs with explicit updates to supplies and buyers' remaining budgets. In the primal steps, the algorithm scales up a judiciously chosen subset of goods by a common factor, causing prices to increase monotonically. In the dual steps, the algorithm computes a {balanced flow} in the {expenditure network}, in which each buyer is connected only to her current mbpb goods. When a buyer's maximum bang per buck drops to $1$, her budget is partially or fully returned; in the latter case, she is removed from the market entirely.

Formally, let $\widehat B$ denote the original buyer set of the instance. On the buyer side, the algorithm maintains the active buyer set $B \subseteq \widehat B$ and current budgets $(m_i)_{i \in B}$, initialized from the input; the expenditure network $N(\pb, \mb, \tb)$ is always built only on the active buyers, and removed buyers receive zero allocation in the final output.

On the seller side, the algorithm maintains a segment index $\seg_j$ for each good $j$, recording which segment of the cost function the current price lies on, and a supply vector $\tb$: each good's supply is initialized to the maximum the seller is willing to sell at the starting price and is increased only when the price reaches the next marginal-cost breakpoint, ensuring the auctioneer never sells more than is profitable at current prices. The proof simultaneously tracks a profitable supply vector that records the lower quantities needed for seller optimality.

\begin{definition}[Profitable supply]
\label{def:profitable-supply}
For any current prices $\pb$ and cost segment indices $(\seg_j)_{j\in G}$ maintained by the algorithm, define the \emph{profitable supply} vector $\bar{\tb}$ by
\[
    \bar t_j =
    \begin{cases}
        b_{j,\seg_j-1} & \text{if } p_j = d_{j,\seg_j},\\
        b_{j,\seg_j} & \text{if } p_j \in (d_{j,\seg_j}, d_{j,\seg_j+1}),
    \end{cases}
    \qquad \text{for every } j \in G,
\]
Thus $\bar t_j$ is the smallest quantity of good $j$ that the seller must sell to maximize profit at current prices.
\end{definition}

We refer to $N(\pb,\mb,\bar{\tb})$ as the \emph{profitable network} at prices $\pb$.

The full algorithm is given in \cref{alg:arctic-with-costs}. We now highlight its key features.

\paragraph{Initialization.}
Note that the auctioneer's stepwise increasing marginal cost curves specify a minimum price $d_{j1}$ at which he is willing to sell any quantity of each good $j \in G$. Therefore, we initialize prices to $p_j = d_{j1}$. At these initial prices, the auctioneer maximizes profit by selling up to a quantity of $b_{j1}$ of good $j$ (by \cref{lemma:profit-maximising}), so supply is set to $t_j = b_{j1}$. Since buyers with a maximum bang per buck of less than $1$ demand nothing at the initial prices, and prices increase monotonically throughout the algorithm, these buyers are removed. If this empties $B$, the algorithm immediately returns zero allocation at prices $\pb$: every removed buyer demands the empty bundle since $\alpha_i < 1$, and the auctioneer is satisfied by zero supply since prices are at the first breakpoint ($\bar t_j = b_{j0} = 0$), so this is a competitive equilibrium. Otherwise the algorithm maintains the invariant that only buyers with maximum bang per buck at least $1$, at the current prices $\pb$, remain in the market.

\paragraph{The inner and outer loops.}
The main part of the algorithm consists of an outer and an inner loop. We call each execution of the inner loop body an \emph{inner iteration}, and a \emph{phase} consists of one iteration of the outer loop together with all the inner iterations it spawns. Each iteration of the outer loop begins by rebuilding the full expenditure network $N = N(\pb,\mb,\tb)$ at the current prices, budgets, and supply, and computing a balanced flow of this network. If this balanced flow has zero surplus for every buyer, then the outer loop terminates and enters the final processing step. Otherwise, we let $I$ be the set of buyers with the largest surplus, and let $J$ be the goods adjacent to $I$ in the expenditure network. The algorithm then removes all the arcs in $N$ from goods in $J$ to buyers not in $I$, and identifies all newly \emph{isolated} buyers $Z \subseteq B\setminus I$ in $N$. Here, a buyer is considered isolated if they have no incoming arcs from any good in $G$. We refer to the restricted subnetwork induced by $\{s\}\cup J\cup I\cup\{t\}$ after these deletions as the current \emph{phase component}.

The algorithm then enters the inner loop, which scales up the prices of goods $J$ by a common factor $\theta \geq 1$. Conceptually, we determine $\theta$ by continuously increasing it from $1$ upwards, until one of Events $1$ to $6$ occurs. Formally, we compute, for each event $e$, the smallest $\theta_e \geq 1$ for which the event holds, and set $\theta = \min_{e \in [6]} \{\theta_e \}$. The value $\theta_1$ may be infinite if every good in $J$ already lies on its last segment, and some of the other event values may also be infinite when the relevant set of buyers or goods is empty. Nevertheless, $\theta$ is well-defined: unless an earlier event occurs, the buyers in $I$ eventually reach maximum bang per buck $1$, triggering Event~6. In each inner iteration, the algorithm executes only the smallest-index event attaining this minimum scaling factor.
It is straightforward to compute the values $\theta_1, \ldots, \theta_4$ and $\theta_6$ in linear time. For Event~5, we need to compute the smallest scaling factor $\theta_5 \ge 1$ for which some nonempty subset $S \subseteq J$ becomes tight, that is, $\sum_{i \in \Gamma(S)} m_i \leq \sum_{j \in S} \theta_5 p_j t_j$. 

The value $\theta_5$ may equal $1$ immediately after an Event~2 restoration, in which case no positive price scaling is needed before Event~5 can end the phase. Otherwise the same subroutines as in \citet{DPSV} compute $\theta_5$ using at most $n$ max flow computations. If one also wants an explicit tight set $S$ (e.g., for analysis), it can be recovered with one additional max flow computation, but the algorithm itself does not require this. We now describe the events in more detail.

\begin{description}
    \item[Event 1] This event fires when the prices $p_j$ of one or more goods newly reach their next supply breakpoints. For each such good $j$ with current segment $\seg_j = k < K_j$ and price $p_j = d_{j,k+1}$, the auctioneer is willing to sell up to quantity $b_{j,k+1}$, so we increase the supply to $t_j = b_{j,k+1}$ and update the current segment to $\seg_j = k+1$. This may create temporary oversupply in the expenditure network, but that is harmless: source capacities are only upper bounds during the algorithm, and the final lower-bounded flow enforces the seller's actual profit-maximizing quantities.
    \item[Event 2] As the prices of goods $J$ increase, the goods not in $J$ become more desirable relative to those in $J$. This event fires when some buyer $i \in I$ starts demanding a good $j \in G \setminus J$. We add the arc $(j,i)$ to the network $N$ and recompute the balanced flow. If this balanced flow has zero surplus on every buyer, then the outer loop terminates and we enter the algorithm's final processing steps. Otherwise, we add to $I$ every buyer $i' \in B\setminus I$ that has a directed path to $I$ in the restricted residual graph $R_N(f)$. We update $J$ to be all the goods adjacent to $I$ in $N$, remove all arcs from goods in the new set $J$ to buyers in $B\setminus I$, and recompute the set $Z$ of isolated buyers. By \cref{lemma:event2-restoration}, the set $J$ strictly increases and every arc deleted in this restoration step carries zero flow in the recomputed balanced flow.
    \item[Event 3] Similarly to Event $2$, this event fires when some buyer $i \in Z$ starts demanding a good $j \in G \setminus J$. We add the arc $(j,i)$ to $N$ and remove $i$ from $Z$ (but not from $B$).
    \item[Event 4] As prices increase, the maximum bang per buck of buyers decreases monotonically. This event fires when a buyer $i \in Z$ reaches maximum bang per buck $\alpha_i = 1$. She becomes indifferent between receiving an empty bundle and receiving any bundle containing only mbpb goods, so we remove buyer $i$ from the market. Since prices will continue to rise, such a buyer will always be willing to accept an empty bundle in the future.
    \item[Event 5] This event fires when a nonempty subset $S \subseteq J$ of goods becomes tight. The current phase ends and a new phase begins. The algorithm does not need to identify a particular tight set explicitly, since the next outer-loop iteration rebuilds the full expenditure network from scratch.
    \item[Event 6] Similarly to Event~$4$, this event fires when a buyer $i \in I$ reaches $\alpha_i = 1$. We form the restricted phase network $N_{I,J}^{(i,0)}$ induced by $\{s\}\cup J\cup I\cup\{t\}$, with the current source capacities $p_j t_j$ on $J$, the current mbpb edges between $J$ and $I$, and buyer $i$'s sink capacity temporarily set to $0$. If the source-side cut $(\{s\}, J \cup I \cup \{t\})$ is a min-cut of $N_{I,J}^{(i,0)}$, we remove the buyer. Otherwise, we compute the unique inclusion-wise maximal source-side min-cut of $N_{I,J}^{(i,0)}$ with good side $S \subseteq J$ and buyer side $T \subseteq I$; uniqueness holds because the union of the source sides of source-side min-cuts is again the source side of a min-cut. We set
    \[
        m_i' = \sum_{j \in S} p_j t_j - \sum_{i' \in T \setminus \{i\}} m_{i'}
    \]
    and update $m_i \leftarrow m_i'$. Since Event~5 has priority over Event~6, \cref{lemma:no-tight-subset} shows that $\sum_{j \in S} p_j t_j < \sum_{i' \in \Gamma(S)} m_{i'}$ for every $S \subseteq J$ at the moment Event~6 fires. This implies $m_i' < m_i$, so the budget strictly decreases. Since the source-side cut has capacity $C = \sum_{j \in J} p_j t_j$ and is not a min-cut of $N_{I,J}^{(i,0)}$, the chosen min-cut has capacity $C - m_i' < C$, hence $m_i' > 0$, so the updated budget remains positive.
\end{description}

\paragraph{The Final Phase.}
The outer loop terminates once some balanced-flow computation yields zero surplus for every remaining buyer.
At that moment, the current prices already support full spending by all buyers on their mbpb goods. Before the final processing step, we rebuild the full expenditure network $N(\pb,\mb,\tb)$ at these final prices, budgets, and supply. As the supply of a good only increases once prices are sufficiently high, the auctioneer never sells more than they wish at $\pb$. However, at the final prices the seller also requires a minimum quantity of each good to be sold in order to maximize profit; this is exactly the final profitable supply $\bar{\tb}$ defined in \cref{def:profitable-supply}. The final processing step computes a feasible flow in the lower-bounded network $\overline N$ obtained from $N$ by giving each source arc $(s,j)$ lower bound $p_j \bar t_j$ and upper bound $p_j t_j$, each mbpb arc $(j,i)$ upper bound $\infty$, and each buyer arc $(i,t)$ lower and upper bounds both equal to $m_i$. \Cref{lemma:profitable-mincut,lemma:final-lb-feasible} will show that this lower-bounded network is feasible, so the returned allocation simultaneously exhausts every remaining buyer's budget and satisfies the seller's profit-maximizing lower bounds.

\begin{algorithm}[htb!]
\small
\begin{algorithmic}[1]
\Require Arctic auction instance $(\widehat B, G, \ub, \mb, \bb, \db)$.
\Ensure Competitive equilibrium $(\pb, \xb)$.
\State Set $B = \widehat B$.
\State For each good $j \in G$, set price $p_j = d_{j1}$, initial supply $t_j = b_{j1}$ and $\seg_j = 1$.
\State Remove all buyers $i$ with $\alpha_i = \max_{j \in G} \frac{u_{ij}}{p_j} < 1$ from $B$.
\If{$B = \emptyset$}
    \State Set $x_{ij} = 0$ for all $i \in \widehat B$, $j \in G$; \Return $(\pb, \xb)$.
\EndIf
\OuterLoop
    \State Build $N = N(\pb, \mb, \tb)$ and compute a balanced flow $f$.
    \If{$\max_{i \in B} \gamma(f)_i = 0$}
        \State \textbf{break} out of outer loop.
    \EndIf
    \State Let $I = \argmax_{i \in B} \gamma(f)_i$.
    \State Let $J = \Gamma(I)$, delete all edges from goods $J$ to buyers $B\setminus I$ in $N$, and let $Z = \{i \in B\setminus I \mid \Gamma(\{i\}) = \emptyset \}$.
   \InnerLoop
        \State Compute the smallest scaling factor $\theta \geq 1$ at which one of Events $1$--$6$ occurs.
        \State Scale up prices of goods in $J$ by setting $p_j = \theta p_j$.
        \If{\textbf{Event 1} ($p_j=d_{j,\seg_j+1}$ for some $j\in J$ with $\seg_j<K_j$)}
            \State Let $H = \{j \in J \mid \seg_j < K_j \text{ and } p_j = d_{j,\seg_j+1}\}$.
            \State For each $j \in H$, set $t_j = b_{j,\seg_j+1}$, $\capac(s,j) = p_j t_j$, and $\seg_j \leftarrow \seg_j+1$; \textbf{break} out of inner loop.
        \ElsIf{\textbf{Event 2} (a buyer $i \in I$ starts demanding a good $j \in G\setminus J$ at $\pb$) }
            \State Add arc $(j,i)$ with capacity $\capac(j,i) = \infty$ to $N$ and recompute the balanced flow $f$.
            \If{$\max_{i \in B} \gamma(f)_i = 0$}
                \State \textbf{break} out of both loops.
            \EndIf
            \State Let $I'$ be all buyers $i' \in B\setminus I$ with a directed path from $i'$ to $I$ in $R_N(f)$; set $I \leftarrow I \cup I'$, $J \leftarrow \Gamma(I)$, delete all edges from $J$ to $B\setminus I$, and recompute $Z$.
        \ElsIf{\textbf{Event 3} (a buyer $i \in Z$ starts demanding a good $j \in G\setminus J$ at $\pb$)}
            \State Add arc $(j,i)$ with capacity $\capac(j,i) = \infty$ to $N$ and remove $i$ from $Z$.
        \ElsIf{\textbf{Event 4} (a buyer $i \in Z$ becomes indifferent: $\alpha_i = 1$)}
            \State Set $B \leftarrow B\setminus\{i\}$ and delete $i$ and its incident arcs.
        \ElsIf{\textbf{Event 5} (a nonempty subset $S \subseteq J$ of goods goes tight: $\sum_{j \in S} p_j t_j \geq \sum_{i \in \Gamma(S)} m_i$)}
            \State \textbf{break} out of inner loop.
        \ElsIf{\textbf{Event 6} (a buyer $i \in I$ becomes indifferent: $\alpha_i = 1$)}
            \State Let $N_{I,J}^{(i,0)}$ be the restricted network on $\{s\}\cup J\cup I\cup\{t\}$, with buyer arc $(i,t)$ capacity set to $0$.
            \If{$(\{s\}, J \cup I \cup \{t\})$ is an $s$-$t$ min-cut in $N_{I,J}^{(i,0)}$}
                \State Set $B \leftarrow B\setminus\{i\}$ and delete $i$ and its incident arcs.
            \Else
                \State Find a maximal source-side min-cut with good side $S$ and buyer side $T$ in $N_{I,J}^{(i,0)}$, and set $m_i \leftarrow \sum_{j \in S} p_j t_j - \sum_{i' \in T \setminus \{i\}} m_{i'}$.
            \EndIf
            \State \textbf{break} out of inner loop.
        \EndIf
 \EndInnerLoop
\EndOuterLoop
\State Rebuild $N = N(\pb, \mb, \tb)$. For each good $j \in G$, compute $\bar \tb$ as $\bar t_j = b_{j,\seg_j-1}$ if $p_j = d_{j,\seg_j}$, and $\bar t_j = b_{j,\seg_j}$ if $p_j \in (d_{j,\seg_j}, d_{j,\seg_j+1})$.
\State Let $\overline N$ be the lower-bounded network with the same vertices and arcs as $N$, lower bound $p_j \bar t_j$ and upper bound $p_j t_j$ on each arc $(s,j)$, lower and upper bounds both $m_i$ on each arc $(i,t)$, and upper bound $\infty$ on each arc $(j,i)$. Find a feasible flow $f$ of $\overline N$.
\State For every $i \in \widehat B$ and $j \in G$, set $x_{ij} = f(j,i)/p_j$ if $(j,i)$ is an arc of $N$, and $x_{ij}=0$ otherwise.
\State \Return $(\pb, \xb)$.
\end{algorithmic}
\vspace{-0.5em}
\caption{Solving Arctic auctions with stepwise increasing marginal costs.}
\label{alg:arctic-with-costs}
\end{algorithm}

\subsection{Properties of the Algorithm}

We now establish six structural properties of the algorithm, which are used in the correctness proof (\cref{prop:correctness}) and the running-time analysis (\cref{thm:running-time}). The first two lemmas (\cref{lemma:event2-restoration,lemma:no-tight-subset}) control the combinatorial structure within a phase. The next three (\cref{lemma:source-cut-characterisation,lemma:hybrid-mincut,lemma:profitable-mincut}) together maintain the key invariant that the source-side cut is a min-cut of the profitable network throughout the algorithm. The final lemma (\cref{lemma:final-lb-feasible}) uses this invariant to show that the lower-bounded network in the final processing step is always feasible, which is the main technical ingredient in the correctness proof.

The first lemma studies the restoration step of Event~2, when a new mbpb edge is added and the phase component must be rebuilt.

\begin{lemma}
\label{lemma:event2-restoration}
A nonterminal Event~$2$ strictly increases the phase set $J$. Moreover, the deleted arcs carry zero flow in the newly computed $f$.
\end{lemma}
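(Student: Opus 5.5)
Let $f$ be the balanced flow recomputed after adding the arc $(j,i)$, with $i\in I$ and $j\in G\setminus J$. I will first prove the flow claim, and then derive the growth of $J$ from it. The deleted arcs are the arcs $(j',i'')$ with $j'$ in the new set $J=\Gamma(I\cup I')$ and $i''\in B\setminus(I\cup I')$.

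\emph{Zero flow on deleted arcs.} Suppose $f(j',i'')>0$ for such an arc. The reverse arc $(i'',j')$ then lies in the restricted residual graph $R_N(f)$. Since $j'\in\Gamma(I\cup I')$, there is an arc $(j',i^\circ)$ of infinite capacity with $i^\circ\in I\cup I'$, and this arc is always residual. Concatenating gives a path $i''\to j'\to i^\circ$. Every buyer of $I'$ has, by definition, a directed path to $I$ in $R_N(f)$. Appending such a path when $i^\circ\in I'$, we obtain a directed path from $i''$ to $I$. This forces $i''\in I'$, contradicting $i''\notin I\cup I'$. Hence every deleted arc carries zero flow.

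\emph{Strict growth of $J$.} I would rely on the convention that in a nonterminal Event~2 the new arc $(j,i)$ stays in $N$, since the sets are updated as $I\leftarrow I\cup I'$ and $J\leftarrow\Gamma(I)$ with the arc present. Then $j\in\Gamma(I)$, because $i\in I$. Before the event, $j\notin J$. We also need the old $J$ to be contained in the new one. This holds because $I$ only grows and the arcs from old $J$ to old $I$ are never deleted, since deletions only remove arcs into $B\setminus I$. Therefore the new $J$ contains the old $J\cup\{j\}$, which is a strict superset.

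\emph{Main obstacle.} The subtle point is that the new arc $(j,i)$ might end in a buyer $i$ that has been absorbed in a way that makes the arc itself a deleted arc. This cannot happen, since $i\in I$ and only arcs into $B\setminus(I\cup I')$ are deleted. A second point to check is whether $j$ could have had arcs to buyers outside $I$ that are now deleted even though they carry flow. The argument above covers this case uniformly, since it handles every good in the new $J$, including $j$. I expect the only delicate step to be the residual-path argument: arcs of infinite capacity from goods to buyers are always forward-residual, and reverse arcs are present exactly when they carry positive flow. I would state this explicitly, using the definition of $R_N(f)$ in the appendix.
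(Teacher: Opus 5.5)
Your proposal is correct and follows essentially the same argument as the paper: the good $j$ of the new arc lies in $\Gamma(I\cup I')$ because $i\in I$, and a deleted arc carrying positive flow would give a residual path $i''\to j'\to i^\circ$, extended to $I$ when $i^\circ\in I'$, forcing $i''\in I'$. Two small differences are harmless. You state the inclusion of the old $J$ in the new one explicitly (arcs from the old $J$ into the old $I$ are never deleted), which the paper leaves implicit. And your order of the two claims does not matter, since your growth argument never actually uses the zero-flow claim.
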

\begin{proof}
Let $I^-$ and $J^-$ be the phase sets just before Event~$2$, and let $f$ be the balanced flow computed after adding the new mbpb arc. The restoration step sets
\[
    I'=\{i'\in B\setminus I^- \mid \text{there is a directed path from } i' \text{ to } I^- \text{ in } R_N(f)\},
\]
then updates $I^+=I^-\cup I'$ and $J^+=\Gamma(I^+)$, and deletes the arcs from goods in $J^+$ to buyers in $B\setminus I^+$.

The newly added mbpb arc is the first edge from some good $j\in G\setminus J^-$ to a buyer in $I^-$. Since $I^-\subseteq I^+$, this good belongs to $\Gamma(I^+)=J^+$. Hence $J^+\supsetneq J^-$.

It remains to check the deletion step. Consider an arc $(g,z)$ deleted during the restoration, with $g\in J^+$ and $z\in B\setminus I^+$. If $f(g,z)>0$, then the reverse residual edge $z\to g$ is present in $R_N(f)$. Since $g\in J^+=\Gamma(I^+)$, there is a buyer $i\in I^+$ adjacent to $g$, and the forward residual edge $g\to i$ is present because goods-to-buyers arcs have infinite capacity. If $i\in I^-$ we have a path from $z$ to $I^-$; if $i\in I'$, we concatenate $z\to g\to i$ with the path from $i$ to $I^-$. In either case $z\in I'$, contradicting $z\in B\setminus I^+$. Hence every deleted arc has zero flow.
\end{proof}

The next lemma records where the phase component has strict Hall-style slack. Fresh phases start with strict slack. Later in a phase, Event~2 may create a tight set at the current prices; when Event~6 is executed, however, Event~5's priority guarantees that no such tight set is present.

\begin{lemma}
\label{lemma:no-tight-subset}
At the start of every new phase, every nonempty subset $S \subseteq J$ satisfies
\[
    \sum_{j \in S} p_j t_j < \sum_{i \in \Gamma(S)} m_i.
\]
Moreover, whenever Event~6 is executed, the same strict inequality holds for every nonempty subset $S \subseteq J$ at the current scaled prices.
\end{lemma}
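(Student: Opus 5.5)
Both claims come from the balanced-flow structure of the phase component, and I will treat them separately.

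\emph{Start of a phase.} At the start of a phase, $f$ is a balanced flow of the full network $N=N(\pb,\mb,\tb)$, $I=\argmax_i\gamma(f)_i$ with common surplus $\gamma^*>0$ (the loop did not break), and $J=\Gamma(I)$.

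First I show that every arc from $J$ to $B\setminus I$ carries zero flow. Suppose $f(g,z)>0$ with $g\in J$ and $z\notin I$. Then $R_N(f)$ contains the edge $z\to g$. It also contains $g\to i$ for some $i\in I$ adjacent to $g$, since that arc has infinite capacity. So there is a residual path from $z$ to $i$, while $\gamma(f)_z<\gamma(f)_i$. This contradicts \cref{lemma:balanced-flow-residual}.

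Now fix a nonempty $S\subseteq J$, and let $\Gamma(S)$ denote its neighbours in the phase component after deletion, so $\Gamma(S)\subseteq I$. All flow leaving $S$ goes into $\Gamma(S)$, and each buyer in $I$ has surplus $\gamma^*$. Hence
\[
\sum_{j\in S}p_jt_j\;\ge\; f(s,S)\;=\;f(S,\Gamma(S))\;\le\;\sum_{i\in\Gamma(S)}(m_i-\gamma^*).
\]
These inequalities point in different directions, so they do not yet compare $\sum_{j\in S}p_jt_j$ with the budgets. The missing fact is that every arc $(s,j)$ with $j\in J$ is saturated. If some $(s,j)$ with $j\in J$ were unsaturated, then $s\to j\to i$ for an adjacent $i\in I$ would be an augmenting path, because $(i,t)$ has residual $\gamma^*>0$; this contradicts maximality of $f$. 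So $f(s,j)=p_jt_j$ for all $j\in J$. Then
\[
\sum_{j\in S}p_jt_j=f(S,\Gamma(S))\le\sum_{i\in\Gamma(S)}m_i-|\Gamma(S)|\gamma^*<\sum_{i\in\Gamma(S)}m_i .
\]
The last inequality is strict because $\Gamma(S)\neq\emptyset$: each good in $J$ is adjacent to some buyer in $I$.

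\emph{When Event~6 fires.} I argue by the choice of $\theta$. Event~6 is executed only when $\theta=\theta_6$ is the minimum over all events and no smaller-indexed event attains it, so in particular $\theta_6<\theta_5$. By definition, $\theta_5$ is the smallest scaling factor at which some nonempty $S\subseteq J$ satisfies
\[
\theta\sum_{j\in S}p_jt_j\ \ge \sum_{i\in\Gamma(S)}m_i .
\]
For each $S$ the left side is nondecreasing in $\theta$. Therefore, at every factor below $\theta_5$, and in particular at $\theta_6$, every nonempty $S\subseteq J$ satisfies the strict inequality at the scaled prices.

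The delicate point is the case of a $\theta_5=1$ event, which can occur after Event~2 restoration; there I need to rule out that Event~6 also occurs at $\theta=1$ and is executed. Priority handles this. Event~5 has a smaller index than Event~6, so if a tight set existed at the current prices, Event~5 would be selected at $\theta=1$ instead. Thus whenever Event~6 is executed, $\theta_6<\theta_5$ holds strictly, and no tight set exists at the current prices. I also need the graph $\Gamma(S)$ to be unchanged during the continuous scaling; this holds because arc changes happen only at event thresholds, which occur at or after $\theta$.

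I expect the main obstacle to be the saturation argument in the first part together with the correct reading of $\Gamma$ after arc deletion. The proof must use the post-deletion neighbourhoods, and the zero-flow fact for arcs from $J$ to $B\setminus I$ is exactly what makes this legitimate.
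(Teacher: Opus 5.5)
Your proposal is correct and follows essentially the same route as the paper. It shows zero flow on the deleted arcs via \cref{lemma:balanced-flow-residual}, saturation of the source arcs into $J$ by an augmenting-path argument, and, for Event~6, uses the priority of Event~5 to get $\theta_6<\theta_5$, so no nonempty $S\subseteq J$ is tight at the scaled prices; you also make explicit that $\Gamma(S)\neq\emptyset$ because $J=\Gamma(I)$, which the paper uses only implicitly for the final strict inequality.
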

\begin{proof}
Consider the start of a new phase. At this point, the algorithm has just computed a balanced flow $f$ of the full expenditure network $N(\pb,\mb,\tb)$, and the outer loop has not terminated, so the maximal surplus
\[
    \delta \coloneqq \max_{i \in B} \gamma(f)_i
\]
is strictly positive. Let $I = \argmax_{i \in B} \gamma(f)_i$ and $J = \Gamma(I)$.
We claim that no deleted edge from a good in $J$ to a buyer in $B\setminus I$ carries positive flow under $f$. Indeed, if some deleted edge $(g,i')$ with $g \in J$ and $i' \in B\setminus I$ carried positive flow, then there would exist a buyer $i \in I$ adjacent to $g$, and the restricted residual graph $R_N(f)$ would contain the path $i' \to g \to i$. But $\gamma(f)_{i'} < \delta = \gamma(f)_i$, contradicting \cref{lemma:balanced-flow-residual}. Hence deleting the edges from goods in $J$ to buyers in $B\setminus I$ leaves $f$ feasible, with the same buyer surpluses, and in particular every buyer in $I$ still has surplus $\delta$.

Let $S \subseteq J$ be any nonempty subset. Since all edges from goods in $J$ to buyers in $B\setminus I$ have been deleted, we have $\Gamma(S) \subseteq I$. Every source edge $(s,g)$ with $g \in S$ must be saturated by $f$: if $(s,g)$ had residual capacity, then choosing any buyer $i \in \Gamma(\{g\}) \subseteq I$ would give an augmenting path $s \to g \to i \to t$, since $g \to i$ always has residual capacity and $i \to t$ has residual capacity $\delta > 0$, contradicting that $f$ is a max flow. Hence each source edge into a good of $S$ is saturated, so the total flow sent out of $S$ equals $\sum_{j \in S} p_j t_j$. As all flow out of $S$ enters the buyers in $\Gamma(S)$, we obtain
\[
    \sum_{j \in S} p_j t_j
    \leq \sum_{i \in \Gamma(S)} (m_i - \delta)
    < \sum_{i \in \Gamma(S)} m_i.
\]

Now suppose Event~6 is executed in some inner iteration and let $\theta_6$ be the selected scaling factor. If the displayed strict inequality failed for some nonempty $S \subseteq J$ at the current scaled prices, then the Event~5 condition would hold for that same $S$ at some scaling factor at most $\theta_6$. Since Event~5 has priority over Event~6 whenever both occur at the same scaling factor, the algorithm would not execute Event~6. Hence these inequalities are strict when Event~6 fires.
\end{proof}

The following lemma gives a simple Hall-type characterization of when the source-side cut is a min-cut, which is used repeatedly throughout the remaining proofs.

\begin{lemma}
\label{lemma:source-cut-characterisation}
For any supply vector $\widetilde{\tb}$, the cut $(\{s\}, G \cup B \cup \{t\})$ is a min-cut of the network $N(\pb,\mb,\widetilde{\tb})$ if and only if
\[
    \sum_{j \in S} p_j \widetilde{t}_j \leq \sum_{i \in \Gamma(S)} m_i
    \qquad \text{for every } S \subseteq G.
\]
\end{lemma}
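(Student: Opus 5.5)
We prove both directions by comparing the capacity of an arbitrary $s$-$t$ cut with the capacity of the source-side cut, which is $\sum_{j\in G} p_j\widetilde t_j$. An arbitrary cut is determined by its source side $\{s\}\cup S\cup T$ with $S\subseteq G$ and $T\subseteq B$. Its capacity is finite only if no arc of capacity $\infty$ leaves the source side, which means $\Gamma(S)\subseteq T$. For such a cut the capacity equals
\[
\sum_{j\in G\setminus S} p_j\widetilde t_j+\sum_{i\in T} m_i .
\]
Since budgets are positive, for fixed $S$ this is minimized by taking $T=\Gamma(S)$. Hence the minimum cut value of $N(\pb,\mb,\widetilde\tb)$ is
\[
\min_{S\subseteq G}\Big(\sum_{j\in G\setminus S} p_j\widetilde t_j+\sum_{i\in\Gamma(S)} m_i\Big).
\]

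For the direction ($\Leftarrow$), assume $\sum_{j\in S}p_j\widetilde t_j\le\sum_{i\in\Gamma(S)}m_i$ for every $S\subseteq G$. Then every finite cut with good side $S$ has capacity at least
\[
\sum_{j\in G\setminus S}p_j\widetilde t_j+\sum_{j\in S}p_j\widetilde t_j=\sum_{j\in G}p_j\widetilde t_j,
\]
which is exactly the capacity of the source-side cut. Cuts of infinite capacity trivially exceed this value, so the source-side cut is a min-cut.

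For the direction ($\Rightarrow$), assume the source-side cut is a min-cut and take any $S\subseteq G$. Consider the cut with source side $\{s\}\cup S\cup\Gamma(S)$. It is finite because $\Gamma(S)$ contains every buyer reachable from $S$ by an $\infty$ arc, and its capacity is at least $\sum_{j\in G}p_j\widetilde t_j$ by minimality. Subtracting $\sum_{j\in G\setminus S}p_j\widetilde t_j$ from both sides gives the stated inequality for $S$. The case $S=\emptyset$ holds trivially.

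The only points requiring care are bookkeeping ones: tracking which arcs cross the cut (source arcs into goods outside $S$, and buyer arcs $(i,t)$ for $i\in T$), and justifying the restriction to finite cuts, which is valid because a finite cut always exists, for instance the source-side cut itself. No arc joins a buyer back to a good, so no other arcs can cross. If desired, one could instead invoke max-flow/min-cut and Hall's theorem, but the direct cut computation above is self-contained.
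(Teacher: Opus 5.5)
Your proposal is correct and follows essentially the same route as the paper. Both parametrize finite cuts by the good side $S$, note that the cheapest finite completion puts exactly $\Gamma(S)$ on the source side, and compare the resulting capacity $\sum_{j\in G\setminus S}p_j\widetilde t_j+\sum_{i\in\Gamma(S)}m_i$ with the source-side capacity $\sum_{j\in G}p_j\widetilde t_j$; your two-direction write-up just spells out the paper's one-line equivalence in more detail.
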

\begin{proof}
Let
\[
    C_s(\widetilde{\tb}) \coloneqq \sum_{j \in G} p_j \widetilde{t}_j
\]
denote the capacity of the cut $(\{s\}, G \cup B \cup \{t\})$ in $N(\pb,\mb,\widetilde{\tb})$. Every finite $s$-$t$ cut is determined by a set $S \subseteq G$ of goods placed on the source side: once $S$ is chosen, every buyer in $\Gamma(S)$ must also lie on the source side to avoid cutting an infinite-capacity edge. For this fixed set $S$, the minimum-capacity finite cut is therefore obtained by placing exactly the buyers in $\Gamma(S)$ on the source side, and its capacity is
\[
    C_s(\widetilde{\tb}) - \sum_{j \in S} p_j \widetilde{t}_j + \sum_{i \in \Gamma(S)} m_i.
\]
Hence the source-side cut is a min-cut if and only if its capacity $C_s(\widetilde{\tb})$ is at most the capacity of every such cut, which is equivalent to the stated inequalities.
\end{proof}

Recall the profitable supply vector $\bar{\tb}$ defined previously. For an inner-loop state with current sets $I$ and $J$, we also define the \emph{hybrid phase supply} vector $\tilde{\tb}^{(J)}$ by
\[
    \tilde t^{(J)}_j =
    \begin{cases}
        t_j & \text{if } j \in J,\\
        \bar t_j & \text{if } j \in G \setminus J.
    \end{cases}
\]
The following hybrid min-cut lemma is the key invariant maintained throughout the inner loop. It links the phase component (where actual supply is used) to the remainder (where profitable supply is used), and underpins the profitable-network invariant of \cref{lemma:profitable-mincut}.

\begin{lemma}
\label{lemma:hybrid-mincut}
At the start of each inner iteration, provided that the source-side cut is a min-cut of the profitable network $N(\pb,\mb,\bar{\tb})$ at the start of the enclosing phase, either Event~5 is already triggered at scaling factor $1$, or the source-side cut $(\{s\}, G \cup B \cup \{t\})$ is a min-cut of the current hybrid network $N(\pb,\mb,\tilde{\tb}^{(J)})$.
\end{lemma}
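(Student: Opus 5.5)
By \cref{lemma:source-cut-characterisation} applied with $\widetilde{\tb}=\tilde{\tb}^{(J)}$, it suffices to show that, unless Event~5 is already triggered at scaling factor $1$, every $S\subseteq G$ satisfies
\[
  \sum_{j\in S\cap J} p_j t_j+\sum_{j\in S\setminus J} p_j\bar t_j\le\sum_{i\in\Gamma(S)} m_i .
\]
Here $\Gamma$ is taken in the current network, i.e.\ after the deletions of arcs from $J$ to $B\setminus I$ and all later additions and removals. I would prove this by induction over the inner iterations of the phase.

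\textbf{Decomposition of $S$.} Write $S=S_J\cup S_O$ with $S_J=S\cap J$ and $S_O=S\setminus J$. Since all arcs from $J$ to $B\setminus I$ are deleted, $\Gamma(S_J)\subseteq I$. The base case is the start of the phase. By \cref{lemma:no-tight-subset}, $\sum_{S_J}p_jt_j<\sum_{\Gamma(S_J)}m_i$. For $S_O$ I would use the hypothesis that the source cut is a min-cut of the profitable network at the start of the phase, which gives $\sum_{S_O}p_j\bar t_j\le\sum_{\Gamma_0(S_O)}m_i$, where $\Gamma_0$ is the neighbourhood in the undeleted full network.

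\textbf{Main obstacle.} The neighbourhoods $\Gamma_0(S_O)$ and $\Gamma(S_J)$ may overlap inside $I$. Arcs from $S_O$ into $I$ are absent at the phase start, since $J=\Gamma(I)$ means no good outside $J$ is adjacent to $I$. Hence $\Gamma_0(S_O)\cap I=\emptyset$, and the two budget sums are over disjoint buyer sets. The sets $\Gamma_0(S_O)$ and $\Gamma(S_O)$ coincide because deleted arcs only leave $J$. Adding the two inequalities then gives the claim at the phase start.

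\textbf{Inductive step.} I would check each event in turn while scaling $J$ by $\theta$.
\begin{itemize}
\item \emph{Price scaling.} Only prices in $J$ rise, and strictly before Event~5 fires no subset of $J$ is tight. So the $S_J$ part still satisfies the strict inequality, and the $S_O$ part is unchanged. Buyers in $B\setminus I$ keep the same arcs, except that those in $Z$ or elsewhere gain arcs via Event~3, which only enlarges $\Gamma$.
\item \emph{Event~1.} It raises $t_j$ for $j\in J$, which is absorbed into the same strict-slack argument. If the new $S_J$ inequality fails at scaling factor $1$ in the next iteration, then Event~5 is already triggered, which is the escape clause of the lemma.
\item \emph{Events~3 and~4.} Event~3 adds arcs, so $\Gamma$ only grows. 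Event~4 removes a buyer $i\in Z$; such a buyer has no incoming arcs, so it lies in no $\Gamma(S)$.
\item \emph{Event~6.} It removes a buyer or reduces $m_i$ for $i\in I$. I would use the maximal min-cut choice of $m_i'$ together with \cref{lemma:no-tight-subset} to show that every $S_J$ still satisfies $\sum p_jt_j\le\sum_{\Gamma(S_J)}m$. This works because the min-cut of $N^{(i,0)}_{I,J}$ has capacity $\sum_J p_jt_j-m_i'$, which is exactly the statement that the new budgets satisfy Hall's inequality on $J$.
\item \emph{Event~2.} This is the delicate case. The set $J$ grows to include goods previously in $G\setminus J$, and their term switches from $\bar t_j$ to $t_j\ge\bar t_j$. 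For these goods I would use the recomputed balanced flow $f$, in which the deleted arcs carry zero flow (\cref{lemma:event2-restoration}). Its restriction shows that $J^+$ sends flow $\sum_{J^+}p_jt_j$ only into $I^+$. The claim that every source arc into $J^+$ is saturated needs the surpluses in $I^+$ to be positive and reachable. If they are zero, the balanced flow has zero surplus on $I^+$, and this is where the exception ``Event~5 is already triggered at factor $1$'' is used, since then $J^+$ itself is tight. Otherwise the max-flow saturation argument of \cref{lemma:no-tight-subset} gives the $J^+$ inequality.
\end{itemize}
The part outside $J^+$ is unchanged, and the disjointness $\Gamma(S_O)\cap I^+=\emptyset$ follows again from $J^+=\Gamma(I^+)$. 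I expect this Event~2 bookkeeping, in particular justifying saturation on the newly added goods, to be the main obstacle.
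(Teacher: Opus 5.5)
\textbf{Gap 1: the Event~2 step.} Your skeleton matches the paper's. You split $S$ into $S\cap J$ and $S\setminus J$, use that $\Gamma(S\cap J)\subseteq I$ and $\Gamma(S\setminus J)\subseteq B\setminus I$ are disjoint, get the base case from \cref{lemma:no-tight-subset} plus the hypothesis on the profitable network, and note that Events~3 and~4 only enlarge neighbourhoods or delete isolated buyers.

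In Event~2 you split into ``all surpluses on $I^+$ are zero'' (then $J^+$ is tight, correctly) and ``otherwise''. In the second branch you invoke the saturation argument of \cref{lemma:no-tight-subset}. That argument needs every good of $J^+$ to reach a positive-surplus buyer in the residual graph. After the recomputation the surpluses on $I^+$ need not be uniform. So in the mixed case, where some buyers of $I^+$ have zero surplus and others positive, a good of $J^+$ can have an unsaturated source arc. An example is a good just absorbed from the remainder, whose capacity jumped from $p_j\bar t_j$ to $p_jt_j$. Your argument then yields no inequality for sets containing that good.

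The missing observation is that no flow argument is needed. ``Event~5 is already triggered at scaling factor~$1$'' means exactly that some nonempty $S\subseteq J^+$ has $\sum_{j\in S}p_jt_j\geq\sum_{i\in\Gamma(S)}m_i$. So either such an $S$ exists, which is the escape clause, or every $S\subseteq J^+$ satisfies the Hall inequality strictly. This tautological dichotomy is the paper's argument.

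Your flow route can also be repaired. The goods residually reachable from an unsaturated good of $J^+$ lie in $J^+$, and every buyer they reach has zero surplus, so they form a subset of $J^+$ that is strictly over-tight.

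\textbf{Gap 2: the induction is not closed.} The conclusion is a disjunction, so an inner iteration may start in the escape branch with some Hall inequality on $J$ violated. The loop can still continue from there, because Events~2, 3 and~4 take priority over Event~5 at $\theta=1$. Your inductive step only starts from the branch where the hybrid min-cut holds. In the escape situation, your inductive hypothesis therefore does not justify the claim that ``the part outside $J^+$ is unchanged''.

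You need to carry the remainder inequalities $\sum_{j\in S}p_j\bar t_j\leq\sum_{i\in\Gamma(S)}m_i$ for $S\subseteq G\setminus J$ as a separate invariant from the start of the phase. This is valid because, within a phase, remainder prices, segments and budgets are frozen, remainder neighbourhoods only grow, and the remainder only shrinks. This is the role of the paper's ``most recent certified hybrid network''.

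Finally, Events~1 and~6 both break out of the inner loop, so they never lead to another inner iteration of the same phase and need no treatment here. The paper handles them when passing to the profitable network in \cref{lemma:profitable-mincut}.
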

\begin{proof}
We argue by induction on inner iterations within a phase.

First we discuss the base case. At the start of the phase, the algorithm computes the sets $I$ and $J$, deletes the edges from goods in $J$ to buyers in $B\setminus I$, and enters the inner loop. Because $J=\Gamma(I)$ in the full expenditure network, no good in $G \setminus J$ is adjacent to a buyer in $I$. After the deletion step, no good in $J$ is adjacent to a buyer in $B\setminus I$ either. Hence the network $N(\pb,\mb,\tilde{\tb}^{(J)})$ decomposes into the disjoint union of the current phase component on $\{s\}\cup J \cup I \cup \{t\}$ and the remainder on $\{s\}\cup (G \setminus J)\cup(B\setminus I)\cup\{t\}$.

By \cref{lemma:no-tight-subset,lemma:source-cut-characterisation}, the strict slack at the start of the phase implies that the source-side cut is a min-cut of the phase component. On the remainder, the capacities are exactly the profitable capacities $p_j \bar t_j$, and the assumption on the profitable network implies that the source-side cut is a min-cut there as well. Therefore the source-side cut is a min-cut of the whole hybrid network at the start of the first inner iteration.

Now consider the inductive step. First suppose an inner iteration starts with Event~5 already triggered at scaling factor $1$. Since the algorithm executes the smallest-index event attaining the minimum scaling factor, Event~6 cannot be selected before Event~5 from this state: either $\theta_6>1$, or $\theta_6=1$ and Event~5 has priority. Thus the only events that can lead to another inner iteration before the pending Event~5 phase break are Events~2, 3, and~4, all at scaling factor $1$. If Event~1 or Event~5 is selected, the phase ends and there is no next inner iteration to check. If Event~3 or Event~4 is selected first at the same scaling factor, the phase set $J$, its source capacities, and the neighborhoods of subsets of $J$ are unchanged, so Event~5 is still triggered at the start of the next inner iteration. If Event~2 is selected first and the recomputed balanced flow is terminal, again there is no next inner iteration. Otherwise the restoration gives new sets $I^+$ and $J^+$. If some nonempty subset of $J^+$ is tight, Event~5 is triggered at scaling factor $1$ in the next inner iteration. If none is tight, \cref{lemma:source-cut-characterisation} gives the min-cut property on the new phase component. For the remainder, use the most recent certified hybrid network, namely the hybrid network immediately before this zero-scaling chain began. All intervening events in this pending-Event~5 case occur at scaling factor $1$, so prices and supplies have not changed; Events~2 and~3 only add mbpb edges, Event~4 removes only buyers isolated from the current network, and the new remainder goods are a subset of the preceding certified remainder. Hence the source-side min-cut inequalities still hold on the remainder, and the lemma holds in the next inner iteration as well.

It remains to treat an inner iteration that starts with the source-side cut being a min-cut of the current hybrid network and with Event~5 not already triggered at scaling factor $1$. During the scaling interval until the next event, the remainder component is unchanged. By \cref{lemma:source-cut-characterisation}, it is enough for the phase component to maintain
\[
    \sum_{j\in S} p_j t_j \leq \sum_{i\in \Gamma(S)} m_i
    \qquad \text{for every } S\subseteq J.
\]
During the scaling interval, all prices in $J$ are scaled by a common factor $\theta \geq 1$, so the left-hand sides $\sum_{j\in S} p_j t_j$ grow while the right-hand sides $\sum_{i\in \Gamma(S)} m_i$ are unchanged; any tightening would trigger Event~5. Moreover, Events~1, 5, and~6 all end the phase, so no further inner iteration is entered after any of them. It therefore suffices to check that Events~2, 3, and~4 preserve the hybrid min-cut property.

If Event~3 occurs, the algorithm only adds an edge within the remainder component, which can only enlarge neighborhoods and hence preserves the source-side min-cut inequalities there. If Event~4 occurs, the removed buyer lies in $Z$, so she is isolated and belongs to the remainder component; removing her does not change any neighborhood $\Gamma(S)$ for a subset of goods. Thus Events~3 and~4 preserve the hybrid min-cut property.

If Event~2 occurs and the recomputed balanced flow has zero surplus on every buyer, the outer loop terminates and no further inner iteration is entered.

If Event~2 occurs and the algorithm continues, then $I$ and $J$ are enlarged to new sets $I^+$ and $J^+$. By \cref{lemma:event2-restoration}, every arc deleted from goods in $J^+$ to buyers in $B\setminus I^+$ carries zero flow in the recomputed balanced flow, so the restored restricted network is feasible for the new phase component. Since $J^+ \supseteq J$, the new remainder goods satisfy $G \setminus J^+ \subseteq G \setminus J$, so the old hybrid min-cut inequalities already certify the new remainder. If the new phase component contains a nonempty set $S\subseteq J^+$ with
\[
    \sum_{j\in S} p_j t_j \geq \sum_{i\in \Gamma(S)} m_i,
\]
then Event~5 is triggered at scaling factor $1$ in the next inner iteration. Otherwise all such inequalities are strict, and \cref{lemma:source-cut-characterisation} shows that the source-side cut is a min-cut of the new hybrid network $N(\pb,\mb,\tilde{\tb}^{(J^+)})$.
\end{proof}

The following lemma is the global phase-to-phase invariant, established by induction over outer-loop iterations using \cref{lemma:hybrid-mincut}. It is the seller-side analogue of the buyer-side Hall condition, and together with \cref{lemma:source-cut-characterisation} it provides the seller-side Hoffman inequality needed in \cref{lemma:final-lb-feasible}.

\begin{lemma}
\label{lemma:profitable-mincut}
At the start of each outer loop iteration, and after the outer loop terminates and the full expenditure network is rebuilt, the source-side cut is a min-cut of the profitable network $N(\pb,\mb,\bar{\tb})$.
\end{lemma}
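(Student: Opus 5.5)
We argue by induction over outer-loop iterations. By \cref{lemma:source-cut-characterisation}, the claim is equivalent to the Hall-type inequalities
\[
    \sum_{j \in S} p_j \bar t_j \leq \sum_{i \in \Gamma(S)} m_i \qquad \text{for every } S \subseteq G,
\]
and we verify these at the start of each phase.

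\textbf{Base case.} At initialization every price equals $d_{j1}$ with $\seg_j=1$. Hence $\bar t_j = b_{j0} = 0$ for all $j$, and the inequalities hold trivially. Removing buyers with $\alpha_i<1$ deletes only buyers that would have been isolated, so the conclusion is unaffected.

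\textbf{Inductive step.} Assume the invariant holds at the start of a phase. By \cref{lemma:hybrid-mincut}, at the start of the last inner iteration of the phase, one of two things holds: either Event~5 is already triggered at scaling factor $1$, or the source-side cut is a min-cut of the hybrid network $N(\pb,\mb,\tilde{\tb}^{(J)})$. In the pending-Event~5 case, the last scaling happened with the phase component carrying its own inequalities $\sum_{S}p_jt_j\le\sum_{\Gamma(S)}m_i$ up to the moment of tightness. So in both cases the hybrid inequalities hold right when the phase ends, before the terminating event's updates.

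We then check that the terminating event, followed by the rebuilding of the full network, yields the profitable inequalities. The argument rests on two monotonicity facts. First, $\bar t_j \le t_j$ always holds, so hybrid inequalities imply profitable inequalities on $J$ provided neighborhoods do not shrink. Second, on $G\setminus J$ nothing changes. Rebuilding the network re-adds the deleted arcs from $J$ to $B\setminus I$, which only enlarges every $\Gamma(S)$, so the right-hand sides grow.

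The events are handled as follows.

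\emph{Event~5.} Prices and supplies are unchanged, and the rebuild only adds arcs, so the inequalities survive.

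\emph{Event~1.} For $j\in H$, the segment advances and $p_j=d_{j,\seg_j}$. The new profitable supply is $\bar t_j=b_{j,\seg_j-1}$, which equals the old actual supply $t_j$. So the left-hand sides computed with the new $\bar t$ are at most those computed with the old hybrid supply, and the inequalities continue to hold.

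\emph{Event~6.} If the buyer is removed, the source-side cut being a min-cut of $N_{I,J}^{(i,0)}$ gives exactly the needed inequalities on $J$ without $i$. If instead $m_i$ is lowered to $m_i'$, the maximality of the min-cut $(S^*,T^*)$ together with the definition of $m_i'$ makes the cut $\{s\}$ a min-cut of the phase component with the new budget. Concretely, every $S\subseteq J$ has $\sum_S p_jt_j\le\sum_{\Gamma(S)}m$: sets whose neighborhood avoids $i$ are handled by the min-cut property at capacity $0$, and sets whose neighborhood contains $i$ are handled by the choice of $m_i'$ via submodularity of cut capacities.

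\emph{Termination.} When the outer loop terminates, zero surplus means a flow saturating all $m_i$. Combined with the hybrid inequalities and the monotonicity above, this gives the claim for the rebuilt network.

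\textbf{Main obstacle.} The main obstacle is the Event~6 budget-reduction case: proving that the reduced $m_i'$ keeps every Hall inequality on $J$ valid. I would try an uncrossing argument. Given any violating $S$ with $i\in\Gamma(S)$, combine it with the maximal min-cut sides $(S^*,T^*)$ and use submodularity of the cut function to derive a cut of $N_{I,J}^{(i,0)}$ with capacity smaller than the min-cut, or a strictly larger min-cut source side. Either outcome contradicts the choice of $(S^*,T^*)$. \Cref{lemma:no-tight-subset} guarantees strictness beforehand, which ensures the resulting budget is positive.
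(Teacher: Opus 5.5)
Your overall structure matches the paper's: induction, \cref{lemma:hybrid-mincut}, a case split on the phase-ending event using $\bar t_j\le t_j$, the identity that after Event~1 the new profitable capacity equals the old actual one, and the min-cut of $N_{I,J}^{(i,0)}$ for Event~6. However, the pending-Event~5 case contains a false step. You claim that when the phase ends, the hybrid inequalities $\sum_{j\in S}p_jt_j\le\sum_{i'\in\Gamma(S)}m_{i'}$ still hold on the current $J$, because tightness was reached by scaling. The case that matters does not arise that way. A nonterminal Event~2 restoration can enlarge $J$ to $J^+$ at \emph{unchanged} prices. Each newly absorbed good $j\in J^+\setminus J$ then switches from profitable capacity $p_j\bar t_j$ to actual capacity $p_jt_j$ in the hybrid network, and these can differ arbitrarily. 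For example, a good never scaled so far sits at $p_j=d_{j1}$ with $\bar t_j=0$ but $t_j=b_{j1}$. If its only neighbour after restoration is the buyer $i$ who just started demanding it, then $\{j\}$ can satisfy $p_jt_j>m_i$ strictly. The hybrid inequalities then genuinely fail, so your deduction ``hybrid inequalities plus $\bar t\le t$ give profitable inequalities'' has no premise. The same problem affects your termination paragraph when a terminal Event~2 occurs from this pending state. Note also that the zero-surplus flow you invoke there only yields buyer-side Hall inequalities, which point the wrong way and play no role in this lemma.

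The missing idea, which the paper uses, is to compare with the last \emph{certified} hybrid network $N^c$, taken just before the zero-scaling chain (old $J$, same prices), rather than with the current one. From $N^c$ to the rebuilt profitable network:
\begin{itemize}
\item prices and supplies do not change;
\item goods newly absorbed into $J$ keep the profitable capacity they already had in $N^c$;
\item goods in the old $J$ only drop from $p_jt_j$ to $p_j\bar t_j$;
\item Events~2 and~3 and the final rebuild only add mbpb arcs (arcs deleted during restoration come back, being still mbpb at the unchanged prices);
\item Event~4 removes only isolated buyers.
\end{itemize}
By \cref{lemma:source-cut-characterisation}, each of these operations preserves the source-side min-cut inequalities.

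Separately, the Event~6b step you single out as the main obstacle needs no uncrossing and no maximality. For $S\subseteq J$ with $i\in\Gamma(S)$, the cut of $N_{I,J}^{(i,0)}$ with source side $\{s\}\cup S\cup\Gamma(S)$ has capacity at least the min-cut value $\sum_{j\in J}p_jt_j-m_i'$, and this rearranges directly to the required inequality. Sets with $i\notin\Gamma(S)$ are simply unaffected by the budget change. What you do need, and do not state, is that neighbourhoods of subsets of $J$ and of $G\setminus J$ are disjoint. That is what lets the inequalities for $S\cap J$ and $S\setminus J$ add up to the inequality for an arbitrary $S\subseteq G$.
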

\begin{proof}
We argue by induction over the outer loop iterations.

At initialization, $\seg_j = 1$ and $p_j = d_{j1}$ for every good $j \in G$, so $\bar t_j = b_{j0} = 0$. Hence every source edge in the profitable network $N(\pb,\mb,\bar{\tb})$ has capacity $0$, and the source-side cut is trivially a min-cut.

Now assume that the claim holds at the start of some outer loop iteration. If the outer loop terminates immediately after the balanced-flow computation, then the current network is already the rebuilt full expenditure network used by the final processing step, so the final statement follows.

Otherwise, the algorithm enters the inner loop. By \cref{lemma:hybrid-mincut}, at the start of each inner iteration either Event~5 is already triggered at scaling factor $1$, or the source-side cut is a min-cut of the hybrid network $N(\pb,\mb,\tilde{\tb}^{(J)})$.

If Event~2 occurs and the recomputed balanced flow has zero surplus on every buyer, then the outer loop terminates from this inner iteration. If the hybrid min-cut alternative holds in this iteration, the only change made before the final processing step is the addition of a new mbpb edge, and adding edges can only enlarge neighborhoods; passing to profitable capacities and then rebuilding the final network therefore preserves the source-side min-cut property. If instead Event~5 was already triggered at scaling factor $1$, the zero-scaling argument in the next paragraph applies directly to the final profitable network.

At the end of the phase, first suppose the source-side cut is a min-cut of the final hybrid network. It remains to pass from the hybrid to the profitable network. If the phase ends with Event~5, the profitable network at phase end is obtained from the final hybrid network by possibly decreasing some source capacities on goods in $J$ that remain exactly at breakpoints; by \cref{lemma:source-cut-characterisation}, decreasing source capacities preserves the source-side min-cut property.

There remains the case in which the phase ends with Event~5 at scaling factor $1$ before the hybrid min-cut property is re-established. Let $N^c$ be the most recent certified hybrid network, namely the hybrid network immediately before the zero-scaling chain that leads to this Event~5. From $N^c$ to phase end, the prices and supplies have not changed. The intervening zero-scaling Events~2 and~3 only add mbpb edges, Event~4 removes only buyers isolated from the current network, and any good newly absorbed into the phase had profitable capacity in $N^c$ and has the same profitable capacity at phase end. The final profitable network is therefore obtained from a certified network by adding edges, deleting isolated buyers, and weakly decreasing source capacities. These operations preserve the source-side min-cut inequalities of \cref{lemma:source-cut-characterisation}.

If the phase ends with Event~6, we must additionally verify that the buyer-side change preserves the Hall inequalities $\sum_{j \in S} p_j \bar t_j \leq \sum_{i' \in \Gamma(S)} m_{i'}$ for every $S \subseteq G$. Let $R=G\setminus J$. Since $J=\Gamma(I)$ before the deletion step, no good in $R$ is adjacent to a buyer in $I$; and since arcs from $J$ to $B\setminus I$ were deleted, no good in $J$ is adjacent to a buyer outside $I$. Thus the neighborhoods of subsets of $J$ and subsets of $R$ are disjoint. The inequalities for subsets of $R$ are unaffected by Event~6, because the changed buyer belongs to $I$. Therefore it remains to prove the inequalities for subsets of $J$: adding the already-valid inequality for $S\cap R$ then gives the inequality for an arbitrary mixed set $S\subseteq G$. For $S \subseteq J$, recall $\Gamma(S) \subseteq I$. In $N_{I,J}^{(i,0)}$, a source-side cut with good side $S \subseteq J$ and buyer side $\Gamma(S)$ has capacity $\sum_{j \in J \setminus S} p_j t_j + \sum_{i' \in \Gamma(S) \setminus \{i\}} m_{i'}$, so being at least as large as the source-side cut capacity $\sum_{j \in J} p_j t_j$ is equivalent to $\sum_{i' \in \Gamma(S) \setminus \{i\}} m_{i'} \geq \sum_{j \in S} p_j t_j \geq \sum_{j \in S} p_j \bar t_j$.

First suppose the Event~6 step removes buyer $i$. Then the source-side cut \emph{is} a min-cut of $N_{I,J}^{(i,0)}$, so this inequality holds for every $S \subseteq J$, which is exactly what is needed after buyer $i$ is removed.

Now suppose the Event~6 step reduces buyer $i$'s budget to $m_i'$ instead. The maximal source-side min-cut $(S^*, T^*)$ has capacity
\[
    \sum_{j\in J}p_jt_j-m_i',
\]
where $m_i'=\sum_{j \in S^*} p_j t_j-\sum_{i'\in T^*\setminus\{i\}}m_{i'}$. For $S\subseteq J$ with $i\notin\Gamma(S)$ the inequality is unchanged. Now let $S\subseteq J$ with $i\in\Gamma(S)$. The source-side cut with good side $S$ and buyer side $\Gamma(S)$ has capacity
\[
    \sum_{j\in J\setminus S}p_jt_j+\sum_{i'\in\Gamma(S)\setminus\{i\}}m_{i'}.
\]
Since $(S^*,T^*)$ is a min-cut, this capacity is at least $\sum_{j\in J}p_jt_j-m_i'$, and therefore
\[
    \sum_{j\in S}p_jt_j
    \leq m_i'+\sum_{i'\in\Gamma(S)\setminus\{i\}}m_{i'}.
\]
Using $\bar t_j\leq t_j$ gives
\[
    \sum_{j\in S}p_j\bar t_j
    \leq m_i'+\sum_{i'\in\Gamma(S)\setminus\{i\}}m_{i'}
    =\sum_{i'\in\Gamma(S)}m_{i'}',
\]
which is the required Hall inequality.

If the phase ends with Event~1, let $H \subseteq J$ be the goods that hit their next breakpoints. Immediately before the segment update, each good $j\in H$ still has actual supply $t_j=b_{j,\seg_j}$, so its hybrid capacity is $p_j b_{j,\seg_j}$. Immediately after Event~1, we update $\seg_j$ to $\seg_j+1$, and the good now lies exactly at the new breakpoint; since $p_j = d_{j,\seg_j}$ after the update, \cref{def:profitable-supply} gives $\bar t_j = b_{j,\seg_j-1}$, so the profitable capacity becomes $p_j \bar t_j = p_j b_{j,\seg_j-1}$. Thus the profitable network after Event~1 is numerically identical to the pre-Event~1 hybrid network. Therefore, at the end of every phase, the source-side cut is a min-cut of the current profitable network.

The next outer-loop iteration begins by rebuilding the full expenditure network, i.e., by restoring all mbpb edges at the current prices. Adding edges can only enlarge each neighborhood $\Gamma(S)$, so by \cref{lemma:source-cut-characterisation} every source-side min-cut inequality that holds before the rebuild also holds afterwards. Therefore the source-side cut is again a min-cut of the full profitable network at the start of the next outer loop iteration, closing the induction. In particular, it is also a min-cut after the final rebuild when the outer loop terminates.
\end{proof}

The final and most important lemma establishes that the lower-bounded network in the algorithm's final processing step always admits a feasible flow. This is the central result used in \cref{prop:correctness}: it guarantees that the algorithm can always find an allocation satisfying both buyer clearing and seller profit maximization simultaneously.

\begin{lemma}
\label{lemma:final-lb-feasible}
Let $\pb$, $\mb$, $\tb$, and $\bar{\tb}$ be the final prices, budgets, actual supply, and profitable supply after the outer loop terminates. Let $\overline N$ be the lower-bounded network on the same vertices and arcs as $N(\pb,\mb,\tb)$ with bounds
\[
    \ell(s,j) = p_j \bar t_j,\quad u(s,j) = p_j t_j,\quad
    \ell(j,i) = 0,\quad u(j,i) = \infty,\quad
    \ell(i,t) = u(i,t) = m_i.
\]
Then $\overline N$ has a feasible flow.
\end{lemma}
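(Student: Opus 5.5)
We will apply Hoffman's circulation criterion (Appendix~\ref{app:network-flow-background}) to the lower-bounded network $\overline N$, with an added arc $(t,s)$ of lower bound $0$ and upper bound $\infty$. Hoffman's criterion says that a feasible circulation exists if and only if $\ell(v)\le u(v)$ on every arc and, for every vertex set $X$, the total lower bound on arcs entering $X$ is at most the total upper capacity on arcs leaving $X$. The first condition is immediate: $\bar t_j\le t_j$ holds because $\bar t_j\in\{b_{j,\seg_j-1},b_{j,\seg_j}\}$ and $t_j=b_{j,\seg_j}$ is maintained by the algorithm. Rather than checking all cuts directly, we will reduce the problem to two separate flow facts, one for each side.

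\emph{Buyer side.} When the outer loop terminates, the balanced flow $f$ of $N(\pb,\mb,\tb)$ has zero surplus. Hence the max-flow value is $\sum_i m_i$, and every buyer arc $(i,t)$ is saturated. The final rebuild only restores mbpb edges, so $N(\pb,\mb,\tb)$ still admits a flow saturating all buyer arcs. By the max-flow min-cut theorem and the argument of \cref{lemma:source-cut-characterisation} applied to the sink side, this gives $\sum_{i\in T} m_i \le \sum_{j\in\Gamma(T)} p_j t_j$ for every $T\subseteq B$.

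\emph{Seller side.} By \cref{lemma:profitable-mincut}, the source-side cut is a min-cut of $N(\pb,\mb,\bar{\tb})$. So there is a flow saturating every source arc at capacity $p_j\bar t_j$, equivalently $\sum_{j\in S}p_j\bar t_j\le\sum_{i\in\Gamma(S)}m_i$ for all $S\subseteq G$ by \cref{lemma:source-cut-characterisation}.

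We then check Hoffman's inequality for an arbitrary cut. Take $X$ containing $s$ but not $t$ (the other placements are infinite or trivial because of the arc $(t,s)$, or reduce to the same computation). Let $S=X\cap G$ and $T=X\cap B$. Finiteness forces $\Gamma(S)\subseteq T$. The lower bounds entering $X$ are $\sum_{j\notin S}0+\ldots$, and writing out both sides, the condition becomes a mixed inequality of the form
\[
\sum_{i\in T'} m_i - \sum_{j\in S'} p_j t_j \le \sum_{i\in T''} m_i - \sum_{j\in S''}p_j\bar t_j ,
\]
with lower-bound terms coming from the buyer arcs and the profitable supplies, and upper terms coming from $t_j$. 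The cleanest route is not this case computation but the standard bipartite ``sandwich'' theorem: in a transportation network, if there is a flow meeting all lower bounds on one side subject to upper bounds on the other, and vice versa, then there is a flow meeting both. This is the Mendelsohn–Dulmage type result, and it also follows from Hoffman's criterion by splitting the cut condition into a buyer part and a seller part, which are exactly the two facts above. Concretely, for each cut we will bound the lower-bound mass into $X$ by grouping it into buyers of $B\setminus T$ whose neighbors lie in $G\setminus S$ and goods in $G\setminus S$, and then apply the buyer inequality to $B\setminus T$ or the seller inequality to $G\setminus S$ as appropriate.

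The main obstacle is this last step: showing that the single Hoffman inequality for a mixed cut is implied by the two one-sided Hall conditions. The first approach to try is the sandwich argument. Take the flow $g$ saturating the profitable source capacities and the flow $f$ saturating the buyers, and modify $f$ by augmenting along alternating paths in the symmetric difference, in the style of the Mendelsohn–Dulmage proof. This should raise each source flow to at least $p_j\bar t_j$ without unsaturating any buyer arc. The argument works because every good-to-buyer arc has infinite capacity, so these paths never get blocked in the middle layer. The Hoffman verification serves as the fallback if the augmentation bookkeeping becomes messy.
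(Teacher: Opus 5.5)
You have the right ingredients, and they are the paper's: the buyer-side inequality $\sum_{i\in T}m_i\le\sum_{j\in\Gamma(T)}p_jt_j$ from the terminal zero-surplus flow, the seller-side inequality $\sum_{j\in S}p_j\bar t_j\le\sum_{i\in\Gamma(S)}m_i$ from the profitable min-cut invariant, and Hoffman's criterion with a return arc $(t,s)$ of lower bound $0$ and infinite capacity. Your check that $\bar t_j\le t_j$ is a correct detail the paper leaves implicit.

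The gap is the step that combines these, and there your cut analysis has the cases backwards. For $s\in X$, $t\notin X$, the only arcs with positive lower bounds are $(s,j)$ and $(i,t)$, and neither can enter $X$. So $\ell(\delta^-(X))=0$, the inequality is trivial, and no mixed inequality ever appears. The two placements you dismissed are the substantive ones.
\begin{itemize}
\item If $s,t\notin X$, then $\ell(\delta^-(X))=\sum_{j\in X\cap G}p_j\bar t_j$. Finiteness of $u(\delta^+(X))$ forces $\Gamma(X\cap G)\subseteq X\cap B$, and then $u(\delta^+(X))=\sum_{i\in X\cap B}m_i$, so the seller-side inequality suffices.
\item If $s,t\in X$, then $\ell(\delta^-(X))=\sum_{i\in B\setminus X}m_i$. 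Finiteness forces $\Gamma(B\setminus X)\subseteq G\setminus X$, and $u(\delta^+(X))=\sum_{j\in G\setminus X}p_jt_j$, so the buyer-side inequality suffices.
\item If $s\notin X$ and $t\in X$, the arc $(t,s)$ leaves $X$ and gives $u(\delta^+(X))=\infty$.
\end{itemize}
Thus every Hoffman inequality is one-sided, and this four-case check is exactly the paper's proof.

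Because you misread the cut structure, you call the combination the ``main obstacle''. You then defer it to a loosely stated sandwich (Mendelsohn--Dulmage) theorem, or to an augmentation that ``should'' work. As written, therefore, the proposal does not prove the lemma.

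That alternative route is sound and can be completed. Let $f$ saturate the buyers and $g$ saturate the profitable capacities on the same arcs, and take a conformal decomposition of $g-f$ on the goods-to-buyers arcs. Its paths starting at goods with $f(s,j)<p_j\bar t_j$ end at goods with $f(s,j)>p_j\bar t_j$, and they are usable in the residual graph of $f$. Augment $f$ along exactly these paths. Every buyer inflow stays at $m_i$, since buyers are interior to the paths. Each deficient good rises to exactly $p_j\bar t_j\le p_jt_j$, and no good drops below $p_j\bar t_j$. Either completion closes the gap, but the direct Hoffman case split is shorter and is what the paper does.
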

\begin{proof}
We reduce the lower-bounded flow problem to a circulation problem in the standard way. Add an auxiliary arc $(t,s)$ of infinite capacity and impose the lower and upper bounds stated above on all original arcs of $\overline N$. A feasible circulation in this auxiliary network restricts to a feasible flow in $\overline N$, and conversely any feasible flow in $\overline N$ can be completed to such a circulation by sending its value on the auxiliary arc. We now verify Hoffman's circulation criterion \citep{Sch-book} for this auxiliary network, which implies the existence of a feasible flow.
For a set of vertices $X$, Hoffman's inequality is
\[
    \ell(\delta^-(X)) \leq u(\delta^+(X)),
\]
where $\delta^-(X)$ and $\delta^+(X)$ are the arcs entering and leaving $X$. We show that every nontrivial finite inequality is implied by one of two Hall-type conditions.

Firstly, the outer loop terminates only when a balanced flow of the rebuilt full network $N(\pb,\mb,\tb)$ has zero surplus on every buyer. Rebuilding only restores mbpb edges, so this zero-surplus flow remains feasible in $N(\pb,\mb,\tb)$. Hence the full network admits a flow that saturates every buyer arc $(i,t)$. Therefore, for every subset $T \subseteq B$ of buyers,
\begin{equation}
\label{eq:buyer-Hall-condition}
    \sum_{i \in T} m_i \leq \sum_{j \in \Gamma(T)} p_j t_j,
\end{equation}
because the buyers in $T$ can receive flow only from their neighboring goods $\Gamma(T)$. This is the buyer-side condition: the upper capacities of the source arcs incident to $\Gamma(T)$ can supply the exact demand of $T$.

Secondly, \cref{lemma:profitable-mincut,lemma:source-cut-characterisation} imply that
\begin{equation}
    \label{eq:seller-Hall-condition}
    \sum_{j \in S} p_j \bar t_j \leq \sum_{i \in \Gamma(S)} m_i
    \qquad \text{for every } S \subseteq G.
\end{equation}
This is the seller-side condition: the mandatory lower-bound flow out of any set of goods can be absorbed by adjacent buyers.

All that remains is to check that these two conditions cover Hoffman's inequalities. Let $X$ be any vertex set. If a goods-to-buyers arc leaves $X$, or if $t \in X$ and $s \notin X$, then $u(\delta^+(X))=\infty$, so the inequality is automatic. If $s \in X$ and $t \notin X$, the inequality is also automatic, since no positive-lower-bound arc enters $X$.

The only remaining cases are therefore the following. Suppose that $s,t \notin X$, and let $S=X\cap G$. If the outgoing capacity is infinite, we are done. So assume finite outgoing capacity, which implies $\Gamma(S)\subseteq X\cap B$. So Hoffman's inequality is implied by \eqref{eq:seller-Hall-condition}, as
\[
    \ell(\delta^-(X))=\sum_{j\in S}p_j\bar t_j
    \leq \sum_{i\in \Gamma(S)} m_i
    \leq u(\delta^+(X)).
\]
Next, suppose that $s,t \in X$, and let $T=B\setminus X$. Finite outgoing capacity implies every neighbor of a buyer in $T$ lies in $G\setminus X$, so $\Gamma(T)\subseteq G\setminus X$, and Hoffman's inequality is implied by \eqref{eq:buyer-Hall-condition}, as
\[
    \ell(\delta^-(X))=\sum_{i\in T}m_i
    \leq \sum_{j\in \Gamma(T)}p_jt_j
    \leq u(\delta^+(X)).
\]
Thus all Hoffman's inequalities hold, and the auxiliary network has a feasible circulation. Removing the auxiliary arc $(t,s)$ gives a feasible flow in $\overline N$.
\end{proof}

\section{Correctness of the Algorithm}
We now see that the algorithm returns a competitive equilibrium when it terminates. By \cref{lemma:buyer-demand}, a buyer with $\alpha_i < 1$ demands only the empty bundle; a buyer with $\alpha_i = 1$ demands any affordable bundle consisting solely of mbpb goods, including the empty bundle; and a buyer with $\alpha_i > 1$ demands an affordable bundle consisting solely of mbpb goods on which she spends her entire budget. Note that when $\alpha_i = 1$ the empty bundle is always demanded, so a buyer whose algorithmic budget has been reduced by an Event~6b refund (to a value at most her original budget) is still satisfied by the empty bundle, or by any affordable bundle of mbpb goods up to her reduced budget. The auctioneer, meanwhile, maximizes his profit by selling a quantity of each good $j$ as upper- and lower-bounded in \cref{lemma:profit-maximising}. We now prove that the outcome $(\pb, \xb)$ returned by the algorithm is a competitive equilibrium; that is, every buyer receives a bundle $\xb_i$ she demands at $\pb$, and the aggregate allocation $\sum_{i \in B} \xb_i$ maximizes the auctioneer's profit over all possible bundles $\zb \in \R^G_{\geq 0}$ at $\pb$.

\begin{proposition}
\label{prop:correctness}
The algorithm returns a competitive equilibrium when it terminates.
\end{proposition}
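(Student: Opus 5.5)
The plan is to take the final feasible flow $f$ of $\overline N$, which exists by \cref{lemma:final-lb-feasible}, and verify buyer optimality and seller optimality separately for the outcome $(\pb,\xb)$ with $x_{ij}=f(j,i)/p_j$.

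\textbf{Seller side.} Write $y_j=\sum_{i\in\widehat B}x_{ij}$. Flow conservation at good $j$ gives $p_j y_j = f(s,j)$, and the bounds on $(s,j)$ give $\bar t_j\le y_j\le t_j$. Next I would check that the algorithm maintains $p_j\in[d_{j,\seg_j},d_{j,\seg_j+1})$ together with $t_j=b_{j,\seg_j}$:
\begin{itemize}
\item prices start at $d_{j1}$ with $t_j=b_{j1}$;
\item prices only increase;
\item Event~1 fires exactly when a price reaches $d_{j,\seg_j+1}$, at which point it updates $\seg_j$ and $t_j$, so no breakpoint is skipped because $\theta$ is the minimum over event factors.
\end{itemize}
There are then two cases, matching \cref{def:profitable-supply} and \cref{lemma:profit-maximising}:
\begin{itemize}
\item if $p_j=d_{j,\seg_j}$, then $y_j\in[b_{j,\seg_j-1},b_{j,\seg_j}]$, which is profit-maximizing;
\item if $p_j\in(d_{j,\seg_j},d_{j,\seg_j+1})$, then $\bar t_j=t_j=b_{j,\seg_j}$ forces $y_j=b_{j,\seg_j}$, which is again optimal.
\end{itemize}
On the last segment, $d_{j,K_j+1}=\infty$, so the price is unbounded and this case still works.

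\textbf{Active buyers $i\in B$ at termination.} Flow passes only along mbpb arcs, so $\xb_i$ contains only mbpb goods. I need the invariant that every active buyer has $\alpha_i\ge1$. It holds after initialization. During scaling, a buyer's $\alpha_i$ can drop only if her mbpb goods are all being scaled. For buyers in $I$, Event~6 catches $\alpha_i$ reaching $1$. For buyers in $Z$, Event~4 catches it. Buyers outside $I\cup Z$ have an mbpb neighbour outside $J$, whose price is unchanged, so $\alpha_i$ is constant for them; when they become isolated they are put in $Z$.

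Spending is $\sum_j p_jx_{ij}=f(i,t)=m_i$, where $m_i$ is the current algorithmic budget. This budget is at most the original budget and equals it unless Event~6 reduced it. If the budget was never reduced, full spending at the original budget is consistent with \cref{lemma:buyer-demand} in both cases $\alpha_i>1$ and $\alpha_i=1$. If Event~6 reduced it, then $\alpha_i=1$ at that moment and prices only rise afterwards. Since she remains in $B$, the invariant gives $\alpha_i\ge1$, so $\alpha_i=1$ exactly. By \cref{lemma:buyer-demand}(2), any affordable mbpb bundle with spending at most her original budget is demanded, so she is satisfied.

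\textbf{Removed buyers.} They receive $\xb_i=\bm 0$. Each was removed when $\alpha_i\le1$, either at initialization with $\alpha_i<1$ or in Events~4/6 with $\alpha_i=1$. Prices are monotone, so $\alpha_i\le1$ persists, and $\bm 0\in D^i(\pb)$ by \cref{lemma:buyer-demand}.

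\textbf{Other termination paths.} The empty-$B$ early return is already justified in the text. Termination via Event~2 is handled because the final processing step rebuilds the network.

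\textbf{Main obstacle.} The delicate step is justifying $\alpha_i\ge1$ for all active buyers at termination, and in particular $\alpha_i=1$ for buyers with reduced budgets. This requires arguing that every buyer whose $\alpha_i$ is decreasing lies in $I\cup Z$, so Events~4/6 fire in time. I would handle it by noting that a buyer not in $I\cup Z$ keeps an mbpb good outside $J$ whose price is fixed. The remaining steps follow directly from \cref{lemma:final-lb-feasible}.
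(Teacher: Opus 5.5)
Your proof is correct and follows essentially the same route as the paper. You take the feasible flow of $\overline N$ from \cref{lemma:final-lb-feasible}, deduce seller optimality from $\bar t_j\le y_j\le t_j$ via \cref{lemma:profit-maximising}, and check each buyer against \cref{lemma:buyer-demand}. The differences are that you split buyers into active and removed rather than by the value of $\alpha_i$ (an equivalent organization), and that you explicitly justify two invariants the paper takes as holding by construction: active buyers keep $\alpha_i\ge 1$, and $p_j\in[d_{j,\seg_j},d_{j,\seg_j+1})$ with $t_j=b_{j,\seg_j}$.
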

\begin{proof}
Let $\pb$, $\mb$, $\tb$, $(\seg_j)_{j \in G}$ be the final prices, budgets, supply, and segments. Let $\bar{\tb}$ be the final profitable supply vector and let $\overline N$ be the lower-bounded network from the final processing step. By \cref{lemma:final-lb-feasible}, $\overline N$ admits a feasible flow $f$. The allocation $\xb$ returned is defined by $x_{ij} = \frac{f(j,i)}{p_j}$ for each arc $(j,i)$ in $\overline N$, and $x_{ij} = 0$ otherwise (including all buyers removed by the algorithm).

We now verify that every buyer demands her allocated bundle at final prices $\pb$, by checking the three cases of \cref{lemma:buyer-demand}. Since prices increase monotonically, every buyer's bang per buck $\alpha_i$ decreases monotonically throughout the algorithm. We use two facts throughout: the arcs of $\overline N$ assign only mbpb goods, and every buyer who remains in the market spends her entire final budget because $\overline N$ enforces $f(i,t)=m_i$.

\textbf{Case 1:} Suppose a buyer has mbpb $\alpha_i < 1$ at $\pb$. By construction, such a buyer must have been removed, either during initialization or when she reached $\alpha_i=1$ in an Event~4 or Event~6 iteration. Hence she receives the empty bundle, which she demands by \cref{lemma:buyer-demand}.

\textbf{Case 2:} Suppose a buyer has mbpb $\alpha_i = 1$ at $\pb$. Such a buyer may have been removed in Event~4 or in the removal branch of Event~6, in which case she receives the empty bundle. Otherwise, she remains in the market, possibly with a reduced budget due to the budget-reduction branch of Event~6. In that case she receives only mbpb goods and spends her entire final budget. Either allocation is demanded when $\alpha_i=1$, by \cref{lemma:buyer-demand}.

\textbf{Case 3:} Suppose a buyer has mbpb $\alpha_i > 1$ at $\pb$. Such a buyer was never removed and was never involved in Event~6, which fires only when $\alpha_i=1$. Hence her final budget equals her original budget. She receives only mbpb goods and spends her entire original budget, as required by \cref{lemma:buyer-demand}.

Finally, the feasible flow in $\overline{N}$ satisfies $\bar{t}_j \leq y_j \leq t_j$ for every good $j$, where $y_j \coloneqq \sum_{i \in \widehat{B}} x_{ij}$. By definition of $\bar{\tb}$ and $\tb$: if $p_j = d_{j,\seg_j}$ then $\bar{t}_j = b_{j,\seg_j-1}$ and $t_j = b_{j,\seg_j}$, so $y_j \in [b_{j,\seg_j-1}, b_{j,\seg_j}]$; and if $p_j \in (d_{j,\seg_j}, d_{j,\seg_j+1})$ then $\bar{t}_j = t_j = b_{j,\seg_j}$, so $y_j = b_{j,\seg_j}$. By \cref{lemma:profit-maximising}, these are exactly the profit-maximizing quantities for the auctioneer at prices $\pb$. Hence the returned outcome is a competitive equilibrium.
\end{proof}

\section{The Running Time}
Together with \cref{prop:correctness}, the following theorem completes the proof that the algorithm computes a competitive equilibrium in polynomial time.

\begin{theorem}
\label{thm:running-time}
Suppose $\mathcal{A} = (B, G, u, \mb, \bb, \db)$ is an Arctic auction instance with rational utilities, budgets, and marginal cost steps. Let $n = |G|$, $m = |B|$, $K = \sum_{j \in G} K_j$, and let $L$ be the maximum bit length of the numerator or denominator of any input rational appearing in $\ub$, $\mb$, $\bb$, or $\db$. Then \Cref{alg:arctic-with-costs} solves $\mathcal{A}$ with at most
\[
O( (m+K)(m+n)^5(L + \log(m+n)) )
\]
outer loop iterations, and performs at most
\[
    O( (m+K)mn(n+\log(m+n))(m+n)^5(L + \log(m+n)) )
\]
max flow computations.

\end{theorem}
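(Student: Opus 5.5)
The plan is to classify outer-loop iterations (phases) by the event that ends them, and then to multiply the resulting phase count by the cost of one phase.

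\textbf{Phases ending in Event~1 or Event~6.} Event~1 advances some $\seg_j$, and segments never decrease, so at most $K$ phases end with Event~1. Event~6 either removes a buyer, which happens at most $m$ times overall, or reduces one buyer's budget. In the reduction branch the buyer is left at $\alpha_i=1$. Since prices only rise, that buyer can never again be in $I$ with a future Event~6 unless she is removed. So each buyer triggers Event~6 at most a bounded number of times, and the number of Event~6 phases is $O(m)$. Events~4 also remove buyers and contribute $O(m)$.

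\textbf{Phases ending in Event~5.} This is the main obstacle. I would adapt the DPSV potential argument. Between two consecutive ``structural'' events (Events~1, 4, 6, buyer removals, budget changes), supply $\tb$ and budgets $\mb$ are fixed. Within such an epoch the algorithm is exactly the fixed-supply Arctic algorithm of \citet{arctic_markets_production}, run on the current instance. I would use the potential $\Phi=\|\bm\gamma(f)\|_2^2$ of the balanced flow of the full network. The DPSV argument shows the following: in a phase ending with Event~5, either $\Phi$ drops by a factor $(1-\Omega(1/(m+n)^2))$, or the phase is one of at most $O(m+n)$ consecutive ``set-growing'' phases, in which the surplus vector changes in a monotone way that can happen only $m+n$ times. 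Event~2 strictly enlarges $J$ by \cref{lemma:event2-restoration}, so it fires at most $n$ times per phase.

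Within an epoch $\Phi$ starts at most $(\sum_i m_i)^2\le m^2 2^{2L}$. It cannot fall below $2^{-O((m+n)(L+\log(m+n)))}$ unless it is zero. The reason is that balanced-flow surpluses are rational with denominators bounded by determinants of the network system. Here I would argue that all prices and budgets are of polynomial bit size, using the structure of Event~6 budgets (sums and differences of $p_jt_j$ and $m_i$) together with \cref{thm:rational-CE}-style Cramer bounds. Hence each epoch has $O((m+n)^5(L+\log(m+n)))$ Event~5 phases: the factor $(m+n)^2$ from the decrease rate, times $(m+n)$ from set-growing phases, times $(m+n)(L+\log(m+n))$ from the bit-size range, with slack absorbed. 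There are $O(m+K)$ epochs, which gives the stated outer-loop bound $O((m+K)(m+n)^5(L+\log(m+n)))$.

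One delicate point is that the potential may jump up at epoch boundaries. A jump at Event~1 adds supply, and a jump at a budget refund changes the surplus. Both are handled by simply restarting the count per epoch, which is why the bound is multiplicative in $m+K$.

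\textbf{Cost per phase.} A phase computes one balanced flow, costing $O(\log(m+n))$ max flows \citep{darwish2016improved}. It then runs inner iterations. Each is an Event~2 (at most $n$ per phase, each costing a balanced flow), an Event~3 or~4 (at most $m$ per phase, costing no max flows), or a terminating event. Computing $\theta_5$ costs at most $n$ max flows per inner iteration, and Event~6 costs one min-cut. So a phase performs $O(mn(n+\log(m+n)))$ max flows, counting $O(m+n)$ inner iterations, each with $O(n+\log(m+n))$ flows, crudely bounded by $mn$ iterations. Multiplying by the phase count gives the second bound.
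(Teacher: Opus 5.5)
\textbf{The bound on Event~6 phases fails.} After a refund, buyer $i$ sits at $\alpha_i=1$, but nothing keeps her out of $I$ later. In a rebuilt network another buyer can become adjacent to her mbpb goods, and the balanced flow can then give her positive surplus. If she lands in $I$, then $\theta_6=1$, so Event~6 fires immediately and may refund her again.

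A concrete instance: one good $g$ with $d_{g1}=1$, $b_{g1}=10$, $K_g=1$; buyer $i$ with $u_{ig}=2$, $m_i=100$; buyer $k$ with $u_{kg}=3$, $m_k=5$.
In the first phase $I=\{i\}$ and $k\in Z$. Event~6 fires at $p_g=2$ and refunds $i$ to $20$.
In the next phase $k$ is reconnected to $g$, and the balanced flow gives both buyers surplus $2.5$, so $I=\{i,k\}$. Event~6 fires at $\theta=1<\theta_5=5/4$ and refunds $i$ again, to $15$.

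Late-arriving competitors can repeat this, and you give no valid bound on the number of refunds. Your $O(m+K)$ epoch count, which restarts the potential at every budget change, is therefore unsupported. The paper treats only Types~1 and~$6a$ as resets (\cref{obs:finite-phases}). It then proves in \cref{lemma:phase-6b-reduction} that every Type~$6b$ phase itself shrinks $\Phi$ by a factor $(1-\frac{1}{4h^3})$. That proof needs two ingredients you are missing:
\begin{enumerate}
\item An explicit post-refund feasible flow: trim the excess on buyer $i$'s incoming arcs and on the matching source arcs. This shows the refund never raises the balanced-flow potential.
\item A case split. Either the phase-set surplus falls to $\delta/2$, or buyer $i$'s pre-refund surplus drop plus the refund exceeds $\delta/2$.
\end{enumerate}

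\textbf{The lower bound on nonzero $\Phi$ is also not established.} \cref{thm:rational-CE} concerns equilibria, not intermediate states. Composing refund formulas (sums and differences of current $p_jt_j$ and $m_i$) and the scaling factors $\theta$ phase after phase lets bit lengths grow with the number of phases. That is circular, since the number of phases is what you are bounding.

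The missing idea is the certificate invariant of \cref{lemma:checkpoint-bit-complexity}. At every checkpoint, the current prices, budgets and $\lambda_i=1/\alpha_i$ are the unique solution of a full-rank linear system. The system consists of mbpb forest equations, breakpoint anchors and $\lambda_i=1$ anchors, historical tight-set equations with supply coefficients frozen at revelation, and each buyer's latest Event~$6b$ cut equation entered as a rank-preserving pivot. Each event contributes an equation not implied by the retained ones.

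Tight-set rows carry up to $n$ breakpoint coefficients with distinct denominators. Cramer's rule and Hadamard's inequality therefore give $O((m+n)^2(L+\log(m+n)))$ bits, not the $O((m+n)(L+\log(m+n)))$ you assert. Combined with the $h^3$ decrease rate, this is what produces $(m+n)^5$.

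Your appeal to a DPSV dichotomy inside an epoch likewise needs the paper's adaptation. Event~2 restorations delete only zero-flow arcs. A buyer newly absorbed into $I$ has a residual path to the old phase set, so the surplus drop can be charged to an already-present buyer.

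Your per-phase work bound survives only through the crude $mn$ factor. $Z$ is recomputed after each Event~2, so Event~3 can recur $O(m)$ times per restoration, not $O(m)$ times per phase.
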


The bound on our algorithm's running time is likely to be pessimistic; we optimize for a clean polynomial rather than sharp exponents.

Our running-time analysis starts from the squared-surplus potential used by \citet{arctic_markets_production, DPSV}, which maps any network flow $f$ and budgets $\mb$ to the sum of the squared surpluses of all buyers:
\[
\Phi(f, \mb) = \sum_{i \in B} (m_i - f(i,t))^2.
\]
Recall that a phase is an iteration of the outer loop together with all its inner iterations. A phase is \emph{nonterminal} if its last inner iteration enters Event~$1$, $5$, or $6$; it is a \emph{terminal phase} if the outer loop terminates during it, either because a balanced-flow computation at the start of the outer-loop iteration yields zero surplus on every buyer, or because an Event~$2$ inner iteration yields zero surplus on every buyer. In the latter case the terminal phase has no closing Event~$1$, $5$, or $6$, and is charged as a single additional phase for the purposes of \cref{prop:phase-number-bound}. We subdivide Event~6 into Event~6a, in which a buyer is removed, and Event~$6b$, in which we retain the buyer and only partially return her budget. This leads to the following four types of nonterminal phases:

\begin{description}
    \item[Type 1:] A phase that ends with an inner iteration entering Event~1, during which the supply of some good is increased.
    \item[Type 5:] A phase that ends with an inner iteration entering Event~5 during which a nonempty subset of goods goes tight.
    \item[Type 6a:] A phase that ends with an inner iteration entering Event~6 during which a buyer is removed.
    \item[Type 6b:] A phase that ends with an inner iteration entering Event~6 during which a buyer is not removed but her budget is reduced.
\end{description}

In the fixed-supply Arctic auction of \citet{arctic_markets_production}, source capacities are fixed multiples of current prices and buyer budgets remain fixed until exit. Here, source capacities are products of prices and endogenous supplies, supplies may jump at marginal-cost breakpoints, and Event~$6b$ can partially refund a buyer rather than remove her. Accordingly, the bit-complexity step tracks new rationality certificates, while the potential-decrease step handles the new event types.

The proof has four parts.

\subsection{Checkpoint Bit Complexity}
\textbf{Step 1: potential checkpoints.}
We first lower-bound every nonzero value of the potential at the balanced-flow checkpoints where the algorithm evaluates it. This requires a new certificate of rationality for the current prices, source capacities, and active budgets: in addition to mbpb equations and breakpoint anchors, the certificate must keep historical tight-set equations and the cut equations created by partial refunds.

Let $m_i^{\mathrm{init}}$ denote buyer $i$'s initial budget, let $M = \sum_{i \in B} m_i^{\mathrm{init}}$ denote the total budget of all buyers at the beginning of the algorithm, let $h=m+n$, and let $L$ be the maximum bit length of the numerator or denominator of any input rational among $\ub$, $\mb$, $\bb$, and $\db$.

Since current budgets decrease monotonically over the course of the algorithm, every feasible flow $f$ satisfies
\[
0 \leq f(i,t) \leq m_i \leq m_i^{\mathrm{init}}
\qquad \text{for every } i \in B.
\]
Hence, we get
\[
0 \leq \Phi(f, \mb) \leq \sum_{i \in B} (m_i^{\mathrm{init}})^2 \leq M^2.
\]

We next record the bit-complexity invariant used to lower-bound nonzero potential values. A \emph{balanced-flow checkpoint} is the algorithmic state immediately after one of the balanced-flow computations performed by the algorithm, either at the start of an outer-loop iteration or after an Event~$2$ update. These are exactly the states at which the running-time proof evaluates $\Phi$.

The bookkeeping is twofold: source capacities are the money values $\sigma_j=p_jt_j$ of the current supplies, and Event~$6b$ can replace an original buyer-budget equation by a cut equation fixing the buyer's current budget. In the proof it is cleaner to certify prices first; the desired bound for $\sigma_j$ then follows because every current supply $t_j$ is an input breakpoint.

\begin{lemma}
\label{lemma:checkpoint-bit-complexity}
At every balanced-flow checkpoint, each current source capacity $\sigma_j \coloneqq p_jt_j$ and each current active buyer budget $m_i$ has numerator and denominator bit length
\[
    O((m+n)^2(L+\log(m+n))).
\]
Moreover, all current source capacities and active buyer budgets can be written with a common denominator of the same asymptotic bit length.
\end{lemma}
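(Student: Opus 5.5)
The plan is to certify the current state at any balanced-flow checkpoint as the unique solution of a square, nonsingular linear system with small integer-scaled coefficients, and then apply Cramer's rule together with Hadamard's bound. The unknowns are the current prices $p_j$ for $j\in G$ and the current active budgets $m_i$ for $i\in B$, so there are at most $h=m+n$ of them. The source capacities $\sigma_j=p_jt_j$ then follow immediately, since each current supply $t_j$ is an input breakpoint $b_{j,\seg_j}$. Multiplying $p_j$ by a rational of bit length $L$ adds only $O(L)$ bits to the numerator and denominator. Since all $t_j$ are inputs, a common denominator for the $\sigma_j$ and $m_i$ is obtained from the Cramer determinant times the product of the breakpoint denominators. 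I expect to absorb this product into the stated bound after first scaling all inputs to integers, which costs a factor $O(h)$ in $L$.

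The certificate is built by induction over the history of the algorithm, maintaining a set of at most $h$ independent equations that determine all current prices and budgets. The equations are of four types.
\begin{itemize}
\item \emph{Breakpoint anchors} $p_j=d_{jk}$. These arise at initialization, and whenever Event~1 places a good at a breakpoint; at that moment the whole phase set $J$ is determined relative to that good.
\item \emph{Mbpb equations} $u_{ij}p_{j'}=u_{ij'}p_j$, linking prices of goods that share a buyer in a common component.
\item \emph{Historical tight-set equations} $\sum_{j\in S}p_jt_j=\sum_{i\in\Gamma(S)}m_i$, recorded at Event~5.
\item \emph{Event~6 equations.} Event~6 fixes its phase's prices through $u_{ij}=p_j$ for the indifferent buyer, and Event~6b also records the cut equation $m_i=\sum_{j\in S}p_jt_j-\sum_{i'\in T\setminus\{i\}}m_{i'}$.
\end{itemize}
Untouched buyers contribute $m_i=m_i^{\mathrm{init}}$.

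The key structural claim is that at every checkpoint the goods split into components whose prices are all related by mbpb ratios. This holds because prices within a phase set $J$ scale by a common factor, so ratios are preserved. Each component's scale is pinned by exactly one equation, namely the last event that stopped its scaling: an anchor, a tight set, or an Event~6 equation. I would show that replacing old equations by new ones preserves nonsingularity by ordering the system triangularly. Budgets are determined in the order the refunds occurred, and prices are determined component by component. I expect this to be the main obstacle. A tight-set equation mixes prices with budgets, and a cut equation mixes budgets with prices of goods in $S$. These are possibly from older components whose equations were later superseded, so the justification must use the fact that the goods in $S$ keep their prices fixed until they are rescaled, at which point a fresh pinning equation replaces the stale one. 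If the triangular argument fails, the fallback is to show directly that the current values are the unique solution of the full system, by showing that the values of the current state satisfy it and that its kernel is trivial.

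Granting the certificate, each equation has integer coefficients of bit length $O(hL)$ after clearing denominators. The coefficients are products of at most two inputs, or sums of at most $h$ terms $p_jt_j$ once the $t_j$ are scaled by a common denominator of bit length $O(nL)$. Hadamard's bound then gives determinants of bit length $O(h\cdot(hL+\log h))=O(h^2(L+\log h))$. Cramer's rule writes every unknown with the common denominator $\det A$, and after multiplying by the breakpoints, the $\sigma_j$ and $m_i$ share a common denominator of the claimed size, which proves the lemma.
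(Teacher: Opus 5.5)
Your proposal follows essentially the same route as the paper's proof. Both use the same certificate: breakpoint anchors, mbpb relations, historical tight-set equations (with supply coefficients frozen at the time they are recorded), the Event~6 indifference condition, and Event~6b cut equations replacing $m_i=m_i^{\mathrm{init}}$. Both then apply Cramer's rule with Hadamard's bound and finish by multiplying by the input breakpoints $t_j=b_{j,\seg_j}$. Cosmetically, you eliminate the paper's auxiliary variables $\lambda_i=1/\alpha_i$ by writing mbpb relations as price ratios and indifference as $p_j=u_{ij}$.

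For the step you flag as the main obstacle, you propose a triangular ordering, whereas the paper maintains the certificate by a rank induction, which is your fallback. During a scaling step, the retained equations leave only the common scaling factor of the phase component free. Each event contributes an equation not implied by the retained ones, since otherwise the event would already have held before the step. The Event~6b cut equation enters as a pivot with coefficient $1$ on $m_i$.
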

\begin{proof}
The strategy is to certify the current checkpoint by a small full-rank linear system. We certify prices and active budgets, using auxiliary variables $\lambda_i=1/\alpha_i$ for the active buyers, and recover source capacities only at the end from $\sigma_j=p_jt_j$.

At a checkpoint, the variables are the current prices $p_j$, the current active budgets $m_i$, and the variables $\lambda_i$. The certificate is a maximal independent subsystem of equations of the following forms:
\begin{itemize}
    \item for each edge $(j,i)$ of a spanning forest of the current mbpb graph, the equation
    \[
        p_j=u_{ij}\lambda_i;
    \]
    \item anchor equations $\lambda_i=1$ for buyers that have become indifferent, and $p_j=d_{j,\seg_j}$ for goods pinned at a marginal-cost breakpoint;
    \item tight-set equations
    \[
        \sum_{j\in S}t^S_jp_j=\sum_{i\in T^S}m_i;
    \]
    \item for each buyer, one budget equation: either $m_i=m_i^{\mathrm{init}}$ or the most recent Event~$6b$ cut equation
    \[
        m_i=\sum_{j\in S_i}t^{S_i}_jp_j-\sum_{i'\in T_i\setminus\{i\}}m_{i'} .
    \]
\end{itemize}
Here $t^S_j$ denotes the supply of good $j$ at the time a tight-set equation for $S$ was revealed, and $T^S$ denotes the buyer set on the right-hand side of that equation. Similarly, $t^{S_i}_j$ and $T_i$ refer to the supply values and buyer set in the most recent Event~$6b$ cut equation for buyer $i$. Thus every such supply coefficient is one of the input breakpoints $b_{jk}$. These are historical certificate equations: when such an equation is retained, its coefficients remain the supply values from the time of revelation. Later Event~$1$ updates may change the current supply of a good, but they do not change prices or budgets, so any retained historical equation remains a valid linear equation for the current price and budget variables.

We maintain the invariant that, after redundant equations are discarded, the retained subsystem has the current values of $p_j$, $m_i$, and $\lambda_i$ as its unique solution. Initially this is immediate from the anchors $p_j=d_{j1}$, the original-budget equations, and one mbpb equation for each active buyer. Now suppose the invariant holds at the preceding checkpoint. During one execution of the inner loop, before the next event is reached, the retained equations continue to determine all variables except for the single common scaling factor of the current phase component. The event reached by this scaling step supplies exactly the equation that fixes this remaining freedom. Event~$2$ adds a newly tight mbpb equation for an edge from a good outside the old phase set to a buyer inside it. This edge need not be a forest edge in the full mbpb graph after the restoration; if it lies on a cycle, we simply discard redundant mbpb equations and keep a maximal independent forest subsystem. What matters is that the newly tight equality is not implied by the retained subsystem before the scaling step: if it were, it would already have held before the step began, contradicting the definition of Event~$2$. Event~$5$ adds the tight-set equation for a set that has just become tight, and Event~$6$ adds the anchor $\lambda_i=1$. The same independence argument applies to these equations: if the corresponding tight-set equation or indifference condition were already implied, the event would have held before this scaling step began. In Event~$6b$, the cut equation above then replaces the previous budget equation for buyer $i$; its coefficient on $m_i$ is $1$, so this is a rank-preserving pivot that fixes the buyer's new budget. Event~$6a$ only deletes a buyer and the equations used to determine her variables, so it preserves uniqueness for the remaining variables.

It remains to check Event~$1$. Event~$1$ changes supplies but not prices or budgets. If $j$ reaches the next marginal-cost breakpoint, then after the segment update
\[
    p_j=d_{j,\seg_j}.
\]
We therefore add this price anchor to the certificate, discarding redundant equations if necessary. Since the current values of $p_j$, $m_i$, and $\lambda_i$ do not change during the supply update, all previously retained historical equations remain valid. Thus the full-rank certificate invariant is preserved across Event~$1$ as well.

All coefficients in the certificate are $0$, $\pm1$, or products of at most two input rationals from $\ub$, $\bb$, and $\db$; the right-hand sides are input rationals or zero. Now clear denominators row by row in the retained subsystem. A row may contain $O(m+n)$ rational coefficients, so after clearing denominators the entries have bit length $O((m+n)L)$. The resulting integer system has at most $2m+n$ equations. Applying Cramer's rule and Hadamard's inequality \citep{Sch-book}, we see that every certified price $p_j$ and active budget $m_i$ has numerator and denominator bit length
\[
    O((m+n)^2(L+\log(m+n))).
\]
More precisely, the determinant of the row-cleared coefficient matrix is a common denominator for all certified prices and active budgets, and its bit length satisfies the same bound. Finally, $\sigma_j=p_jt_j$ and the current supply $t_j=b_{j,\seg_j}$ is an input rational of bit length at most $L$. Multiplying by $t_j$ increases individual numerator and denominator bit length by at most $O(L)$. Taking the product of the input denominators of the at most $n$ current supplies and multiplying it by the determinant above gives a common denominator for all current source capacities and active budgets, still with bit length $O((m+n)^2(L+\log(m+n)))$.
\end{proof}

Next, we derive a lower bound on the \emph{non-zero} values that $\Phi$ can take at the balanced-flow checkpoints where the algorithm evaluates the potential.
\begin{lemma}
\label{lemma:Phi-lower-bound}
At every state immediately after a balanced-flow computation, either $\Phi = 0$ or
\[
   \Phi \geq 2^{-O((m+n)^2(L + \log(m+n)))}.
\]
\end{lemma}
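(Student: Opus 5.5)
The plan is to combine \cref{lemma:checkpoint-bit-complexity} with the standard DPSV argument that a balanced flow's surpluses are rationals whose denominators are controlled by the network's capacities. First, I will show that at a balanced-flow checkpoint the balanced flow $f$ can be chosen rational, with every surplus $\gamma(f)_i$ having denominator bounded by $2^{O((m+n)^2(L+\log(m+n)))}$. Then, if $\Phi\neq 0$, some surplus satisfies $\gamma(f)_i\neq 0$. Its square is a positive rational with squared denominator, so it is at least $2^{-O((m+n)^2(L+\log(m+n)))}$, and $\Phi$ is at least this value.

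For the first step, I will use the structure of balanced flows from \citet{DPSV}. Partition the buyers by surplus value into classes $B_1,\dots,B_r$ with decreasing surpluses $\delta_1>\dots>\delta_r$. By \cref{lemma:balanced-flow-residual}, there is no residual path from a lower-surplus class to a higher-surplus class. Together with max-flow optimality, this lets me describe each class by a cut condition. Let $S_k$ be the goods that send flow to, or are adjacent in the residual sense to, buyers in $B_k$. All such goods have saturated source arcs, so the money $\sum_{j\in S_k}\sigma_j$ is split exactly among the buyers of $B_k$. Since all buyers in a class share the same surplus, I get
\[
\delta_k=\frac{1}{|B_k|}\Bigl(\sum_{i\in B_k} m_i-\sum_{j\in S_k}\sigma_j\Bigr).
\]
If some goods in $S_k$ are not saturated, then that class has surplus $0$ instead, because otherwise an augmenting path would exist. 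In either case each nonzero surplus is $(1/|B_k|)$ times an integer combination of current budgets and source capacities.

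For the second step, \cref{lemma:checkpoint-bit-complexity} says all $\sigma_j$ and $m_i$ share a common denominator $D$ of bit length $O((m+n)^2(L+\log(m+n)))$. Hence every nonzero $\delta_k$ has the form $a/(|B_k|D)$ with a nonzero integer $a$, so $|\delta_k|\ge 1/(mD)$. Squaring gives $\Phi\ge 1/(m^2D^2)=2^{-O((m+n)^2(L+\log(m+n)))}$.

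The main obstacle is justifying the class formula rigorously: showing that the flow into each surplus class comes exactly from a set of saturated goods whose flow goes only to that class. I plan to take $S_k$ as the goods reachable in $R_N(f)$ from $B_k$, or equivalently use the DPSV decomposition of $f$ into the max flows of nested subnetworks. Then I will argue that goods feeding $B_k$ send no flow to higher-surplus buyers, since such flow would create a forbidden residual path $i'\to g\to i$. Conversely, goods adjacent to $B_k$ send no flow to lower-surplus buyers, by the same lemma. With both directions, the flow into $B_k$ equals $\sum_{j\in S_k}\sigma_j$ whenever $\delta_k>0$, which is all the bound requires.
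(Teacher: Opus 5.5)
Your proposal is correct and is essentially the paper's argument: a common denominator $Q$ from \cref{lemma:checkpoint-bit-complexity}, equal-surplus classes whose nonzero surplus is (class budget minus class inflow)$/|B_k|$ with the inflow a multiple of $1/Q$, hence $\Phi\geq 1/(Qm)^2$. The paper takes the class identity from the block decomposition of the DPSV balanced-flow algorithm, whereas you derive it directly from \cref{lemma:balanced-flow-residual} and max-flow optimality; this has the mild advantage of holding for any balanced flow, however it is computed.

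When you write it up, take $S_k$ to be exactly the goods sending positive flow into $B_k$. Neither $\Gamma(B_k)$ nor the set of goods residually reachable from $B_k$ works, because zero-flow mbpb arcs let these sets pick up goods whose flow goes to other classes. Also, the forbidden path showing that such a good $g$ feeds no higher-surplus buyer $i'$ is $i\to g\to i'$ with $i\in B_k$, not $i'\to g\to i$.
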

\begin{proof}
We only need to consider states immediately after a balanced-flow computation, because these are exactly the states at which our running time analysis evaluates $\Phi$. By \cref{lemma:checkpoint-bit-complexity}, all source capacities and active buyer budgets at such a state can be written with a common denominator of bit length $O((m+n)^2(L+\log(m+n)))$.

Now consider the balanced-flow computation itself. After multiplying all source capacities and buyer budgets by a common denominator
\[
    Q \le 2^{O((m+n)^2(L + \log(m+n)))},
\]
the corresponding max-flow instances used by the balanced-flow subroutine have integral capacities. The DPSV balanced-flow algorithm partitions buyers into equal-surplus blocks and computes a max flow on each block's subnetwork. Since those subnetworks have integral capacities after scaling by $Q$, the max flow on each block can be chosen to be integral (after scaling), so the total flow into each block is an integer multiple of $1/Q$. If buyer $i$ lies in a block of size $k$, then her surplus equals the block's total budget minus its total flow, divided by $k$. Thus $\gamma_i$ is an integer multiple of $1/(Qk)$ for some $k \leq m$.

If $\Phi > 0$, then at least one buyer has non-zero surplus, say $\gamma_i \neq 0$, and therefore $|\gamma_i| \ge 1/(Qm)$. It follows that
\[
    \Phi = \sum_{i \in B} \gamma_i^2 \ge \gamma_i^2 \ge \frac{1}{(Qm)^2}
    = 2^{-O((m+n)^2(L + \log(m+n)))},
\]
where the last step uses $\log(Qm) = O((m+n)^2(L+\log(m+n)))$ and $\log m = O(\log(m+n))$.
\end{proof}

\subsection{Potential Decrease Across Phases}
\label{section:bounding-phases}
\textbf{Step 2: reset and decrease phases.}
The phases that do not fit a monotone potential-decrease argument are controlled directly: a Type~1 phase advances a supply segment, and a Type~$6a$ phase removes a buyer.

\begin{observation}
\label{obs:finite-phases}
There are at most $\sum_{j \in G} K_j$ phases of Type $1$ (which increases the supply of a good), and at most $m$ phases of Type $6a$ (which removes a buyer).
\end{observation}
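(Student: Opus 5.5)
The plan is to bound each phase type by a monotone counter that only moves in one direction and whose range is bounded by the input.

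For Type~1 phases, the counter is $\sum_{j\in G}\seg_j$. This is initialized to $n$, and Event~1 only ever increases it. Concretely, a Type~1 phase ends with an Event~1 iteration whose set $H$ is nonempty, and every $j\in H$ has $\seg_j<K_j$ and is moved to $\seg_j+1$. No other event touches $\seg_j$. Moreover, $\seg_j$ never decreases, because prices only increase: the algorithm scales by $\theta\ge 1$, and segments are advanced only when the price reaches $d_{j,\seg_j+1}$. Since each $\seg_j\le K_j$, the total number of increments is at most $\sum_j(K_j-1)\le\sum_j K_j$. Each Type~1 phase contributes at least one increment, so there are at most $\sum_{j\in G}K_j$ such phases.

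For Type~6a phases, the counter is the active buyer set $B$. This set only shrinks. It starts as a subset of $\widehat B$ with $|\widehat B|=m$, and no step of the algorithm ever re-inserts a buyer: Events~4 and~6a delete, and the initialization deletes. A Type~6a phase ends with the removal of a buyer $i\in I\subseteq B$, which reduces $|B|$ by one. Hence there are at most $m$ such phases.

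The only point needing care is that these counters are never reset. For prices this follows from the scaling factor being at least $1$ throughout. For buyers it follows from the fact that the rebuilt network at the start of each outer iteration is constructed on the current $B$, never on $\widehat B$. I expect no real obstacle here; the argument is pure bookkeeping.
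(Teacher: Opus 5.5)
Your proof is correct and matches the counting the paper intends. The paper gives no separate proof of this observation and justifies it only through the parentheticals: each Type~1 phase advances at least one segment index $\seg_j<K_j$, so there are at most $\sum_{j}(K_j-1)\le\sum_j K_j$ such phases, and each Type~6a phase permanently removes a buyer from the shrinking active set $B$, so there are at most $m$. One small point: $\seg_j$ never decreases simply because Event~1 is the only step that modifies it, and it only increments, so your appeal to increasing prices is not needed.
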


All remaining nonterminal phases end with Event~5 or Event~$6b$. Their analysis is local to intervals in which the actual supplies on the active phase are fixed. On such intervals the balanced-flow surplus-reduction mechanism from \citet{arctic_markets_production} can still be used, but only after isolating the restricted phase network and checking that the extra operations of our algorithm preserve the relevant comparison. Event~2 restores the phase component by adding residually reachable buyers and deleting only zero-flow arcs; Event~$6b$ changes the budget vector and must be compared through an explicit post-refund feasible flow. The next three lemmas make these adaptations precise.

In the Type~5 and Type~$6b$ proofs below, we compare balanced-flow checkpoints with unchanged buyer budgets. Between two such checkpoints, the later network is obtained from the earlier one by increasing source capacities on currently scaled goods and adding mbpb arcs. After the later balanced flow is computed, the algorithm may delete arcs, but only arcs carrying zero flow in that later balanced flow. Events~3 and~4 are accounted for separately: Event~3 only adds an mbpb arc incident to an isolated buyer outside the phase component, while Event~4 removes an isolated buyer, leaving the flow on all remaining buyers and goods unchanged and deleting that buyer's nonnegative contribution to $\Phi$.

\begin{lemma}
\label{lemma:one-iteration-surplus-drop}
Let $f$ and $f'$ be earlier and later balanced flows computed within a Type~5 or Type~$6b$ phase, before any Event~$6b$ refund. Then $\Phi(f',\mb)\leq \Phi(f,\mb)$. Moreover, if a buyer $i$ present at both computations satisfies $\gamma(f')_i=\gamma(f)_i-\sigma$ for some $\sigma>0$, then $\Phi(f',\mb)\leq \Phi(f,\mb)-\sigma^2$.
\end{lemma}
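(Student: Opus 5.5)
We follow the DPSV potential argument, as adapted in \citet{arctic_markets_production}. The key step is a feasibility transfer: the earlier balanced flow $f$, or a suitable modification of it, is still a feasible flow in the later network, and the later flow $f'$ is balanced there. Since $\mb$ is unchanged, both flows are compared against the same budget vector.

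First we check that $f$ remains feasible at the later checkpoint. Between the two computations the only changes are of three kinds.
\begin{itemize}
\item Source capacities $p_jt_j$ on goods of the current $J$ increase, since prices are scaled by $\theta\ge 1$ and supplies are fixed within the phase. Type~1 phases are excluded, so Event~1 does not intervene.
\item New mbpb arcs are added by Events~2 and~3.
\item Arcs are deleted, but only arcs from $J$ to $B\setminus I$. By \cref{lemma:event2-restoration} and the start-of-phase argument in \cref{lemma:no-tight-subset}, these carry zero flow in the balanced flow computed just before the deletion.
\end{itemize}
Scaled goods keep their mbpb arcs to buyers in $I$, because all prices in $J$ are multiplied by the same factor. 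Arcs from $J$ to buyers outside $I$ are precisely the deleted zero-flow arcs. Event~4 removes an isolated buyer in $Z$ who carries no flow; removing her keeps the remaining flow feasible and only drops a nonnegative term from $\Phi$.

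We then argue by induction over the intermediate balanced-flow checkpoints and chain the inequalities. If a removal occurs, we compare on the remaining buyers and note that $\Phi$ only decreases. At each step the previous flow is feasible in the next network, and the source capacities only grow. Hence the max-flow value does not decrease, and we can augment the previous flow to a max flow $g$ without decreasing any buyer's inflow, since augmenting paths never reduce flow on buyer-to-sink arcs. The surpluses therefore satisfy $\gamma(g)\le\gamma(f)$ componentwise. The balanced flow $f'$ minimizes $\|\gamma\|_2^2$ among max flows, so $\Phi(f')\le\Phi(g)\le\Phi(f)$.

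For the quantitative part we use the standard DPSV inequality. Among max flows with a fixed total value, the balanced flow minimizes $\|\gamma\|^2$ over a convex set containing $\gamma(g)$, and $\gamma(f)\ge\gamma(g)$. The first-order optimality condition gives $\langle\gamma(f)-\gamma(f'),\gamma(f')\rangle\ge 0$. Expanding then yields
\[
\|\gamma(f)\|^2\ \ge\ \|\gamma(f')\|^2+\|\gamma(f)-\gamma(f')\|^2\ \ge\ \Phi(f')+\sigma^2 .
\]

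The main obstacle is the variational inequality in the last step. Comparing $f$ (through $g$) with $f'$ requires $\gamma(g)$ to lie in the feasible surplus set over which $f'$ is optimal. That set consists of the max flows of the later network, and $f$ itself is not a max flow there. The plan is therefore to apply the inequality with $\gamma(g)$ in place of $\gamma(f)$, obtaining $\langle\gamma(g)-\gamma(f'),\gamma(f')\rangle\ge0$. We then use $\gamma(f)\ge\gamma(g)\ge 0$ and $\gamma(f')\ge0$. This gives $\langle\gamma(f)-\gamma(f'),\gamma(f')\rangle\ge\langle\gamma(g)-\gamma(f'),\gamma(f')\rangle\ge 0$, and hence the displayed bound with $\gamma(f)$ itself. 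The nonnegativity of surpluses is essential here.
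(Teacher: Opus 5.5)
Your proposal is correct and follows the same route as the paper. You show that the earlier balanced flow, with the Event~4 buyers (who carry no flow) dropped, stays feasible in the later network: source capacities on $J$ only grow, mbpb arcs are only added, and the arc deletions after each balanced-flow computation remove only zero-flow arcs. You then apply the DPSV surplus-drop inequality.

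The only difference is that you reprove that inequality inline, where the paper simply cites DPSV's Lemma~8.3. Your version augments $f$ to a max flow $g$ with $\gamma(g)\le\gamma(f)$, then combines the projection inequality for $f'$ with $\gamma(f')\ge 0$. As in the paper's argument, your quantitative step really compares consecutive checkpoints, and that is exactly how the lemma is invoked later.
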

\begin{proof}
Along this part of the phase, the only possible buyer-budget change would be an Event~$6b$ refund, which is excluded. Event~4 may remove buyers, but each such buyer lies in $Z$ and is isolated, so she receives zero flow. Removing her and her incident zero-flow arcs deletes the nonnegative term $m_i^2$ from the potential and does not affect feasibility of the earlier flow on the remaining network. Apart from these harmless removals, the later network is obtained from the earlier one by weakly increasing source capacities on scaled goods and adding mbpb arcs; Event~3 is of this form, since it only adds an mbpb arc incident to a buyer in $Z$. Hence the earlier balanced flow, restricted to the buyers still present, remains feasible in the later network before any zero-flow deletion following $f'$. The standard DPSV surplus-drop lemma \citep[Lemma~8.3]{DPSV} therefore applies to the two balanced flows on the common remaining buyers. It gives both monotonicity of the squared surplus norm and the stated $\sigma^2$ improvement when a buyer present at both computations has surplus decrease $\sigma$. If zero-flow arcs are deleted after $f'$ is computed, then $f'$ remains feasible in the restored network by construction.
\end{proof}

\begin{lemma}
\label{lemma:phase-5-reduction}
During a phase of Type~5 (in which a nonempty set goes tight), the potential $\Phi$ is reduced to at most a $(1 - \frac{1}{h^3})$ fraction of its previous value.
\end{lemma}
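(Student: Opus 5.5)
We follow the DPSV argument for phases ending with a tight set, adapted to the restricted phase component.

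\textbf{Setup.} Let $f_0$ be the balanced flow computed at the start of the phase. Let $\delta>0$ be its maximal surplus, attained on $I_0=\argmax_i\gamma(f_0)_i$. Let $f_1$ be the balanced flow of the rebuilt full network at the start of the next outer-loop iteration. In a Type~5 phase no budget changes: Event~6b never occurs, and Event~4 only removes isolated buyers, which only lowers $\Phi$. So we can compare $f_0$, the intermediate Event~2 balanced flows, and $f_1$ through \cref{lemma:one-iteration-surplus-drop}. Monotonicity gives $\Phi(f_1,\mb)\le\Phi(f_0,\mb)$ along the chain. It remains to show that some buyer present at both ends has surplus drop $\sigma\ge\delta/h$, or an equivalent aggregate bound.

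\textbf{Surplus drop.} Let $S\subseteq J$ be the set that goes tight at the end of the phase, and let $I$, $J$ be the final phase sets. Since arcs from $J$ to $B\setminus I$ are deleted, $\Gamma(S)\subseteq I$. At the moment of tightness, $\sum_{j\in S}p_jt_j\ge\sum_{i\in\Gamma(S)}m_i$. The rebuilt network can therefore route flow saturating all buyers in $\Gamma(S)$: take the current restricted flow and push along augmenting paths inside the phase component, which has no tight subset smaller than $S$ by the choice of $\theta_5$ as the first tightening. Hence some flow in the new network gives zero surplus to every buyer of $\Gamma(S)$. Every buyer of $I$ had surplus at least that of $I_0$ after the Event~2 restorations, because residual reachability is preserved by \cref{lemma:balanced-flow-residual}. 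Those surpluses were at least $\delta$ at the start, as in DPSV. The total surplus on $I$ decreases by at least $\delta$, so some buyer $i$ has $\gamma(f_1)_i\le\gamma_i-\delta/|I|$. This is the step we expect to be the main obstacle. The issue is that Event~2 restorations can enlarge $I$ with buyers whose surplus is below $\delta$, so we would use instead the weaker claim that the average surplus of $I$ drops by $\delta/h$. For that we would invoke the full DPSV Lemma~8.5 argument: the balanced flow $f_1$ must push the surplus vector toward equality, and the $\ell_2$ decrease is at least $\|\gamma\|_\infty^2/h^2$ times a $1/h$ factor.

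\textbf{Conclude.} Apply \cref{lemma:one-iteration-surplus-drop} with $\sigma\ge\delta/h^{1}$ (or $\delta/h$ in aggregate over $I$). This gives $\Phi(f_1,\mb)\le\Phi(f_0,\mb)-\delta^2/h^2$. Since $\Phi(f_0,\mb)\le m\delta^2\le h\delta^2$, we get $\Phi(f_1,\mb)\le(1-1/h^3)\Phi(f_0,\mb)$. Any lost factor of $h$ from the average-versus-single-buyer issue is absorbed by the $h^3$ in the statement.
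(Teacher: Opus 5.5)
Your outline has the right shape: start at surplus $\delta$, end at a tight set, extract a single-buyer drop of $\delta/h$, apply \cref{lemma:one-iteration-surplus-drop}, and divide by $\Phi_{\mathrm{before}}\le h\delta^2$. But the central step is never proved, as you acknowledge yourself. Your fallback, an unspecified ``full DPSV Lemma~8.5 argument'' giving an aggregate or average decrease, does not fit the hypothesis of \cref{lemma:one-iteration-surplus-drop}. That lemma needs \emph{one} buyer, present at two balanced-flow computations, whose \emph{balanced-flow} surplus drops by $\sigma$. Likewise, ``some flow in the new network gives zero surplus to every buyer of $\Gamma(S)$'' is about an arbitrary flow, not about the balanced flows whose surpluses you must compare.

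You also compare against the wrong network. You pass directly to $f_1$ in the rebuilt full network. Once the arcs from $J$ to $B\setminus I$ are restored, $\Gamma(S)$ grows and $S$ is generally no longer tight, so nothing forces $f_1$ to zero out a buyer of $\Gamma(S)$. Finally, the claim that a lost factor of $h$ is absorbed is false. With $\Phi_{\mathrm{before}}\le h\delta^2$, a decrease of exactly $\delta^2/h^2$ is needed to reach $1-1/h^3$; a drop of only $\delta/h^2$ would give $1-1/h^5$.

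The missing idea is to track the minimum surplus over the \emph{current} phase set at the successive checkpoints $f_0,f_1,\dots,f_k$, ending at a balanced flow $f^+$ of the \emph{restricted} network at the phase-ending prices. Since each Event~2 enlarges $J$, $k\le n$.
\begin{itemize}
\item At $f_0$ this minimum is $\delta$.
\item At $f^+$ it is $0$. If every buyer of $\Gamma(S)\subseteq I_k$ had positive surplus, every source arc into $S$ would be saturated, since otherwise $s\to g\to i\to t$ augments. That forces inflow at least $\sum_{j\in S}p_jt_j\ge\sum_{i\in\Gamma(S)}m_i$ into $\Gamma(S)$, a contradiction.
\item Pigeonhole over at most $n+1\le h$ intervals gives one interval where the minimum drops by at least $\delta/h$.
\end{itemize}
Your worry about buyers added by Event~2 is resolved by comparing at the \emph{same} checkpoint rather than against $\delta$. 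Such a buyer has a residual path into the previous phase set. By \cref{lemma:balanced-flow-residual}, her surplus is at least that of some previous phase buyer in the same balanced flow. So the new minimum is attained on the old phase set, and the buyer attaining it was present at the start of the interval and dropped by at least $\delta/h$.

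Now apply \cref{lemma:one-iteration-surplus-drop} on that interval, with monotonicity on the other intervals, to get $\Phi(f^+,\mb)\le\Phi_{\mathrm{before}}-\delta^2/h^2$. One more application across the rebuild gives $\Phi_{\mathrm{after}}\le\Phi(f^+,\mb)$, which completes the bound.
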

\begin{proof}
Let $f_0$ be the balanced flow at the start of a Type~5 phase, and let $\delta=\max_{i\in B}\gamma(f_0)_i$. By definition, a Type~5 phase is a nonterminal phase ending with Event~5; hence the initial balanced flow did not have zero surplus on every buyer, and $\delta>0$. Let $I_0=\argmax_{i\in B}\gamma(f_0)_i$ and $J_0=\Gamma(I_0)$.

During the phase, the algorithm scales the goods in the current set $J$. If Event~2 occurs, \cref{lemma:event2-restoration} shows that the restored phase component has zero-flow deletions and that $J$ strictly grows. Let $(f_0,I_0,J_0),(f_1,I_1,J_1),\ldots,(f_k,I_k,J_k)$ be the balanced-flow checkpoints in this phase, where $f_k$ is the checkpoint immediately before the terminal scaling interval. Event~2 strictly enlarges $J$, so $k\leq n$.

Let $N^+$ be the restricted network at the phase-ending prices, just before the Event~5 phase break, and let $f^+$ be a balanced flow of $N^+$ with the same budgets. At this moment some nonempty $S\subseteq J_k$ satisfies
\[
    \sum_{j\in S}p_jt_j\geq \sum_{i\in \Gamma(S)}m_i,
\]
where $\Gamma(S)\subseteq I_k$ because all arcs from goods in $J_k$ to buyers outside $I_k$ have been deleted. Since every good in $J_k$ is adjacent to a buyer in $I_k$, the set $\Gamma(S)$ is nonempty. We claim that at least one buyer in $\Gamma(S)$ has zero surplus under $f^+$. Otherwise every buyer in $\Gamma(S)$ has residual capacity on her sink arc. If some source arc $(s,g)$ with $g\in S$ had residual capacity, then choosing any buyer $i\in\Gamma(\{g\})$ would give an augmenting path $s\to g\to i\to t$, contradicting maximality of $f^+$. Hence all source arcs into $S$ are saturated. All flow leaving $S$ enters buyers in $\Gamma(S)$, so the total inflow to $\Gamma(S)$ is at least $\sum_{j\in S}p_jt_j\geq\sum_{i\in\Gamma(S)}m_i$, contradicting the assumption that every buyer in $\Gamma(S)$ has positive surplus. Thus the minimum surplus on the current phase set falls from $\delta$ at $f_0$ to $0$ at $f^+$ over at most $k+1\leq n+1\leq h$ same-budget intervals.

Therefore some interval has a drop of at least $\delta/h$ in the minimum surplus of the current phase set. If the buyer attaining the lower minimum at the end of such an interval was newly added by an Event~2 restoration, then she has a residual path to the previous phase set. By \cref{lemma:balanced-flow-residual}, either her surplus is at least the surplus of a buyer in the previous phase set at the same balanced flow, or that previous buyer has already dropped by at least as much. Hence some buyer already present at the start of the interval has surplus decrease at least $\delta/h$. Applying \cref{lemma:one-iteration-surplus-drop} to that interval, and using monotonicity from the same lemma on all other same-budget intervals, gives $\Phi(f^+,\mb) \leq \Phi_{\mathrm{before}}-\delta^2/h^2$.

The next outer-loop checkpoint is computed after rebuilding the full expenditure network, which only restores mbpb arcs. Applying \cref{lemma:one-iteration-surplus-drop} once more gives $\Phi_{\mathrm{after}}\leq \Phi(f^+,\mb)$. Since every buyer surplus is at most $\delta$ at the beginning of the phase and there are at most $m\leq h$ buyers, $\Phi_{\mathrm{before}}\leq h\delta^2$. Combining gives
\[
    \Phi_{\mathrm{after}} \leq \Phi_{\mathrm{before}}\!\left(1-\frac{1}{h^3}\right).
\]
\end{proof}

\begin{lemma}
\label{lemma:phase-6b-reduction}
During a phase of Type~$6b$ (in which a buyer is not removed but refunded part of her budget), the potential $\Phi$ is reduced to at most a $(1 - \frac{1}{4h^3})$ fraction of its previous value.
\end{lemma}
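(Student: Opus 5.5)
The plan is to mirror the proof of \cref{lemma:phase-5-reduction}, with one extra step: a comparison across the Event~$6b$ refund, which changes the budget vector. Let $f_0$ be the balanced flow at the start of the Type~$6b$ phase, $\delta=\max_{i}\gamma(f_0)_i>0$, and $(f_0,I_0,J_0),\ldots,(f_k,I_k,J_k)$ the balanced-flow checkpoints before the terminal scaling interval. As before, $k\le n$ because Event~2 strictly enlarges $J$ (\cref{lemma:event2-restoration}). By \cref{lemma:one-iteration-surplus-drop}, $\Phi$ is monotone non-increasing across these same-budget intervals, and Events~3 and~4 only help. So the question is how much the refund of buyer $i$ from $m_i$ to $m_i'$ lowers the potential.

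Let $N^+$ be the restricted phase network at the Event~6 prices with the old budgets. Take the maximal source-side min-cut $(S^*,T^*)$ of $N_{I,J}^{(i,0)}$. By construction, restoring buyer $i$'s sink capacity to $m_i'$ makes the source-side cut a min-cut of the phase component, and \cref{lemma:no-tight-subset} gives $m_i'<m_i$. Hence after the refund there is a max flow $g$ of the phase component that saturates every source arc into $J$ and gives buyer $i$ exactly $m_i'$. The maximality of the min-cut should force the buyers in $T^*$ to be saturated, so in particular $\gamma(g)_i=0$.

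I would then split into two cases according to how the old surplus was reduced.

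\textbf{Case (a): large drop before the refund.} Suppose some same-budget interval inside the phase (including the last scaling interval up to $N^+$) decreases the minimum surplus on the phase set by at least $\delta/(2h)$. Then the argument of \cref{lemma:phase-5-reduction} applies verbatim, via the residual-path/new-buyer argument and \cref{lemma:one-iteration-surplus-drop}. It gives a decrease of $\delta^2/(4h^2)$, and hence the factor $(1-\frac{1}{4h^3})$ using $\Phi_{\mathrm{before}}\le h\delta^2$.

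\textbf{Case (b): small drop before the refund.} Otherwise every buyer of the phase set, in particular buyer $i\in I$, still has surplus at least $\delta/2$ in the balanced flow $f^+$ of $N^+$. I then compare $\Phi(f^+,\mb)$ with the potential after the refund. The flow $g$ is feasible for the new budgets, and on buyers other than $i$ I will argue that surpluses are no larger than under $f^+$. The justification is that $g$ is a max flow of the same capacitated network except for a smaller cap on $i$, and total flow into $J$ is fully sent. Buyer $i$'s term drops from $\gamma(f^+)_i^2\ge\delta^2/4$ to $0$. Since the new balanced flow minimizes $\|\gamma\|_2^2$ among max flows, $\Phi$ after the refund is at most $\Phi(g,\mb')\le\Phi(f^+,\mb)-\delta^2/4$. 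Finally, rebuilding the full network at the next outer iteration only adds arcs, so \cref{lemma:one-iteration-surplus-drop} keeps this decrease.

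The main obstacle is the comparison step in Case (b). We need to show that the refund does not increase the other buyers' surpluses by redistributing flow; in other words, that a max flow with buyer $i$ capped at $m_i'$ can be chosen to dominate $f^+$ on $B\setminus\{i\}$. My first attempt would use the cut structure. Buyers outside $T^*$ lie on the sink side, so their flow can be held fixed. Inside $T^*$, all buyers are saturated in $g$, so their surplus is $0$, which is no more than under $f^+$. If this domination fails, I would fall back on a weaker bound, losing a factor of $2$ in each case, which is the source of the $4$ in the constant.
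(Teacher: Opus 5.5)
Your route is correct. Case~(a) matches the paper's first case, but Case~(b) argues differently and more directly.

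\textbf{How the paper handles Case~(b).} When every phase buyer keeps surplus above $\delta/2$, the paper imports the fixed-supply Event~6b analysis of \citet[Lemma~8]{arctic_markets_production}. This shows that the pre-refund drop $\Delta$ in buyer $i$'s surplus plus the refund $\rho$ exceeds $\delta/2$. The paper then splits into two subcases:
\begin{itemize}
\item $\Delta\ge\delta/4$, charged through \cref{lemma:one-iteration-surplus-drop};
\item $\rho\ge\delta/4$, charged through an imported flow $f_{\mathrm{pre}}$ that saturates $i$ and keeps the other surpluses of $f_k$.
\end{itemize}
Either way it gets a drop of $\delta^2/16$.

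\textbf{How you handle it.} You read everything off the Event~6 cut. After the refund, $\{s\}\cup S^*\cup T^*$ has capacity $(C-m_i')+m_i'=C$. This equals the capacity of the source-side cut, which is a min-cut of the post-refund phase component: use the min-cut property of $(S^*,T^*)$ in $N_{I,J}^{(i,0)}$ for sets $S$ with $i\in\Gamma(S)$, and \cref{lemma:no-tight-subset} for the rest. So $\{s\}\cup S^*\cup T^*$ is itself a min-cut, and every post-refund max flow saturates $(i,t)$. Maximality of the cut is needed only to guarantee $i\in T^*$.

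The refund therefore erases buyer $i$'s entire surplus $\gamma(f^+)_i>\delta/2$. That is a drop of $\delta^2/4$, with no external lemma, no $\Delta/\rho$ split, and a better constant in this case. The paper's route keeps the proof parallel to the costless analysis by reusing its Event~6b lemma, but it relies on that lemma as a black box, including the existence of $f_{\mathrm{pre}}$.

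\textbf{What needs tightening.}
\begin{itemize}
\item The domination step should not be argued by ``holding fixed'' the flows of buyers outside $T^*$. Under $f^+$, goods in $J\setminus S^*$ may send flow into $T^*$, which every post-refund max flow forbids. So those buyers generally gain flow rather than keep it.
\item The clean argument is the paper's elementary fact. Trim buyer $i$'s inflow in $f^+$ down to $\min\{f^+(i,t),m_i'\}$, reducing the incoming arcs and their source arcs. Then augment to a max flow. Each augmenting path raises only its terminal buyer's sink flow, so no other surplus increases. The result is a max flow, so by the cut argument above it saturates $i$.
\item The same construction gives $\Phi_{\mathrm{after}}\le\Phi(f^+,\mb)$, which is what carries your Case~(a) decrease across the refund. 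Your write-up leaves this implicit.
\item The proposed fallback is unnecessary. The $4$ in the constant comes from $(\delta/(2h))^2$ in Case~(a), not from losing a factor of $2$ in each case.
\end{itemize}
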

\begin{proof}
Use the same notation as in the previous proof. Let $\delta$ be the maximum buyer surplus at the start of the Type~$6b$ phase. The phase ends when a buyer $i\in I$ reaches $\alpha_i=1$ and the Event~6 min-cut computation reduces, but does not eliminate, her remaining budget. Let $\rho>0$ be the amount by which $i$'s budget is reduced, and write $m_i'=m_i-\rho$ for the post-refund budget. Let $f_k$ be the last balanced-flow checkpoint before the refund. Let $N^-$ be the same-budget network at the Event~6b prices, immediately before buyer $i$'s budget is reduced, and let $f^-$ be a balanced flow of $N^-$.

We will use the following elementary fact: in an expenditure network, if a feasible flow $g$ can be extended to a maximum flow, then it can be extended to one whose potential, computed with the same buyer budgets, is at most that of $g$. Indeed, augmenting along a residual $s$-$t$ path only increases the sink flow of the path's final buyer and leaves every other buyer's sink flow unchanged. Thus buyer surpluses only weakly decrease componentwise. Consequently, the balanced flow has potential no larger than the potential of any feasible flow in the same network.

We first record a comparison between $f^-$ and the post-refund instance. If $f^-(i,t)\leq m_i'$, then $f^-$ is feasible after the refund and buyer $i$'s surplus weakly decreases. If $f^-(i,t)>m_i'$, let $\eta=f^-(i,t)-m_i'$. Since flow conservation at $i$ gives $\sum_j f^-(j,i)=f^-(i,t)$, choose numbers $0\leq r_j\leq f^-(j,i)$ over incoming arcs $(j,i)$ whose sum is $\eta$. Reducing $f^-(i,t)$ by $\eta$, and reducing both $f^-(j,i)$ and $f^-(s,j)$ by $r_j$ for each such good $j$, gives a feasible post-refund flow: source feasibility follows from $f^-(s,j)\geq f^-(j,i)\geq r_j$. Buyer $i$ then has surplus zero, and every other buyer has the same surplus as under $f^-$. Thus, in all cases, the post-refund network has a feasible flow whose post-refund potential is at most $\Phi(f^-,\mb)$; by the preceding paragraph and balanced-flow optimality,
\begin{equation}
\label{eq:post-refund-comparison}
    \Phi_{\mathrm{after}}\leq \Phi(f^-,\mb).
\end{equation}

First suppose that, before this final refund, the minimum surplus on the current phase set falls to at most $\delta/2$ at one of the same-budget balanced flows up to and including $f^-$. The same averaging argument as in the Type~$5$ proof gives an interval in which an already-present phase buyer's surplus decreases by at least $\delta/(2h)$. By \cref{lemma:one-iteration-surplus-drop}, monotonicity on the remaining same-budget intervals, \eqref{eq:post-refund-comparison}, and $\Phi_{\mathrm{before}}\leq h\delta^2$,
\[
    \Phi_{\mathrm{after}}
    \leq \Phi_{\mathrm{before}}-\frac{\delta^2}{4h^2}
    \leq \Phi_{\mathrm{before}}\!\left(1-\frac{1}{4h^3}\right).
\]
This proves the claimed bound in this case.

It remains to consider the case in which every buyer in the current phase set has surplus greater than $\delta/2$ at all these same-budget balanced flows. The fixed-supply Event~6b analysis of \citet[Lemma~8]{arctic_markets_production}, applied to our restricted phase network $N_{I,J}^{(i,0)}$, shows that the sum of the pre-refund drop in buyer $i$'s surplus and the refund is greater than $\delta/2$. Write
\[
    \Delta=\gamma(f_k)_i-\gamma(f^-)_i .
\]
Then either $\Delta\geq\delta/4$ or $\rho\geq\delta/4$.

First suppose $\Delta\geq\delta/4$. By \cref{lemma:one-iteration-surplus-drop},
\[
    \Phi(f^-,\mb)\leq \Phi(f_k,\mb)-\delta^2/16
    \leq \Phi_{\mathrm{before}}-\delta^2/16 .
\]
Together with \eqref{eq:post-refund-comparison}, this gives $\Phi_{\mathrm{after}}\leq\Phi_{\mathrm{before}}-\delta^2/16$.

It remains to consider the case $\rho\geq\delta/4$. Let $f_{\mathrm{pre}}$ be the feasible flow from the fixed-supply analysis that saturates buyer $i$'s arc ($f_{\mathrm{pre}}(i,t)=m_i$) and leaves all other surpluses equal to $\gamma(f_k)_{i''}$. Since Event~$6b$ does not remove the buyer, $m_i'>0$, and hence $\rho<m_i=f_{\mathrm{pre}}(i,t)$. Choose numbers $0\leq r_j\leq f_{\mathrm{pre}}(j,i)$ over incoming arcs $(j,i)$ whose sum is $\rho$; this is possible because flow conservation at $i$ gives $\sum_j f_{\mathrm{pre}}(j,i)=m_i$. Form $\tilde f$ by reducing $f_{\mathrm{pre}}(i,t)$ by $\rho$, and by reducing both $f_{\mathrm{pre}}(j,i)$ and $f_{\mathrm{pre}}(s,j)$ by $r_j$ for each such good $j$. The source reductions are feasible because flow conservation at each good gives $f_{\mathrm{pre}}(s,j)\geq f_{\mathrm{pre}}(j,i)\geq r_j$. This preserves flow conservation at every buyer and good, respects all capacities, and gives a feasible flow for $N(\pb,\mb',\tb)$ with $\gamma(\tilde f)_i=0$ and $\gamma(\tilde f)_{i''}=\gamma(f_k)_{i''}$ for all $i''\neq i$. Since buyer $i$ belongs to the current phase set and $\gamma(f_k)_i>\delta/2$, the elementary fact above applied to $\tilde f$ gives
\[
    \Phi_{\mathrm{after}}=\Phi(f^*,\mb')
    \leq\Phi(\tilde f,\mb')
    =\Phi(f_k,\mb)-\gamma(f_k)_i^2
    \leq\Phi_{\mathrm{before}}-\tfrac{\delta^2}{4}
    \leq\Phi_{\mathrm{before}}-\tfrac{\delta^2}{16}.
\]

In either subcase of this remaining case,
\[
    \Phi_{\mathrm{after}}
    \leq \Phi_{\mathrm{before}}-\left(\frac{\delta}{4}\right)^2
    \leq \Phi_{\mathrm{before}}\!\left(1-\frac{1}{16h}\right)
    \leq \Phi_{\mathrm{before}}\!\left(1-\frac{1}{4h^3}\right),
\]
because $h\geq 2$.
\end{proof}

\textbf{Step 3: counting phases.}
The lower bound from Step~1 and the multiplicative decrease from Step~2 bound the number of Type~5 and Type~$6b$ phases between two reset phases. Since the reset phases themselves are bounded by \cref{obs:finite-phases}, we obtain the total number of phases.

\begin{proposition}
\label{prop:phase-number-bound}
The number of phases in the algorithm is bounded by $O((m+K)(m+n)^5(L + \log(m+n)))$.
\end{proposition}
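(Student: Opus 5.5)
We combine the reset bound of \cref{obs:finite-phases} with the multiplicative decrease of \cref{lemma:phase-5-reduction,lemma:phase-6b-reduction} and the checkpoint lower bound of \cref{lemma:Phi-lower-bound}. Call Type~1 and Type~$6a$ phases \emph{reset phases}. By \cref{obs:finite-phases} there are at most $K+m$ of them, plus one terminal phase. The maximal runs of consecutive nonterminal phases between resets therefore number at most $K+m+1$. It then suffices to bound the number of Type~5 and Type~$6b$ phases within a single run by $O((m+n)^5(L+\log(m+n)))$.

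Fix such a run, and let $\Phi_0$ be the potential at the balanced-flow checkpoint that opens the run. We have $\Phi_0\le M^2$. Since $M\le m\,2^{L}$ and each $m_i^{\mathrm{init}}<2^L$, we get $\log(M^2)=O(L+\log m)$. Each Type~5 phase multiplies $\Phi$ by at most $(1-1/h^3)$, and each Type~$6b$ phase by at most $(1-1/(4h^3))$, measured between consecutive outer-loop checkpoints. I would check that these checkpoints chain: the ``after'' value of one phase is the ``before'' value of the next, so the factors compose. Reset phases may increase $\Phi$, for instance because Event~1 raises capacities and reshuffles surpluses, and that is why the argument restarts after each reset. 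I only need the crude bound $\Phi\le M^2$ at the start of each run, which holds everywhere by the budget-monotonicity bound above.

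After $r$ consecutive decrease phases, $\Phi\le M^2(1-\frac{1}{4h^3})^r\le M^2e^{-r/(4h^3)}$. As long as the run continues, each checkpoint is nonterminal. The surplus is positive at phase start, so $\Phi>0$, and \cref{lemma:Phi-lower-bound} then gives $\Phi\ge 2^{-O(h^2(L+\log h))}$. Combining the two bounds gives
\[
    r\le 4h^3\Bigl(\ln M^2 + O\bigl(h^2(L+\log h)\bigr)\Bigr)=O\bigl(h^5(L+\log h)\bigr).
\]
Multiplying by the $O(m+K)$ runs yields the stated $O((m+K)(m+n)^5(L+\log(m+n)))$.

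The main obstacle is making sure the potential is evaluated only at balanced-flow checkpoints and that consecutive Type~5/$6b$ phases compose cleanly. In particular, a Type~$6b$ phase changes budgets, so I must confirm that $\Phi_{\mathrm{after}}$ in \cref{lemma:phase-6b-reduction} is computed with the new budgets and is the same quantity as $\Phi_{\mathrm{before}}$ of the next phase. I also need to confirm that no intervening operation (Event~3 or~4 removals, or the rebuild) increases $\Phi$ between them. The second point follows from \cref{lemma:one-iteration-surplus-drop}.
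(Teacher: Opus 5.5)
Your proposal is correct and follows essentially the same argument as the paper: bound the reset phases (Types~1 and~$6a$) by $m+K$ using \cref{obs:finite-phases}, and within each of the at most $m+K+1$ blocks of consecutive Type~5/$6b$ phases combine three ingredients. These are $\Phi\le M^2$ at the start of the block, the per-phase factor $(1-\frac{1}{4h^3})$ from \cref{lemma:phase-5-reduction,lemma:phase-6b-reduction}, and the nonzero lower bound of \cref{lemma:Phi-lower-bound}, which together give $O((m+n)^5(L+\log(m+n)))$ phases per block.
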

\begin{proof}
Suppose there are $R$ consecutive phases of Type $5$ or $6b$, and let $\Phi_\text{before}$ and $\Phi_\text{after}$ denote the value of the potential function at the balanced-flow checkpoints immediately before and immediately after this block of phases. At the beginning of such a block, either the algorithm has just started or the preceding phase ended with Event~$1$ or Event~$6a$; in all cases the general bound above gives $\Phi_\text{before}\leq M^2$. By \cref{lemma:phase-5-reduction,lemma:phase-6b-reduction}, each phase reduces the potential to at most a $(1- \frac{1}{4h^3})$ fraction of its previous value. Therefore,
\[
    \Phi_\text{after} \leq \left(1 - \frac{1}{4h^3}\right)^R \Phi_\text{before} \leq \left(1 - \frac{1}{4h^3}\right)^R M^2.
\]
If $\Phi_\text{after} = 0$, the outer loop terminates. Otherwise, \cref{lemma:Phi-lower-bound} gives
\[
2^{-O((m+n)^2(L + \log(m+n)))} \leq \left(1 - \frac{1}{4h^3}\right)^R M^2.
\]
Since $\log M = O(L + \log(m+n))$, taking logarithms and using $\ln(1-x) \leq -x$ gives
\[
    R = O(h^3(m+n)^2(L + \log(m+n))) = O((m+n)^5(L + \log(m+n))).
\]
By \cref{obs:finite-phases}, there are at most $m + K$ reset phases, i.e., phases of Type~$1$ or $6a$. Removing these reset phases partitions the remaining nonterminal phases of Type~$5$ and $6b$ into at most $m+K+1$ consecutive blocks, and the bound above applies to each block. There may also be one terminal phase, which does not affect the asymptotic bound. Hence the total number of phases is
\[
    O\!\left( (m+K)(m+n)^5(L + \log(m+n))\right).
\]
\end{proof}

\subsection{Work Per Phase}
\textbf{Step 4: work inside a phase.}
It remains to bound the number of inner iterations per phase and the number of max flow computations required per inner iteration.

\begin{proposition}
\label{prop:phase-time-bound}
Each phase consists of at most $O(mn)$ inner iterations, and each inner iteration performs at most $O(n + \log(m+n))$ max flow computations.
\end{proposition}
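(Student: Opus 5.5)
To bound the inner iterations, I would classify each inner iteration of a phase by the event it executes. Events~1, 5 and~6 end the phase, so each phase contains at most one of them. This leaves Events~2, 3 and~4, which I would count separately.

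\textbf{Event~2.} By \cref{lemma:event2-restoration}, every nonterminal Event~2 strictly enlarges $J \subseteq G$. Since $J$ never shrinks within a phase (restoration only takes unions $I\cup I'$ and recomputes $J=\Gamma(I)$ with $I$ growing), there are at most $n$ nonterminal Event~2 iterations and at most one terminal one.

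\textbf{Event~4.} Each Event~4 permanently removes a buyer, so a phase contains at most $m$ of them.

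\textbf{Event~3.} Each Event~3 adds an mbpb arc $(j,i)$ with $i\in Z$ and $j\in G\setminus J$, and removes $i$ from $Z$. I would argue that an arc added within a phase is never deleted again in that phase. Prices of goods outside $J$ do not change, and prices of goods in $J$ only rise. So once $j$ becomes mbpb for $i$, it stays mbpb for the rest of the phase, unless the buyer is later absorbed by an Event~2 restoration, which is already counted.

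The main subtlety is that $Z$ is recomputed after Event~2 restorations. A buyer may then re-enter $Z$ and trigger another Event~3. To handle this, I would charge each Event~3 to its added pair $(j,i)\in G\times B$. I would then show that the same pair cannot be added twice in one phase: the arc persists until $j$ enters $J$ and the arc is deleted, and after that $j$ stays in $J$, so the pair $(j,i)$ can never again satisfy $j\in G\setminus J$. This gives at most $mn$ Event~3 iterations.

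Summing the counts gives $O(mn + n + m) = O(mn)$ inner iterations per phase.

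For the max flow count per inner iteration, I would tally the work event by event:
\begin{itemize}
\item Computing $\theta_1,\dots,\theta_4$ and $\theta_6$ is linear-time arithmetic and needs no max flow.
\item Computing $\theta_5$ uses at most $n$ max flow computations via the DPSV subroutine, as stated in the algorithm description.
\item An Event~2 recomputes a balanced flow, which costs $O(\log(m+n))$ max flows by the method of \citet{darwish2016improved}. Finding the residual reachability set $I'$ is a graph search and needs no max flow.
\item Event~6 needs one max flow on $N_{I,J}^{(i,0)}$ to test whether the source-side cut is a min-cut. If not, the maximal source-side min-cut is read off from the same residual graph by taking the complement of the vertices that can reach $t$.
\end{itemize}
Summing gives $O(n+\log(m+n))$ max flows per inner iteration. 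The balanced flow at the start of the phase adds another $O(\log(m+n))$, which is absorbed into this bound.

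The step I expect to be the obstacle is the Event~3 charging argument, since the recomputation of $Z$ after Event~2 could in principle re-isolate buyers. The pair-charging argument above is designed to rule out double counting of added arcs. As a fallback, I would use the weaker observation that between consecutive Event~2 iterations each buyer triggers at most one Event~3, because it leaves $Z$ and can only re-enter $Z$ at a restoration. This bounds the Event~3 iterations by $(n+1)m$, which is still $O(mn)$.
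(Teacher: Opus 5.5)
Your proof is correct and follows the paper's structure. Events~1, 5 and~6 end the phase, Event~2 is bounded by the strict growth of $J$ (\cref{lemma:event2-restoration}), and your per-iteration tally ($\theta_5$ via at most $n$ max flows, $O(\log(m+n))$ per balanced flow, one max flow for Event~6) is the paper's. Your fallback count, in which each buyer leaves $Z$ at most once between consecutive Event~2 restorations, is exactly the paper's argument, and your primary pair-charging argument is a valid alternative giving the same $O(mn)$ bound: an Event~3 arc $(j,i)$ stays in the network until $j$ joins $J$, and $J$ never shrinks within a phase.
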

\begin{proof}
At the start of each phase, the computation time is dominated by computing a balanced flow of the expenditure network. At the start of each inner iteration, computing the scaling factor $\theta$ is dominated by computing the factor $\theta_5$ (for Event~5). As discussed in the algorithm overview, this requires at most $n$ max flow computations. In addition, an inner iteration may trigger Event~2, which requires recomputing a balanced flow, or Event~6, which requires one max flow/min-cut computation in the restricted network $N_{I,J}^{(i,0)}$ to classify the step as removal or refund. Using the parametric method of \citet{darwish2016improved}, each balanced flow can be computed with $O(\log(m+n))$ max flow computations. Hence each inner iteration uses at most $O(n + \log(m+n))$ max flow computations.

Each phase consists of $O(mn)$ inner iterations. This follows because the only events after which the algorithm remains in the inner loop are Events 2, 3, and 4. By \cref{lemma:event2-restoration}, Event~2 strictly enlarges $J$, so it can occur at most $n$ times in a phase. Between two Event~2 updates, the set $Z$ only shrinks: Event~3 removes one buyer from $Z$, and Event~4 removes one buyer from the market entirely. Since $Z$ is recomputed after each Event~2 update, there can be $O(m)$ Events~3 and~4 after each of the $O(n)$ Event~2 updates. Therefore, the number of max flow computations per phase is $O(mn(n + \log(m+n)))$. After the outer loop terminates, the final lower-bounded-flow instance can be reduced to $O(1)$ ordinary max flow computations by the standard circulation reduction recalled in Appendix~\ref{app:network-flow-background}; see also \citet{AhujaNetworkFlows}. Thus it does not affect the asymptotic bound. This completes the proof. 
\end{proof}

\noindent
We are now ready to prove the polynomial running time bound.

\begin{proof}[Proof of \cref{thm:running-time}]
By \cref{prop:phase-number-bound}, the algorithm executes at most
\[
O((m+K)(m+n)^5(L+\log(m+n)))
\]
outer-loop iterations. By \cref{prop:phase-time-bound}, each such phase contains $O(mn)$ inner iterations, and each inner iteration uses at most $O(n+\log(m+n))$ max-flow computations. The final lower-bounded-flow computation uses only $O(1)$ additional max-flow computations by the standard reduction to circulation \citep{AhujaNetworkFlows}. Multiplying these bounds gives
\[
    O( (m+K)mn(n + \log(m+n)) (m+n)^5(L + \log(m+n)) )
\]
max-flow computations in total. The outer loop therefore terminates within the claimed number of iterations, and \cref{prop:correctness} shows that the outcome returned after final processing is a competitive equilibrium.
\end{proof}

\section{Discussion}
Our algorithm shows that Arctic auctions can accommodate sophisticated separable seller costs without losing polynomial-time computability. This removes a significant computational barrier for applications such as central-bank liquidity provision and sovereign debt restructuring, where institutions must compare many feasible supply schedules before choosing an outcome. More broadly, it shows that seller-side objectives can be incorporated into a Fisher-style auction framework without abandoning the exact computation of equilibrium.

A natural next question is whether the running time can be made strongly polynomial, as was recently achieved for the Arctic auction without costs \citep{garg2026stronglypolynomialalgorithmarctic}.

The present model keeps seller costs separable across goods. Natural extensions include non-separable constraints that capture cross-instrument dependencies, such as aggregate maturity targets, currency exposure, or portfolio risk limits. Whether rationality of equilibrium prices and polynomial-time computability survive in such settings remains open, and may require new ideas beyond the hybrid min-cut invariant used here.

\bibliographystyle{abbrvnat}
\bibliography{refs}

\appendix

\section{Network Flow Background}
\label{app:network-flow-background}

We briefly recall the network-flow terminology used in the paper. For a comprehensive treatment, see \citet{AhujaNetworkFlows}.

A directed network is a directed graph $N=(V,A)$ with a nonnegative capacity $u(a)$ on every arc $a\in A$. We allow $u(a)=\infty$ as a notational convenience; in the expenditure networks of this paper, the infinite-capacity arcs are the goods-to-buyers arcs, and every finite cut avoids cutting such arcs. For a set of vertices $X\subseteq V$, let
\[
    \delta^+(X)=\{(v,w)\in A\mid v\in X,\ w\notin X\},
    \qquad
    \delta^-(X)=\{(v,w)\in A\mid v\notin X,\ w\in X\}.
\]
For any function $q$ on arcs, we write $q(\delta^+(X))=\sum_{a\in\delta^+(X)}q(a)$ and similarly for $\delta^-(X)$.

\paragraph{Flows and residual graphs.}
An $s$-$t$ flow is a function $f:A\to\R_{\geq 0}$ such that $0\leq f(a)\leq u(a)$ for every arc $a\in A$, and flow is conserved at every vertex $v\in V\setminus\{s,t\}$:
\[
    f(\delta^-(\{v\}))=f(\delta^+(\{v\})).
\]
The value of the flow is the net amount sent from $s$ to $t$. In the networks used in this paper there are no arcs entering $s$ or leaving $t$, so this value is simply $f(\delta^+(\{s\}))=f(\delta^-(\{t\}))$. A max flow is a feasible $s$-$t$ flow of maximum value.

Given a feasible flow $f$, the full residual graph is the directed graph on the same vertex set $V$ with a forward residual arc $(v,w)$ whenever $(v,w)\in A$ and $f(v,w)<u(v,w)$, and a reverse residual arc $(w,v)$ whenever $(v,w)\in A$ and $f(v,w)>0$. The residual capacity of a forward arc is $u(v,w)-f(v,w)$, while the residual capacity of a reverse arc is $f(v,w)$. A directed path from $s$ to $t$ in the full residual graph is an augmenting path. A feasible flow is maximum if and only if its full residual graph contains no augmenting path. In the algorithmic analysis, $R_N(f)$ denotes the restricted residual graph obtained from this full residual graph by deleting $s$, $t$, and all incident arcs.

\paragraph{Cuts and min-cuts.}
An $s$-$t$ cut is a partition $(X,V\setminus X)$ with $s\in X$ and $t\notin X$. Its capacity is $u(\delta^+(X))$. The max-flow min-cut theorem states that the maximum value of an $s$-$t$ flow is equal to the minimum capacity of an $s$-$t$ cut. In the expenditure network $N(\pb,\mb,\tb)$, if a set of goods $S$ is placed on the source side of a finite cut, then all adjacent buyers $\Gamma(S)$ must also be on the source side; otherwise the cut would include an infinite-capacity goods-to-buyers arc. For fixed $S$, the minimum such finite cut places exactly the buyers $\Gamma(S)$ on the source side. This is the observation behind \cref{lemma:source-cut-characterisation}.

When we refer to the source-side cut of an expenditure network, we mean the cut whose source side is just $\{s\}$. More generally, a source-side min-cut with good side $S$ and buyer side $T$ has source side $\{s\}\cup S\cup T$. The source-side cut $(\{s\},G\cup B\cup\{t\})$ is a min-cut exactly when moving any set of goods, together with its neighboring buyers, to the source side cannot decrease the cut capacity.

\paragraph{Lower-bounded flows and circulations.}
A lower-bounded network has lower and upper bounds $\ell(a)\leq u(a)$ on every arc. A feasible lower-bounded $s$-$t$ flow satisfies $\ell(a)\leq f(a)\leq u(a)$ on every arc and conserves flow at all vertices other than $s$ and $t$. A circulation is the analogous object with no distinguished source or sink: flow is conserved at every vertex.

The standard reduction from lower-bounded $s$-$t$ flow to circulation adds an auxiliary arc $(t,s)$, usually with infinite upper capacity and zero lower capacity. Any feasible $s$-$t$ flow becomes a feasible circulation after sending the same flow value on this auxiliary arc, and any feasible circulation restricts to a feasible $s$-$t$ flow after the auxiliary arc is removed.

We use Hoffman's circulation criterion in \cref{lemma:final-lb-feasible}. It states that a lower-bounded directed network admits a feasible circulation if and only if, for every vertex set $X\subseteq V$,
\[
    \ell(\delta^-(X))\leq u(\delta^+(X)).
\]
The necessity is immediate: in any circulation, the total flow entering $X$ equals the total flow leaving $X$, so the lower bounds on arcs entering $X$ cannot exceed the upper bounds on arcs leaving $X$. The sufficiency is a standard theorem; see \citet{AhujaNetworkFlows} for details.

\end{document}